\documentclass[a4paper,USenglish,cleveref, autoref, thm-restate]{lipics-v2021}
\usepackage[T1]{fontenc}
\usepackage[utf8]{inputenc}
\usepackage{numprint}

\usepackage[T1]{fontenc}
\usepackage[utf8]{inputenc}
\usepackage{numprint}
\usepackage{graphicx}
\usepackage{caption}
\usepackage{amsmath}
\usepackage{stackrel}
\usepackage{amsthm}
\usepackage{mathrsfs}
\usepackage{amssymb}
\usepackage{amsfonts}
\usepackage{wasysym}
\usepackage{stmaryrd}
\usepackage{tikz}
\usepackage[ruled, vlined, linesnumbered]{algorithm2e}
\usepackage{stmaryrd}
\usepackage{enumitem}
\usetikzlibrary{automata}
\usetikzlibrary{shapes}
\usepackage{hyperref}
\usepackage{comment}

\newcommand{\<}{\langle}
\renewcommand{\>}{\rangle}
\newcommand{\lv}{\lVert}
\newcommand{\rv}{\rVert}

\newcommand{\bsigma}{\bar{\sigma}}
\newcommand{\btau}{\bar{\tau}}

\renewcommand{\|}{\upharpoonright}

\newcommand{\N}{\mathbb{N}}
\newcommand{\Z}{\mathbb{Z}}
\newcommand{\R}{\mathbb{R}}
\newcommand{\Q}{\mathbb{Q}}
\newcommand{\bR}{\overline{\mathbb{R}}}
\renewcommand{\S}{\mathbb{S}}

\newcommand{\dseal}{\!\,^\llcorner}

\newcommand{\playcircle}{{\scriptsize{\Circle}}}

\renewcommand{\l}{\ell}
\renewcommand{\epsilon}{\varepsilon}
\renewcommand{\phi}{\varphi}

\newcommand{\card}{\mathsf{card}}
\newcommand{\SC}{\mathsf{SC}}
\newcommand{\Plays}{\mathsf{Plays}}
\newcommand{\Hist}{\mathsf{Hist}}

\newcommand{\Conv}{\mathsf{Conv}}

\newcommand{\nego}{\mathsf{nego}}
\newcommand{\lCons}{\lambda\mathsf{Cons}}
\newcommand{\lRat}{\lambda\mathsf{Rat}}

\newcommand{\ML}{\mathsf{ML}}

\newcommand{\Abs}{\mathsf{Abs}}
\newcommand{\Conc}{\mathsf{Conc}}

\newcommand{\val}{\mathsf{val}}
\newcommand{\Sol}{\mathsf{Sol}}
\newcommand{\Red}{\mathsf{Red}}
\newcommand{\last}{\mathsf{last}}
\newcommand{\first}{\mathsf{first}}
\newcommand{\Occ}{\mathsf{Occ}}
\newcommand{\Inf}{\mathsf{Inf}}
\newcommand{\MP}{\mathsf{MP}}

\renewcommand{\P}{\mathbb{P}}
\newcommand{\C}{\mathbb{C}}
\renewcommand{\S}{\mathbb{S}}

\newcommand{\NP}{\mathbf{NP}}
\newcommand{\ExpTime}{\mathbf{ExpTime}}
\newcommand{\PTime}{\mathbf{P}}

\newcommand{\bx}{\bar{x}}
\newcommand{\by}{\bar{y}}
\newcommand{\bz}{\bar{z}}

\newcommand{\bbx}{\bar{\bar{x}}}

\newcommand{\bbalpha}{\bar{\bar{\alpha}}}
\newcommand{\ba}{\bar{a}}

\newcommand{\bzero}{\bar{0}}

\newcommand{\dH}{\dot{H}}
\newcommand{\dC}{\dot{C}}
\newcommand{\dc}{\dot{c}}
\newcommand{\dpi}{\dot{\pi}}
\newcommand{\dchi}{\dot{\chi}}
\newcommand{\dxi}{\dot{\xi}}

\theoremstyle{plain}
\newtheorem{thm}{Theorem}

\newtheorem{lm}{Lemma}
\newtheorem{cor}{Corollary}

\newenvironment{customthm}[1]
  {\innercustomthm}
  {\endinnercustomthm}

\newenvironment{customlem}[1]
  {\innercustomlem}
  {\endinnercustomlem}

\theoremstyle{definition}
\newtheorem{defi}{Definition}

\theoremstyle{remark}
\newtheorem*{rk}{Remark}

\newtheorem{exa}{Example}

\bibliographystyle{plainurl}

\title{The Complexity of SPEs in Mean-payoff Games}

\author{Léonard Brice}
{
Université Libre de Bruxelles, Belgium
}{
leonard.brice@ulb.be
}{}{}

\author{Jean-François Raskin}{Université Libre de Bruxelles, Belgium}{jraskin@ulb.ac.be}{}{}

\author{Marie van den Bogaard}{Univ Gustave Eiffel, CNRS, LIGM, F-77454 Marne-la-Vallée, France}{marie.van-den-bogaard@univ-eiffel.fr}{}{}

\authorrunning{
L. Brice, J.-F. Raskin, and M. van den Bogaard
}

\Copyright{Léonard Brice, Jean-François Raskin, Marie van den Bogaard}

\ccsdesc{Software and its engineering: Formal methods; Theory of computation: Logic and verification; Theory of computation: Solution concepts in game theory.}

\keywords{Games on graphs, subgame-perfect equilibria, mean-payoff objectives.}

\funding{This work is partially supported by the ARC project Non-Zero Sum Game Graphs: Applications to Reactive Synthesis and Beyond (F\'{e}d\'{e}ration Wallonie-Bruxelles), the EOS project Verifying Learning Artificial Intelligence Systems (F.R.S.-FNRS \& FWO), the COST Action 16228 GAMENET (European Cooperation in Science and Technology), and by the PDR project \emph{Subgame perfection in graph games} (F.R.S.- FNRS).}

\nolinenumbers

\begin{document}

\maketitle

\begin{abstract}
\noindent
We establish that the subgame perfect equilibrium (SPE) threshold problem for mean-payoff games is $\NP$-complete. While the SPE threshold problem was recently shown to be decidable (in doubly exponential time) and $\NP$-hard, its exact worst case complexity was left open. 
\end{abstract}

\section{Introduction}
Nash equilibria (NEs), a fundamental solution concept in game theory,  are defined as strategy profiles such that no player can improve their payoff by changing unilaterally their strategy. So NEs can be interpreted as self-inforcing contracts from which there is no incentive to deviate unilaterally. Unfortunately, NEs are known to suffer, in sequential games like infinite duration games played on graphs, from the issue of \emph{non-credible threats}: to enforce a NE, some players may threaten other players to play irrationally in order to punish deviations. This is allowed by the definition of NEs, as in case of deviation from one player, the other players are not bound to rational behaviors anymore and they can therefore play irrationally w.r.t. their own objectives in order to sanction the deviating player.
This drawback of NEs has triggered the introduction of the notion of \emph{subgame-perfect equilibria} (SPEs)~\cite{Osborne}, a more complex but more natural solution concept for sequential games.
A strategy profile is an SPE if after every history, i.e. in every \emph{subgame}, the strategies of the players still form an NE.
Thus, SPEs impose rationality even after a deviation and only rational behaviors can be used to coerce the behavior of other players.

In this paper, we study the complexity of SPE problems in infinite-duration sequential games played on graphs with mean-payoff objectives.
While NEs always exist in those games, as proved in~\cite{DBLP:conf/lfcs/BrihayePS13}, SPEs do not always exist as shown in~\cite{solan2003deterministic,DBLP:conf/csl/BrihayeBMR15}.
The \emph{SPE threshold problem}, i.e. the problem of deciding whether a given mean-payoff game admits an SPE satisfying some constraints on the payoffs it grants to the players, has recently been proved to be decidable in \cite{Concur}.
However, its worst-case computational complexity is open: \cite{Concur}~provides only an $\NP$ lower bound and a $2\ExpTime$ upper bound.
In this paper, we close this complexity gap and prove that the problem is actually $\NP$-complete.

\subparagraph{{\bf Contributions.}}
The starting point of our algorithm is the characterization of SPEs recently presented in \cite{Concur}, based on the notions of \emph{requirement} and \emph{negotiation function}.
A requirement on a game $G$ is a function $\lambda: V \to \R \cup \{\pm \infty\}$, where $V$ is the state space of $G$.
For a given state $v$, the value $\lambda(v)$ should be understood as the minimal payoff that the player controlling the state $v$ will require in a play traversing $v$ in order to avoid deviating.
A requirement captures, therefore, some level of \emph{rationality} of the players.
The negotiation function transforms each requirement $\lambda$ into a (possibly stronger) requirement $\nego(\lambda)$, such that $\nego(\lambda)(v)$ is the best payoff that the player controlling $v$ can ensure, while playing against a coalition of the other players that play rationally with regards to the requirement $\lambda$.
A play is the outcome of an SPE if and only if it satisfies the requirement $\lambda^*$, the least fixed point of the negotiation function --- or equivalently, one of its fixed points.
We recall that result in Lemma~\ref{lm_spe}.
In order to obtain our nondeterministic polynomial time algorithm, the rest of the paper constructs a notion of witness recognizing the positive instances of the SPE threshold problem.
Such witnesses admit three pieces.
First, we show that the size of $\lambda^*$ can be bounded by a polynomial function of the size of the game (Theorem~\ref{thm_size_lambda}).
This result is obtained by showing that the set of fixed points of the negotiation function can be characterized by a finite union of polyhedra that in turn can be represented by linear inequations.
While the number of inequations that are needed for that characterization may be large (it cannot be bounded polynomially), we show that each of those inequations have coefficients and constants whose binary representations can be bounded polynomially.
As the least fixed point is the minimal value in this set, it is represented by a vertex of one of those polyhedra.
Then this guarantees, using results that bounds the solutions of linear equalities, that the least fixed point has a binary representation that is polynomial and so it can be guessed in polynomial time by a nondeterministic algorithm: it will be the first piece of our notion of witness, in the non-deterministice algorithm we design to solve the SPE threshold problem.
Second, we define a witness of polynomial size for the existence of a play, consistent with a given requirement, which generates a payoff vector between the desired thresholds (Theorem~\ref{thm_constrained_existence_lambda_cons}). This play is not guaranteed to be regular.
Third, we define a witness of polynomial size to prove that a requirement is indeed a fixed point of the negotiation function.
This notion of certificate relies on a new and more compact game characterization of the negotiation function called the {\em reduced negotiation game} (Definition~\ref{def_reduced_game}, Theorem~\ref{thm_reduced_game}).
These results are far from trivial as we also show that SPEs may rely on strategy profiles that are not regular and require infinite memory. As both the least fixed point and its two certificates can be guessed and verified in polynomial time, we obtain $\NP$ membership for the threshold problem, closing the complexity gap left open in~\cite{Concur} (Theorem~\ref{thm_np_complete}).

Additionally, all the previous results do also apply to $\epsilon$-SPEs, a quantitative relaxation of SPEs.
In particular, Theorem~\ref{thm_np_complete} does also apply to the $\epsilon$-SPE threshold problem.

\subparagraph{{\bf Related works.}}
Non-zero sum infinite duration games have attracted a large attention in recent years, with applications targeting reactive synthesis problems, see e.g.~\cite{DBLP:conf/lata/BrenguierCHPRRS16,DBLP:conf/dlt/Bruyere17,DBLP:journals/siglog/Bruyere21, DBLP:conf/tacas/FismanKL10, DBLP:journals/amai/KupfermanPV16} and their references.
We now detail other works more closely related to our contributions.

In~\cite{DBLP:conf/csl/BrihayeBMR15}, Brihaye et al. introduce and study the notion of weak SPE, which is a weakening of the classical notion of SPE. This weakening is equivalent to the original SPE concept on reward functions that are {\em continuous}.
This is the case for example for the quantitative reachability reward function, on which Brihaye et al. solve the SPE threshold problem in \cite{DBLP:conf/concur/BrihayeBGRB19}.
The mean-payoff cost function is not continuous and the techniques used in~\cite{DBLP:conf/csl/BrihayeBMR15}, and generalized in~\cite{DBLP:conf/fossacs/Bruyere0PR17}, cannot be used to characterize SPEs for the mean-payoff reward function.

In~\cite{thesis_noemie}, Meunier develops a method based on Prover-Challenger games to solve the problem of the existence of SPEs on games with a finite number of possible payoffs. In mean-payoff games, the number of possible payoffs is uncountably infinite.

In~\cite{DBLP:journals/mor/FleschP17}, Flesch and Predtetchinski present another characterization of SPEs on games with finitely many possible payoffs, based on a game structure with infinite state space.
In~\cite{Concur}, Brice et al. define the notions of requirements and negotiation function. They prove that the negotiation function is characterized by a zero-sum two-player game called {\em abstract negotiation game}, which is similar to the game introduced in the characterization of Flesch and Predtetchinski. As a starting point for algorithms, they also provide an effective representation of this game, called \emph{concrete negotiation game}, which turns out to be a zero-sum finite state multi-mean-payoff games~\cite{DBLP:journals/iandc/VelnerC0HRR15}.
Finally, they use those tools to prove that the SPE threshold problem is decidable for mean-payoff games.
They left open the question of its precise complexity: they provide a $\NP$ lower bound and a $2\ExpTime$ upper bound.
In~\cite{CSL}, the same authors use those tools to close the complexity gap for the SPE threshold problem in parity games, which had been proved to be $\ExpTime$-easy and $\NP$-hard by Ummels and Grädel in \cite{GU08}.
They prove that the problem is actually $\NP$-complete.
The techniques used in that paper heavily rely on the fact that parity objectives are $\omega$-regular, which is not the case of mean-payoff games in general.

In \cite{DBLP:conf/concur/ChatterjeeDEHR10}, Chatterjee et al. study mean-payoff \emph{automata}, and give a result that can be translated into an expression of all the possible payoff vectors in a mean-payoff game.
In \cite{DBLP:conf/cav/BrenguierR15}, Brenguier and Raskin give an algorithm to build the Pareto curve of a multi-dimensional two-player zero-sum mean-payoff game.
To do so, they study systems of equations and of inequations, and they prove that they always admit simple solutions (with polynomial size).
Those technical results will be used along this paper.

\subparagraph{{\bf Structure of the paper.}}
In Section~2, we introduce the necessary background.
Section~3 recalls the notions of requirement and negotiation function, and link them to NEs and SPEs.
Section~4 recalls results about the size of solutions of systems of equations or inequations, and use them to bound the size of the least fixed point of the negotiation function.
Section~5 defines a witness for the existence of a $\lambda$-consistent play between two given thresholds.
Section~6 introduces the reduced negotiation game that is a new compact characterization of the negotiation function.
Finally, Section~7 applies those results to prove the $\NP$-completeness of the SPE threshold problem on mean-payoff games.
Due to the page limit, some proofs are not given in the main body of the text: they appear in a well-identified appendix.

	\section{Background} \label{sec_background}

\subparagraph{{\bf Games, strategies, equilibria.}}
In all what follows, we study infinite duration turn-based quantitative games on finite graphs with complete information.

\begin{defi}[Game]\label{defi_game}
	A \emph{non-initialized game} is a tuple
	$G = \left(\Pi, V, (V_i)_{i \in \Pi}, E, \mu\right)$, where:
	\begin{itemize}
		\item $\Pi$ is a finite set of \emph{players};
		
		\item $(V, E)$ is a directed graph, called the \emph{underlying graph} of $G$, whose vertices are sometimes called \emph{states} and whose edges are sometimes called \emph{transitions}, and in which every state has at least one outgoing transition.
		For the simplicity of writing, a transition $(v, w) \in E$ will often be written $vw$;
		
		\item $(V_i)_{i \in \Pi}$ is a partition of $V$, in which $V_i$ is the set of states \emph{controlled} by player $i$;
		
		\item $\mu: V^\omega \to \R^\Pi$ is an \emph{payoff function}, that maps each infinite word $\rho$ to the tuple $\mu(\rho) = (\mu_i(\rho))_{i \in \Pi}$ of the players' \emph{payoffs}.
	\end{itemize}
	\noindent
	An \emph{initialized game} is a tuple $(G, v_0)$, often written $G_{\|v_0}$, where $G$ is a non-initialized game and $v_0 \in V$ is a state called \emph{initial state}.
	We often use the word \emph{game}, alone, for both initialized and non-initialized games.
\end{defi}

\begin{defi}[Play, history]
	A \emph{play} (resp. history) in the game $G$ is an infinite (resp. finite) path in the graph $(V, E)$.
	It is also a play (resp. history) in the initialized game $G_{\|v_0}$, when $v_0$ is its first vertex.
	The set of plays (resp. histories) in the game $G$ (resp. the initialized game $G_{\|v_0}$) is denoted by $\Plays G$ (resp. $\Plays G_{\|v_0}, \Hist G, \Hist G_{\|v_0}$).
	We write $\Hist_i G$ (resp. $\Hist_i G_{\|v_0}$) for the set of histories in $G$ (resp. $G_{\|v_0}$) of the form $hv$, where $v$ is a vertex controlled by player $i$.
	
	Given a play $\rho$ (resp. a history $h$), we write $\Occ(\rho)$ (resp. $\Occ(h)$) the set of vertices that appear in $\rho$ (resp. $h$), and $\Inf(\rho)$ the set of vertices that appear infinitely often in $\rho$.
For a given index $k$, we write $\rho_{\leq k}$ (resp. $h_{\leq k}$), or $\rho_{< k+1}$ (resp. $h_{<k+1}$), the finite prefix $\rho_0 \dots \rho_k$ (resp. $h_0 \dots h_k$), and $\rho_{\geq k}$ (resp. $h_{\geq k}$), or $\rho_{> k-1}$ (resp. $h_{>k-1}$), the infinite (resp. finite) suffix $\rho_k \rho_{k+1} \dots$ (resp. $h_k h_{k+1} \dots h_{|h|-1}$).
Finally, we write $\first(\rho)$ (resp. $\first(h)$) the first vertex of $\rho$ (and $\last(h)$ the last vertex of $h$).
\end{defi}

\begin{defi}[Strategy, strategy profile]
	A \emph{strategy} for player $i$ in the initialized game $G_{\|v_0}$ is a function $\sigma_i: \Hist_i G_{\|v_0} \to V$, such that $v\sigma_i(hv)$ is an edge of $(V, E)$ for every $hv$.
	A history $h$ is \emph{compatible} with a strategy $\sigma_i$ if and only if $h_{k+1} = \sigma_i(h_0 \dots h_k)$ for all $k$ such that $h_k \in V_i$. A play $\rho$ is compatible with $\sigma_i$ if all its prefixes are.

	A \emph{strategy profile} for $P \subseteq \Pi$ is a tuple $\bsigma_P = (\sigma_i)_{i \in P}$, where each $\sigma_i$ is a strategy for player $i$ in $G_{\|v_0}$.
    A play or a history is \emph{compatible} with $\bsigma_P$ if it is compatible with every $\sigma_i$ for $i \in P$.
    A \emph{complete} strategy profile, usually written $\bsigma$, is a strategy profile for $\Pi$.
    Exactly one play is compatible with a complete strategy profile: we write it $\< \bsigma \>$, and call it the \emph{outcome} of $\bsigma$.
    
    When $i$ is a player and when the context is clear, we will often write $-i$ for the set $\Pi \setminus \{i\}$.
    When $\btau_P$ and $\btau'_Q$ are two strategy profiles with $P \cap Q = \emptyset$, we write $(\btau_P, \btau'_Q)$ the strategy profile $\bsigma_{P \cup Q}$ such that $\sigma_i = \tau_i$ for $i \in P$, and $\sigma_i = \tau'_i$ for $i \in Q$.
\end{defi}

Before moving on to SPEs, let us recall that an NE is a strategy profile such that no player has an incentive to deviate unilaterally.

\begin{defi}[Nash equilibrium]
	Let $G_{\|v_0}$ be a game. The strategy profile $\bsigma$ is a \emph{Nash equilibrium} --- or \emph{NE} for short --- in $G_{\|v_0}$ if and only if for each player $i$ and for every strategy $\sigma'_i$, called \emph{deviation of $\sigma_i$}, we have the inequality $\mu_i\left(\< \sigma'_i, \bsigma_{-i} \>\right) \leq \mu_i\left(\< \bsigma \>\right)$.
\end{defi}

An SPE is a strategy profile whose all substrategy profiles are NEs.

\begin{defi}[Subgame, substrategy]
	Let $hv$ be a history in the game $G$. The \emph{subgame} of $G$ after $hv$ is the game $\left(\Pi, V, (V_i)_i, E, \mu_{\|hv}\right)_{\|v}$, where $\mu_{\|hv}$ maps each play to its payoff in $G$, assuming that the history $hv$ has already been played: formally, for every $\rho \in \Plays G_{\|hv}$, we have $\mu_{\|hv}(\rho) = \mu(h\rho)$.
    If $\sigma_i$ is a strategy in $G_{\|v_0}$, its \emph{substrategy} after $hv$ is the strategy $\sigma_{i\|hv}$ in $G_{\|hv}$, defined by $\sigma_{i\|hv}(h') = \sigma_i(hh')$ for every $h' \in \Hist_i G_{\|hv}$.
\end{defi}

\begin{rk}
    The initialized game $G_{\|v_0}$ is also the subgame of $G$ after the one-state history $v_0$.
\end{rk}

\begin{defi}[Subgame-perfect equilibrium]
	Let $G_{\|v_0}$ be a game.
	The strategy profile $\bsigma$ is a \emph{subgame-perfect equilibrium} --- or \emph{SPE} for short --- in $G_{\|v_0}$ if and only if for every history $h$ in $G_{\|v_0}$, the strategy profile $\bsigma_{\|h}$ is a Nash equilibrium in the subgame $G_{\|h}$.
\end{defi}

The notion of subgame-perfect equilibrium refines the notion of Nash equilibrium and excludes coercion by non-credible threats.

\begin{exa} Consider the game pictured in Figure~\ref{fig_ne_spe}. It is initialized with initial state $a$, and has two players, player $\Circle$ and player $\Box$, who own respectively the circle and the square vertices. The payoff function assigns to each player a payoff of $1$ for the play $abd^\omega$, and $0$ for all the other plays. Two different strategy profiles are represented here, one by the blue colored transitions, which has outcome $abd^\omega$, one by the red colored ones, which has outcome $acg^\omega$. Both are NEs: clearly, no player can increase their payoff by deviating from the blue choices, and in the case of the red profile, a deviation of player $\Box$ can only lead to the play $acf^\omega$, and a deviation of player $\Circle$ to $abe^\omega$ --- both plays give to both player the payoff $0$.
However, for player $\Box$, going from $b$ to $e$ is not a rational choice, hence the red profile is not an SPE, while the blue profile is one.

\begin{figure} 
	\begin{center}
	\begin{subfigure}[b]{0.45\textwidth}
    	\begin{tikzpicture}[->,>=latex,shorten >=1pt, initial text={}, scale=0.5, every node/.style={scale=0.5}]
    	\node[initial above, state] (a) at (0, 0) {$a$};
    	\node[state, rectangle] (b) at (-2, -2) {$b$};
    	\node[state, rectangle] (c) at (2, -2) {$c$};
    	\node[state] (d) at (-3, -4) {$d$};
    	\node[state] (e) at (-1, -4) {$e$};
    	\node[state] (f) at (1, -4) {$f$};
    	\node[state] (g) at (3, -4) {$g$};
    	\path[->, blue] (a) edge (b);
    	\path[->, red] (a) edge (c);
    	\path[->, red] (b) edge (e);
    	\path[->, blue] (b) edge (d);
    	\path[->, blue] (c) edge (f);
    	\path[->, red] (c) edge (g);
    	\path (d) edge [loop below] node[below] {$\stackrel{\playcircle}{1} \stackrel{\Box}{1}$} (d);
    	\path (e) edge [loop below] node[below] {$\stackrel{\playcircle}{0} \stackrel{\Box}{0}$} (e);
    	\path (f) edge [loop below] node[below] {$\stackrel{\playcircle}{0} \stackrel{\Box}{0}$} (f);
    	\path (g) edge [loop below] node[below] {$\stackrel{\playcircle}{0} \stackrel{\Box}{0}$}(g);
    	\end{tikzpicture}
    	\caption{Two Nash equilibria}
    	\label{fig_ne_spe}
    \end{subfigure}
    \begin{subfigure}[b]{0.45\textwidth}
        	\begin{tikzpicture}[scale=1]
		\draw [->] (0,0) -- (3,0);
		\draw (3,0) node[right] {$x$};
		\draw [->] (0,0) -- (0,3);
		\draw (0,3) node[above] {$y$};
		\fill [gray!40] (0.5, 0.5) -- (1.5, 0.5) -- (0.5, 1.5);
		\fill [blue] (0.5, 1.5) -- (1.5, 2.5) -- (2.5, 1.5) -- (1.5, 0.5);
		\end{tikzpicture}
		\caption{An example for the operator $\dseal$}
		 \label{fig_dseal}
    \end{subfigure}
	
	\end{center}
	\caption{Illustration for preliminary notions}
\end{figure}
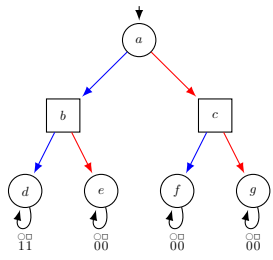
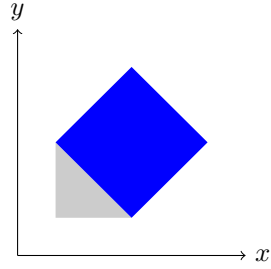
\end{exa}

An $\epsilon$-SPE is a strategy profile which is \emph{almost} an SPE: if a player deviates after some history, they will not be able to improve their payoff by more than a quantity $\epsilon \geq 0$.
Note that a $0$-SPE is an SPE, and conversely.

\begin{defi}[$\epsilon$-SPE]
	Let $G_{\|v_0}$ be a game, and $\epsilon \geq 0$. A strategy profile $\bsigma$ from $v_0$ is an $\epsilon$-SPE if and only if for every history $h$, for every player $i$ and every strategy $\sigma'_i$, we have $\mu_i(\< \bsigma_{-i\|h}, \sigma'_{i\|h} \>) \leq \mu_i(\< \bsigma_{\|h} \>) + \epsilon$.
\end{defi}

\subparagraph{\bf {Mean-payoff games.}}
We now turn to the definition of mean-payoff objectives.
	
	\begin{defi}[Mean-payoff, mean-payoff game]
	    In a graph $(V, E)$, we associate to each mapping $r: E \to \Q$ the \emph{mean-payoff function}:
	    $$\MP_r: h_0 \dots h_n \mapsto \frac{1}{n} \underset{k=0}{\overset{n-1}{\sum}} r_i\left(h_k h_{k+1}\right).$$
		A game $G = \left(\Pi, V, (V_i)_i, E, \mu \right)$ is a \emph{mean-payoff game} if its underlying graph is finite, and if there exists a tuple $(r_i)_{i \in \Pi}$ of reward functions, such that for each player $i$ and every play $\rho$:
		$$\mu_i(\rho) = \liminf_{n \to \infty} \MP_{r_i}(\rho_{\leq n}).$$
	\end{defi}
	
The mapping $r_i$ is called \emph{reward function} of player $i$: it represents the immediate reward that each action grants to player $i$.
The final payoff of player $i$ is their average payoff along the play, classically defined as the limit inferior\footnote{An alternative definition of mean-payoff games exists, with a limit superior instead of inferior. While in zero-sum one dimensional games, the two definitions lead to the same notion of optimality, this is not the case when considering multiple dimensions, see e.g.~\cite{DBLP:journals/iandc/VelnerC0HRR15}.
All the results presented in this paper apply only on mean-payoff games defined with a limit inferior.
}
over $n$ (since the limit may not be defined) of the average payoff after $n$ steps.
When the context is clear, we liberally write $\MP_i(h)$ for $\MP_{r_i}(h)$, and $\MP(h)$ for the tuple $(\MP_i(h))_i$, as well as $r(uv)$ for the tuple $(r_i(uv))_i$.

In the sequel, we develop a worst-case optimal algorithm to solve the \emph{$\epsilon$-SPE threshold problem}, which is a generalization of the \emph{SPE threshold problem}, defined as follows.

\begin{defi}[$\epsilon$-SPE threshold problem]
    Given a rational number $\epsilon \geq 0$, a mean-payoff game $G_{\|v_0}$ and two thresholds $\bx, \by \in \Q^\Pi$, does there exist an $\epsilon$-SPE $\bsigma$ in $G_{\|v_0}$ such that $\bx \leq \mu(\< \bsigma \>) \leq \by$?
\end{defi}

That problem is already known, by \cite{Concur}, to be $2\ExpTime$ and $\NP$-hard.
The proof given in that paper does also show that $\NP$-hardness still holds when $\epsilon$ is fixed to $0$.
Let us also add that the existence of an SPE in a given mean-payoff game, i.e. the same problem with no thresholds and with $\epsilon = 0$, is itself $\NP$-hard.

\begin{defi}[SPE existence problem]
    Given a mean-payoff game $G_{\|v_0}$, does there exist an SPE in $G_{\|v_0}$?
\end{defi}

\begin{lm}[App.~\ref{pf_np_hard}] \label{lm_np_hard}
    The SPE existence problem is $\NP$-hard.
\end{lm}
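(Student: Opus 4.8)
I would prove $\NP$-hardness by a polynomial-time reduction from SAT, along the lines of the $\NP$-hardness proof for the $\epsilon$-SPE threshold problem in~\cite{Concur} but arranged so that no threshold on the payoff is needed. From a propositional formula $\phi$ in conjunctive normal form over variables $x_1,\dots,x_n$ with clauses $C_1,\dots,C_m$, the plan is to build, in time polynomial in $n+m$, a mean-payoff game $G^\phi_{\|v_0}$ that admits an SPE if and only if $\phi$ is satisfiable. The player set is $\{A\}\cup\{P_1,\dots,P_n\}$, where $P_j$ is in charge of the truth value of $x_j$ and $A$ is a ``challenger''. The graph consists of a hub controlled by $A$ from which $A$ selects a clause $C_k$, then at a state $c_k$ selects one of the (at most three) literals of $C_k$; the selection of a literal over $x_j$ leads into a small gadget controlled by $P_j$ where $P_j$ may ``accept'' or ``reject'' it, after which control returns to the hub. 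A play is thus an infinite interleaving of literal selections and accept/reject answers. The reward functions are calibrated so that a player $P_j$ whose long-run answers do not encode a fixed, consistent value of $x_j$ strictly lowers their own mean-payoff (hence always has a profitable deviation), and so that $A$ profits, in the limit, from the frequency with which it reaches rejected literals, in such a way that, against consistent and self-interested $P_j$'s, $A$ gains by deviating exactly when the values encoded by the $P_j$'s falsify some clause; the purpose of the calibration is that the equivalence ``an SPE exists $\iff$ $\phi$ is satisfiable'' is enforced by subgame-perfection alone, with no external box $[\bx,\by]$.

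\textbf{Key steps.} (1) Write out the graph and the rational rewards explicitly, taking care that the assignment is \emph{not} recorded in states --- which would be exponential --- but is enforced through the averaging of the mean-payoff, as in standard encodings of satisfiability into mean-payoff objectives; verify polynomiality. (2) \emph{Completeness.} Given a satisfying assignment $\nu$, make each $P_j$ always answer according to $\nu$ and let $A$ play arbitrarily, and argue this behaviour extends to an SPE. I would establish this via the characterization of SPE outcomes as the plays consistent with the least fixed point $\lambda^{*}$ of the negotiation function (Lemma~\ref{lm_spe}): it suffices to check that along such a play no controller can secure, against the others playing rationally w.r.t.\ $\lambda^{*}$, strictly more than it already receives --- which holds since every clause is satisfied, so $A$ never obtains the rejected-literal bonus, and every $P_j$ is already consistent. (3) \emph{Soundness.} If $\phi$ is unsatisfiable, show that no complete strategy profile is an SPE: subgame-perfection forces each $P_j$ to be eventually consistent along every continuation (else it deviates), so the continuations encode an assignment, which, $\phi$ being unsatisfiable, falsifies some clause $C_k$; the reward calibration is then designed so that every configuration compatible with this falsification is unstable --- either $A$, or the $P_j$ owning a rejected literal of $C_k$, has a profitable deviation in a suitable subgame --- and iterating this ``a deviation is always available'' argument precludes an SPE. (4) Conclude: SAT is $\NP$-hard, the reduction is polynomial and uses no thresholds, hence the SPE existence problem is $\NP$-hard; the same game also re-derives $\NP$-hardness of the $\epsilon$-SPE threshold problem at $\epsilon = 0$.

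\textbf{Main obstacle.} The heart of the proof is the quantitative calibration of the reward functions in step~(1) together with the soundness argument in step~(3): the rewards must be chosen so that \emph{bare existence} of an SPE --- rather than existence of one with a prescribed payoff --- is equivalent to satisfiability. Concretely, every ``bad'' global behaviour (an eventually-inconsistent $P_j$, or an encoded assignment falsifying a clause) has to generate a \emph{credible}, subgame-visible deviation incentive, so that subgame-perfection plays the role that the threshold box plays in the threshold reduction of~\cite{Concur}; making these incentives coexist without accidentally producing a spurious SPE (for instance one in which $A$ is content to loop forever on a falsified clause while the $P_j$'s are content to stay consistent) is the delicate point, and it is where the construction really has to be engineered. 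A secondary subtlety --- consistent with the paper's observation that mean-payoff SPEs need not be finite-memory --- is that the SPE witnessing a satisfying assignment may have to punish $A$'s potential deviations with infinite-memory strategies, which is why step~(2) is handled through the fixed-point characterization of SPE outcomes rather than by exhibiting an explicit profile by hand.
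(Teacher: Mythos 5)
There is a genuine gap. Your proposal describes a reduction architecture (hub, clause states, literal gadgets, a challenger player) but never actually constructs it: the reward functions are left unspecified, and you yourself identify the ``quantitative calibration'' and the soundness argument as the main obstacle rather than resolving them. That calibration is not a technicality --- it is the entire content of the proof. In particular, nothing in your sketch supplies a mechanism for SPE \emph{non-existence} in the unsatisfiable case. SPE non-existence in mean-payoff games is a delicate, rather rare phenomenon that requires a specific cyclically unstable structure (each player prefers to wait for the other to exit, yet cannot commit to waiting forever); the generic claim that ``every configuration compatible with a falsified clause is unstable, and iterating this precludes an SPE'' is exactly what must be engineered and proved, and it is far from automatic --- for instance, the profile in which your challenger $A$ contentedly loops on a falsified clause while the variable players stay consistent is a prime candidate for a spurious SPE, as you note without excluding it.

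The paper takes a materially simpler route that sidesteps both difficulties: it reuses, as a black box, the game $G^\phi$ from \cite{Concur} for which threshold hardness is already known (an SPE outcome with Solver payoff $\geq 1$ exists iff $\phi$ is satisfiable, and every play gives Solver payoff $0$ or $1$), and attaches to it the standard three-state non-existence gadget ($a \leftrightarrow b$ with an escape $b \to c$), with complementary rewards so that the gadget players get payoff $0$ in the $G^\phi$ region exactly when Solver gets $1$. Satisfiability then makes $(ab)^\omega$ sustainable as an SPE outcome (the threat of entering $G^\phi$ is credible and cheap for $\Circle$), while unsatisfiability forces the gadget into its known unstable cycle of mutually defeating deviations, so no SPE exists anywhere. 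If you want to salvage your from-scratch construction, you would need to (i) write down explicit rewards, (ii) prove completeness via the negotiation fixed point as you suggest, and (iii) exhibit and verify a concrete instability pattern in the unsatisfiable case --- step (iii) being essentially a re-derivation of the known non-existence examples, which is precisely what the paper imports ready-made.
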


\subparagraph{{\bf Set of possible payoffs.}}
A first important result that we need is the characterization of the set of possible payoffs in a mean-payoff game, which has been introduced in \cite{DBLP:conf/concur/ChatterjeeDEHR10}.
Given a graph $(V, E)$, we write $\SC(V, E)$ the set of simple cycles it contains.
Given a finite set $D$ of dimensions and a set $X \subseteq \R^D$, we write $\Conv X$ the convex hull of $X$.
We will often use the subscript notation $\Conv_{x \in X} f(x)$ for the set $\Conv f(X)$.

\begin{defi}[Downward sealing] \label{def_dseal}
    Given a set $Y \subseteq \R^D$, the \emph{downward sealing} of $Y$ is the set $\dseal Y = \left\{\left. \left( \min_{\bz \in Z} z_d \right)_{d \in D} ~\right|~ Z \mathrm{~is~a~finite~subset~of~} Y \right\}.$
\end{defi}

\begin{exa}
    In $\R^2$, if $Y$ is the blue area in Figure~\ref{fig_dseal}, then $\dseal Y$ is the union of the blue area and the gray area.

\end{exa}

\begin{lm}[\cite{DBLP:conf/concur/ChatterjeeDEHR10}]\label{lm_dseal}
    Let $G$ be a mean-payoff game, whose underlying graph is strongly connected.
    The set of the payoffs $\mu(\rho)$, where $\rho$ is a play in $G$, is exactly the set:
    $$\dseal\left( \underset{c \in \SC(V, E)}{\Conv} \MP(c) \right).$$
\end{lm}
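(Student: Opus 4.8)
The plan is to prove the two inclusions separately. Write $C := \Conv_{c \in \SC(V,E)} \MP(c)$; since $\SC(V,E)$ is finite, $C$ is a compact convex polytope. For the inclusion ``every achievable payoff lies in $\dseal C$'', fix a play $\rho$ and recall the cycle decomposition of a prefix: reading $\rho_{\le n}$ from left to right and splitting off a simple cycle each time a vertex repeats writes $\sum_{k=0}^{n-1} r(\rho_k\rho_{k+1})$ as a sum of terms of the form $|c|\cdot\MP(c)$ over a finite multiset of simple cycles $c$, plus the reward of a leftover simple path of length $<|V|$. Dividing by $n$ exhibits $\MP(\rho_{\le n})$ as a convex combination of the vectors $\MP(c)$, $c\in\SC(V,E)$, up to an additive error of sup-norm $O(1/n)$; hence $\MP(\rho_{\le n})$ lies within $O(1/n)$ of $C$. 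Now for each player $i$ I would choose, by definition of $\liminf$, a subsequence of indices along which $\MP_i(\rho_{\le n})\to\mu_i(\rho)$, then refine it by Bolzano--Weierstrass so that the whole bounded sequence $\MP(\rho_{\le n})$ converges, to a point $z^{(i)}$; the $O(1/n)$ estimate and the closedness of $C$ give $z^{(i)}\in C$. By construction $z^{(i)}_i=\mu_i(\rho)$, while for every player $j$ a limit along a subsequence dominates the global $\liminf$, so $z^{(i)}_j\ge\mu_j(\rho)$. Taking the finite set $Z=\{z^{(i)}\mid i\in\Pi\}\subseteq C$ then yields $\min_{z\in Z} z_j=\mu_j(\rho)$ for every $j$, i.e. $\mu(\rho)\in\dseal C$.

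For the converse inclusion, let $Z=\{z^{(1)},\dots,z^{(m)}\}\subseteq C$ be finite and set $M_i:=\min_{j} z^{(j)}_i$. Using strong connectivity I would first build, for each $j$, an auxiliary play $\trho^{(j)}$ with an honest limit $\lim_n \MP(\trho^{(j)}_{\le n})=z^{(j)}$: express $z^{(j)}$ as a convex combination of finitely many $\MP(c)$ and iterate those cycles in blocks of growing length, connected by paths of length $<|V|$ whose relative weight vanishes. Then I would assemble $\rho$ as a concatenation of ``phases'' of increasing length, where phase $\ell$ imitates $\trho^{(j)}$ with $j\equiv\ell\pmod m$ (prefixed by a short connecting path), lasting long enough that the partial averages of $\trho^{(j)}$ are already within $\epsilon_\ell$ of $z^{(j)}$, with $\epsilon_\ell\to 0$, and of total length $L_\ell$ chosen so large that everything played before phase $\ell$ weighs at most $\epsilon_\ell$ in the running average. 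Two estimates then finish the argument: at the end of phase $\ell$ the prefix-average is within $O(\epsilon_\ell)$ of $z^{(j)}$, so $\liminf_n \MP_i(\rho_{\le n})\le\inf_j z^{(j)}_i=M_i$; and an induction on $\ell$ --- each phase's partial averages exceed $M_i-\epsilon$ after a bounded ``stabilization time'', and the earlier part of $\rho$ already has average $\ge M_i-\epsilon$ by the inductive hypothesis --- shows $\MP_i(\rho_{\le n})\ge M_i-\epsilon$ for all large $n$ and all $\epsilon>0$. Hence $\mu_i(\rho)=M_i$ for every $i$, so $\mu(\rho)$ is the prescribed point of $\dseal C$.

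The delicate part is this second inclusion: the liminf must be shown to be \emph{exactly} $M_i$ and, in particular, not to undershoot it during the transient beginning of a phase, which forces a recursive calibration of the phase lengths $L_\ell$ and precisions $\epsilon_\ell$ against the stabilization times of the plays $\trho^{(j)}$ (these are finitely many fixed plays, so their stabilization times to a given precision are uniformly bounded, which is what makes the calibration possible). The first inclusion is comparatively routine; the only subtle point there --- that the limit along a subsequence can only be $\ge$ the global $\liminf$ --- is precisely what produces the downward sealing $\dseal$ rather than just the convex hull.
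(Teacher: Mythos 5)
The paper does not actually prove this lemma: it is imported as a black box from \cite{DBLP:conf/concur/ChatterjeeDEHR10}, so there is no in-paper proof to compare against. Your argument is a correct, self-contained reconstruction along the standard lines of that reference: prefix cycle-decomposition plus Bolzano--Weierstrass gives the forward inclusion, and you correctly identify that the downward sealing (rather than just $\Conv$) arises because a limit along a subsequence can only dominate the global $\liminf$; the converse is handled by the phase construction with recursively calibrated phase lengths against the (uniformly bounded) stabilization times, which is exactly the delicate point and which you treat adequately. The only detail left implicit --- realizing a possibly irrational convex combination of the $\MP(c)$ as an honest limit by approximating the coefficients ever more finely in successive blocks, weighting each cycle by its length --- is routine.
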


\subparagraph{\bf {Two-player zero-sum games.}}
We now recall several definitions and two classical results about two-player zero-sum games.

\begin{defi}[Two-player zero-sum game]
	A \emph{two-player zero sum game} is a game $G$ with $\Pi = \{1, 2\}$ and $\mu_2 = -\mu_1$.
\end{defi}
	
\begin{defi}[Borel game]
	A game $G$ is \emph{Borel} if the function $\mu$, from the set $V^\omega$ equipped with the product topology to the Euclidian space $\R^\Pi$, is Borel, i.e. if, for every Borel set $B \subseteq \R^\Pi$, the set $\mu^{-1}(B)$ is Borel.
\end{defi}

\begin{rk}
    Mean-payoff games are Borel (see~\cite{DBLP:journals/tcs/Chatterjee07}).
\end{rk}
	
\begin{lm}[Determinacy of Borel games, \cite{BorelDeterminacy}] \label{lm_borel_determinacy}
	Let $G_{\|v_0}$ be a zero-sum Borel game, with $\Pi = \{1, 2\}$. Then, we have the following equality:
	$$\sup_{\sigma_1} ~ \inf_{\sigma_2} ~ \mu_1(\< \bsigma \>) = \inf_{\sigma_2} ~ \sup_{\sigma_1} ~ \mu_1(\< \bsigma \>).$$
\end{lm}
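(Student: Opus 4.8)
The plan is to reduce the claimed equality of values to Martin's \emph{qualitative} Borel determinacy theorem (the win/lose form cited as \cite{BorelDeterminacy}) by a thresholding argument. Write $v^- = \sup_{\sigma_1}\inf_{\sigma_2}\mu_1(\langle\bsigma\rangle)$ for the lower value and $v^+ = \inf_{\sigma_2}\sup_{\sigma_1}\mu_1(\langle\bsigma\rangle)$ for the upper value; the goal is $v^- = v^+$. First I would dispose of the easy inequality $v^- \leq v^+$, which holds for any real-valued function of two arguments: for fixed $\sigma_1,\sigma_2$ one has $\inf_{\sigma_2'}\mu_1(\langle\sigma_1,\sigma_2'\rangle) \leq \mu_1(\langle\sigma_1,\sigma_2\rangle) \leq \sup_{\sigma_1'}\mu_1(\langle\sigma_1',\sigma_2\rangle)$; taking $\sup$ over $\sigma_1$ on the left (the quantity stays below $\sup_{\sigma_1'}\mu_1(\langle\sigma_1',\sigma_2\rangle)$ for every $\sigma_2$) and then $\inf$ over $\sigma_2$ on the right yields $v^- \leq v^+$. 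Note also that $v^-$ and $v^+$ are finite, since the rewards take finitely many rational values and hence $\mu_1$ is bounded.

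It remains to show $v^- \geq v^+$, which I would prove by contradiction. Suppose $v^- < v^+$ and fix a real $t$ with $v^- < t < v^+$. Consider the zero-sum win/lose game played on the tree of histories of $G_{\|v_0}$ in which player~$1$ wins a play $\rho$ exactly when $\rho \in W_t := \{\rho \in \Plays G_{\|v_0} \mid \mu_1(\rho) \geq t\}$, and player~$2$ wins otherwise. Since $\mu$ is Borel and $\{\bx \in \R^\Pi \mid x_1 \geq t\}$ is closed, the set $W_t = \mu^{-1}(\{\bx \mid x_1 \geq t\})$ is a Borel subset of $V^\omega$; moreover the set of legal plays is closed in $V^\omega$, and Borel determinacy is stable under restriction to such a set of admissible moves (equivalently, one encodes the one-player-per-vertex game into a standard Gale--Stewart game). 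Hence \cite{BorelDeterminacy} applies: either player~$1$ has a strategy $\sigma_1^\ast$ ensuring $\langle\sigma_1^\ast,\sigma_2\rangle \in W_t$ for every $\sigma_2$, or player~$2$ has a strategy $\sigma_2^\ast$ ensuring $\langle\sigma_1,\sigma_2^\ast\rangle \notin W_t$ for every $\sigma_1$.

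In the first case, $\mu_1(\langle\sigma_1^\ast,\sigma_2\rangle) \geq t$ for all $\sigma_2$, so $\inf_{\sigma_2}\mu_1(\langle\sigma_1^\ast,\sigma_2\rangle) \geq t$ and therefore $v^- \geq t$, contradicting $t > v^-$. In the second case, $\mu_1(\langle\sigma_1,\sigma_2^\ast\rangle) < t$ for all $\sigma_1$, so $\sup_{\sigma_1}\mu_1(\langle\sigma_1,\sigma_2^\ast\rangle) \leq t$ and therefore $v^+ \leq t$, contradicting $t < v^+$. Both cases are impossible, so $v^- \geq v^+$, and combined with the easy inequality this gives $v^- = v^+$. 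The only point requiring genuine care — and the main (minor) obstacle — is the bookkeeping of strict versus non-strict inequalities arising because the suprema and infima need not be attained: taking the threshold \emph{strictly} between $v^-$ and $v^+$ and assigning the closed half-space $\{x_1 \geq t\}$ to player~$1$ keeps the two determinacy cases exactly complementary and makes both contradictions go through. (As a byproduct, letting $t$ approach $v^\pm$ shows that for every $\varepsilon > 0$ both players possess $\varepsilon$-optimal strategies, even though optimal ones need not exist.)
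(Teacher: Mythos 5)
Your proof is correct: the reduction of quantitative determinacy to Martin's win/lose Borel determinacy via the threshold sets $W_t=\{\rho \mid \mu_1(\rho)\geq t\}$ is exactly the standard argument, and your handling of the strict/non-strict inequalities in the two determinacy cases is right. The paper itself gives no proof of this lemma --- it is cited directly to Martin's theorem --- so there is nothing to compare beyond noting that your derivation is the intended one; the only (harmless) quibble is that your parenthetical appeal to bounded rewards to get finiteness of $v^\pm$ is not needed and not warranted by the hypotheses of the lemma as stated (a general Borel payoff need not be bounded), but the thresholding argument goes through verbatim for extended-real values.
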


That quantity is called \emph{value} of $G_{\|v_0}$, denoted by $\val_1(G_{\|v_0})$.

\begin{defi}[Optimal strategy]
    Let $G_{\|v_0}$ be a zero-sum Borel game, with $\Pi = \{1, 2\}$.
    The strategy $\sigma_1$ is \emph{optimal} in $G_{\|v_0}$ if $\inf_{\sigma_2} \mu_1(\< \sigma_1, \sigma_2 \>) = \val_1(G_{\|v_0})$.
\end{defi}

Let us now define memoryless strategies, and a condition under which they can be optimal.

\begin{defi}[Memoryless strategy]
	A strategy $\sigma_i$ in a game $G_{\|v_0}$ is \emph{memoryless} if for every vertex $v \in V_i$ and for all histories $h$ and $h'$, we have $\sigma_i(hv) = \sigma_i(h'v)$.
\end{defi}

We usually write $\sigma_i(\cdot v)$ for the state $\sigma_i(hv)$ for every $h$.
For every game $G_{\|v_0}$, we write  $\ML\left(G_{\|v_0}\right)$ for the set of memoryless strategies in $G_{\|v_0}$.

\begin{defi}[Shuffling]
    Let $\rho, \eta$ and $\theta$ be three plays in a game $G$.
    The play $\theta$ is a \emph{shuffling} of $\rho$ and $\eta$ if there exist two sequences of indices $k_0 < k_1 < \dots$ and $\l_0 < \l_1 < \dots$ such that $\eta_0 = \rho_{k_0} = \eta_{\l_0} = \rho_{k_1} = \eta_{\l_1} = \dots$, and:
    $$\theta = \rho_0 \dots \rho_{k_0-1} \eta_0 \dots \eta_{\l_0-1} \rho_{k_0} \dots \rho_{k_1-1} \eta_{\l_0} \dots \eta_{\l_1-1} \dots.$$
\end{defi}

\begin{defi}[Convexity, concavity]
    A function $f: \Plays G \to \R$ is \emph{convex} if every shuffling $\theta$ of two plays $\rho$ and $\eta$ satisfies $f(\theta) \geq \min\{f(\rho), f(\eta)\}$.
    It is \emph{concave} if $-f$ is convex.
\end{defi}

\begin{rk}
    Mean-payoff functions, defined with a limit inferior, are convex.
\end{rk}

\begin{lm} \label{lm_memoryless}
    In a two-player zero-sum game played on a finite graph, every player whose payoff function is concave has an optimal strategy that is memoryless.
\end{lm}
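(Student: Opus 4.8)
\emph{Proof idea.} Without loss of generality I would assume that the concave player is player~$1$, so that $\mu_2 = -\mu_1$ is convex, and aim to produce a memoryless optimal strategy for player~$1$. Since the graph is finite, $V_1$ and each vertex's set of outgoing edges are finite, so $\ML(G_{\|v_0})$ is a finite nonempty set; moreover the game is determined (Lemma~\ref{lm_borel_determinacy}; all games occurring here are Borel), so every strategy $\sigma_1$ satisfies $\inf_{\sigma_2}\mu_1(\<\sigma_1,\sigma_2\>) \le \val_1(G_{\|v_0})$. Hence it suffices to exhibit one \emph{memoryless} strategy $\sigma_1$ with $\inf_{\sigma_2}\mu_1(\<\sigma_1,\sigma_2\>) \ge \val_1(G_{\|v_0})$. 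The plan is to prove this by induction on the quantity $N(G)$ equal to the number of edges leaving $V_1$ minus $|V_1|$, that is, the total number of ``superfluous'' choices available to player~$1$. (Deleting edges from a game of this kind leaves a game of the same kind, so the statement stays within its own scope along the induction.) In the base case $N(G)=0$, player~$1$ has a single strategy; it is memoryless and it realises the supremum in $\val_1(G_{\|v_0}) = \sup_{\sigma_1}\inf_{\sigma_2}\mu_1(\<\sigma_1,\sigma_2\>)$, hence is optimal.

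For the inductive step, suppose $N(G) > 0$; I would pick a vertex $v \in V_1$ with at least two outgoing edges, fix one edge $vw \in E$, and let $G'$ be obtained from $G$ by deleting every outgoing edge of $v$ except $vw$, and $G''$ by deleting only $vw$. Both are finite zero-sum games with the same (still concave) payoff function, both keep every vertex with an outgoing edge, and $N(G'), N(G'') < N(G)$, so by induction player~$1$ has memoryless optimal strategies $\sigma'$ in $G'_{\|v_0}$ and $\sigma''$ in $G''_{\|v_0}$, of respective values $\val'$ and $\val''$. Extending $\sigma'$ and $\sigma''$ arbitrarily into memoryless strategies of $G$: since $\sigma'$ never uses an outgoing edge of $v$ other than $vw$, the plays of $G_{\|v_0}$ compatible with $\sigma'$ are exactly those of $G'_{\|v_0}$ compatible with $\sigma'$, so $\sigma'$ still guarantees $\val'$ in $G$, and likewise $\sigma''$ guarantees $\val''$; hence $\val_1(G_{\|v_0}) \ge \max\{\val',\val''\}$. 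Conversely, any memoryless strategy of $G$ maps $v$ either to $w$ (and is then a memoryless strategy of $G'$) or elsewhere (and is then a memoryless strategy of $G''$), so it guarantees at most $\max\{\val',\val''\}$. The whole lemma therefore reduces to the inequality $\val_1(G_{\|v_0}) \le \max\{\val',\val''\}$, for then the one of $\sigma',\sigma''$ achieving the larger value is memoryless and optimal in $G_{\|v_0}$.

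To prove $\val_1(G_{\|v_0}) \le \max\{\val',\val''\}$ I would fix $\epsilon>0$ and, using determinacy of $G'$ and $G''$, take strategies $\tau'$ in $G'_{\|v_0}$ and $\tau''$ in $G''_{\|v_0}$ of player~$2$ guaranteeing $\mu_1 \le \val'+\epsilon$ and $\mu_1 \le \val''+\epsilon$ respectively, and combine them into a strategy $\tau$ of player~$2$ in $G_{\|v_0}$ that switches, at each visit to $v$, between simulating $\tau'$ and simulating $\tau''$ according to whether player~$1$ leaves $v$ through $vw$ or not: between two consecutive visits to $v$ the play stays inside $G'$ (resp.\ $G''$), and there player~$2$ plays $\tau'$ (resp.\ $\tau''$) evaluated on the concatenation of the previously-seen segments of the matching type. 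The point is that for every $\sigma_1$ the occurrences of $v$ split the outcome $\rho = \<\sigma_1,\tau\>$ into a play $\rho'$ of $G'$ compatible with $\tau'$ and a play $\rho''$ of $G''$ compatible with $\tau''$ such that $\rho$ is a shuffling of $\rho'$ and $\rho''$ with common vertex $v$; by concavity of $\mu_1$ one then gets $\mu_1(\rho) \le \max\{\mu_1(\rho'),\mu_1(\rho'')\} \le \max\{\val',\val''\}+\epsilon$, and $\epsilon \to 0$ concludes. Making this last argument rigorous is the main obstacle: one must define the ``shadow histories'' fed to $\tau'$ and $\tau''$ carefully enough that $\tau$ is a genuine strategy and $\rho',\rho''$ are genuine plays compatible with $\tau',\tau''$, and one must handle the degenerate cases --- the initial segment of $\rho$ before its first visit to $v$, and the case where $v$ occurs only finitely often, in which $\rho$ is, up to a finite prefix, already a play of $G'$ or of $G''$ --- so that they match the exact shuffling definition, which retains a prefix of one of the two component plays. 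I expect the cleanest way around these difficulties is to carry the induction hypothesis in uniform form, namely a single memoryless strategy optimal in $G_{\|u}$ for every initial vertex $u$, so that near-optimal strategies can be restarted at $v$ whenever a switch occurs.
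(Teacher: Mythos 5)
Your overall route is genuinely different from the paper's: the paper disposes of this lemma in four lines by citing Kopczy\'nski's half-positional determinacy of concave \emph{qualitative} conditions \cite{DBLP:conf/icalp/Kopczynski06}, applying it to each threshold objective $\{\rho \mid \mu_1(\rho)\geq\alpha\}$ (which is a concave winning condition precisely because $\mu_1$ is concave in the paper's sense), deducing $\val_1(G_{\|v_0})=\sup_{\sigma_1\in\ML(G_{\|v_0})}\inf_{\sigma_2}\mu_1(\<\bsigma\>)$, and finally invoking the finiteness of $\ML(G_{\|v_0})$ to turn that supremum into a maximum. What you are attempting is essentially a from-scratch, quantitative re-proof of Kopczy\'nski's theorem by edge-induction. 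The skeleton (induction on superfluous player-1 edges, the reduction to $\val_1(G_{\|v_0})\leq\max\{\val',\val''\}$, concavity applied to a shuffle decomposition of the outcome) is the right skeleton for such a re-proof, but the step you yourself flag as ``the main obstacle'' is not a formalization chore: it is where the theorem actually lives, and as sketched it does not go through.

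Concretely: in your combined strategy $\tau$, exactly one of the two component plays carries the initial prefix and starts at $v_0$; the other necessarily starts at $v$ (this is forced by the paper's definition of shuffling). Hence the simulated strategy for that second component must be an $\epsilon$-optimal strategy of player~2 in, say, $G''_{\|v}$, and the bound you actually obtain is $\mu_1(\rho)\leq\max\{\val_1(G'_{\|v_0}),\val_1(G''_{\|v})\}+\epsilon$ --- not $\max\{\val',\val''\}+\epsilon$ with both values taken at $v_0$. Since $\val_1(G''_{\|v})$ can strictly exceed both $\val_1(G''_{\|v_0})$ and $\val_1(G_{\|v_0})$ (for instance when player~2 can simply avoid $v$), this bound is too weak to close the induction; carrying a ``uniform'' induction hypothesis equips player~1 with uniform memoryless strategies but does nothing to repair player~2's side of the inequality. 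The known proofs fix this by a finer analysis of the regions (or value levels) within which player~2 can confine the play before the combination at $v$ is performed --- that is the technical core of Kopczy\'nski's argument, and it is the idea missing here. A second, smaller gap: the paper's shuffling requires infinitely many alternations, so plays in which player~1 eventually stops using one of the two edge classes at $v$, or visits $v$ only finitely often, are not shufflings, and concavity cannot be invoked for them; your ``up to a finite prefix'' dismissal implicitly assumes prefix-independence of $\mu_1$, which the lemma does not grant.
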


\begin{proof}
    According to \cite{DBLP:conf/icalp/Kopczynski06}, this result is true for qualitative objectives, i.e. when $\mu$ can only take the values $0$ and $1$.
    It follows that for every $\alpha \in \R$, if a player $i$, whose payoff function is concave, has a strategy that ensures $\mu_i(\rho) \geq \alpha$ (understood as a qualitative objective), then they have a memoryless one.
    Hence the equality:
    $$\val_1(G_{\|v_0}) = \sup_{\sigma_1 \in \ML(G_{\|v_0})} ~ \inf_{\sigma_2} ~ \mu_1(\< \bsigma \>).$$
    Since the underlying graph $(V, E)$ is finite, memoryless strategies exist in finite number, hence the supremum above is realized by a memoryless strategy $\sigma_1$ that is, therefore, optimal.
\end{proof}

	\section{Requirements and negotiation} \label{sec_negotiation}


We now recall some notions and results from \cite{Concur}, which are the starting point of our algorithm.

\subparagraph{\bf {Requirements.}}
In the sequel, we write $\bR$ the set $\R \cup \{\pm \infty\}$.

\begin{defi}[Requirement]
	A \emph{requirement} on the game $G$ is a mapping $\lambda: V \to \bR$.
\end{defi}

For a given state $v$, the quantity $\lambda(v)$ represents the minimal payoff that the player controlling $v$ will require in a play traversing the state $v$.

\begin{defi}[$\lambda$-consistency]
	Let $\lambda$ be a requirement on a game $G$. A play $\rho$ in $G$ is \emph{$\lambda$-consistent} if and only if, for all $i \in \Pi$ and $n \in \N$ with $\rho_n \in V_i$, we have $\mu_i(\rho_{\geq n})~\geq~\lambda(\rho_n)$.
	The set of $\lambda$-consistent plays from a state $v$ is denoted by $\lCons(v)$.
\end{defi}

\begin{rk}
    The set $\lCons(v)$ can be empty, and is not regular in general.
\end{rk}

\begin{defi}[$\lambda$-rationality]
	Let $\lambda$ be a requirement on a mean-payoff game $G$. Let $i \in \Pi$. A strategy profile $\bsigma_{-i}$ is \emph{$\lambda$-rational} if and only if there exists a strategy $\sigma_i$ such that, for every history $hv$ compatible with $\bsigma_{-i}$, the play $\< \bsigma_{\|hv} \>$ is $\lambda$-consistent.
	We then say that the strategy profile $\bsigma_{-i}$ is $\lambda$-rational \emph{assuming} $\sigma_i$.
	The set of $\lambda$-rational strategy profiles in $G_{\|v}$ is denoted by $\lRat(v)$.	
\end{defi}

\subparagraph{\bf {Negotiation.}} \label{ss_def_nego}
In mean-payoff games, as well as in a wider class of games (see \cite{Concur} and \cite{DBLP:journals/mor/FleschP17}), SPEs are characterized by the fixed points of the \emph{negotiation function}, a function from the set of requirements into itself.
We always use the convention $\inf \emptyset = +\infty$.

\begin{defi}[Negotiation function]
	Let $G$ be a game.
	The \emph{negotiation function} is the function that transforms each requirement $\lambda$ on $G$ into a requirement $\nego(\lambda)$ on $G$ defined, for each $i \in \Pi$ and $v \in V_i$, by:
	$$\nego(\lambda)(v) = \inf_{\bsigma_{-i} \in \lRat(v)} \sup_{\sigma_i} \mu_i(\< \bsigma\>).$$
\end{defi}

The quantity $\nego(\lambda)(v)$ is the best payoff the player controlling the state $v$ can enforce if the other players play rationally with regards to the requirement $\lambda$.

\begin{rk}
    The negotiation function satisfies the following properties.
    
    \begin{itemize}
        \item It is monotone: if $\lambda \leq \lambda'$ (for the pointwise order), then $\nego(\lambda) \leq \nego(\lambda')$.
        
        \item It is also non-decreasing: for every $\lambda$, we have $\lambda \leq \nego(\lambda)$.
        
        \item There exists a $\lambda$-rational strategy profile from $v$ against the player controlling $v$ if and only if $\nego(\lambda)(v) \neq +\infty$.
    \end{itemize}
\end{rk}

 \subparagraph{\bf {Link with SPEs.}}
The SPE outcomes in a mean-payoff game are characterized by the fixed points of the negotiation function, or equivalently by its least fixed point.
That result can be extended to $\epsilon$-SPEs.
To that end, we recall the notion of $\epsilon$-fixed points of a function.

\begin{defi}[$\epsilon$-fixed point]
	Let $\epsilon \geq 0$, let $D$ be a finite set and let $f: \bR^D \to \bR^D$ be a mapping. A tuple $\bx \in \R^D$ is a \emph{$\epsilon$-fixed point} of $f$ if for each $d \in D$, for $\by = f(\bx)$, we have $y_d \in [x_d - \epsilon, x_d + \epsilon]$.
\end{defi}

\begin{rk}
    A $0$-fixed point is a fixed point, and conversely.
\end{rk}

\begin{lm}[\cite{Concur}] \label{lm_spe}
	Let $G_{\|v_0}$ be a mean-payoff game, and let $\epsilon \geq 0$.
	The negotiation function on $G$ has a least $\epsilon$-fixed point $\lambda^*$, and given a play $\rho$ in $G_{\|v_0}$, the three following assertions are equivalent: (1) the play $\rho$ is an $\epsilon$-SPE outcome; (2) the play $\rho$ is $\lambda$-consistent for some $\epsilon$-fixed point $\lambda$ of the negotiation function; (3) the play $\rho$ is $\lambda^*$-consistent.
\end{lm}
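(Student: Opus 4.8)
I would prove the three claims in turn: existence of a least $\epsilon$-fixed point $\lambda^*$; the soft equivalence $(2)\Leftrightarrow(3)$; then $(1)\Rightarrow(2)$; and finally $(3)\Rightarrow(1)$, which carries essentially all of the difficulty. Ordered pointwise, $\bR^V$ is a complete lattice, and since in $\nego(\lambda)(v)=\inf_{\bsigma_{-i}\in\lRat(v)}\sup_{\sigma_i}\mu_i(\<\bsigma\>)$ the inner supremum is always at least the least reward value, $\nego$ never returns $-\infty$, so $\lambda\mapsto\nego(\lambda)-\epsilon$ is a well-defined monotone self-map of $\bR^V$; the Knaster--Tarski theorem gives it a least fixed point, which I take as $\lambda^*$. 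Because $\nego$ is non-decreasing ($\lambda\le\nego(\lambda)$), a requirement $\lambda$ is an $\epsilon$-fixed point of $\nego$ exactly when $\nego(\lambda)-\epsilon\le\lambda$, i.e.\ exactly when it is a prefixed point of $\lambda\mapsto\nego(\lambda)-\epsilon$; hence $\lambda^*$, the least fixed point, is the least such prefixed point, i.e.\ the least $\epsilon$-fixed point (allowing $+\infty$ coordinates, which correspond to states admitting no $\lambda^*$-consistent play). Then $(3)\Rightarrow(2)$ is immediate, and $(2)\Rightarrow(3)$ holds because $\lambda$-consistency only weakens as the requirement decreases: if $\rho$ is $\lambda$-consistent for an $\epsilon$-fixed point $\lambda$, then $\mu_i(\rho_{\ge n})\ge\lambda(\rho_n)\ge\lambda^*(\rho_n)$ by leastness, so $\rho$ is $\lambda^*$-consistent.

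\textbf{From $\epsilon$-SPEs to fixed points ($(1)\Rightarrow(2)$).}
Let $\bsigma$ be an $\epsilon$-SPE with outcome $\rho$. Define the requirement $\lambda^{\bsigma}$ by $\lambda^{\bsigma}(v)=\inf\{\,\mu_i(\<\bsigma_{\|h}\>)\mid h\in\Hist_i G_{\|v_0},\ \last(h)=v\,\}$ for $v\in V_i$ (an infimum of mean-payoffs, hence a real number for reachable $v$). First, for every history $h$ the play $\<\bsigma_{\|h}\>$ is $\lambda^{\bsigma}$-consistent: each of its suffixes equals $\<\bsigma_{\|h'}\>$ for the history $h'$ ending in the corresponding vertex, whose payoff is $\ge\lambda^{\bsigma}(\last(h'))$ by definition; in particular $\rho=\<\bsigma_{\|v_0}\>$ is $\lambda^{\bsigma}$-consistent. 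Second, $\lambda^{\bsigma}$ is an $\epsilon$-fixed point: given $v\in V_i$ and any history $h$ ending in $v$, the first point shows that $\bsigma_{-i\|h}$ is $\lambda^{\bsigma}$-rational (witnessed by $\sigma_{i\|h}$), hence lies in the set over which $\nego(\lambda^{\bsigma})(v)$ takes its infimum, so $\nego(\lambda^{\bsigma})(v)\le\sup_{\tau_i}\mu_i(\<\bsigma_{-i\|h},\tau_i\>)$; the latter is at most $\mu_i(\<\bsigma_{\|h}\>)+\epsilon$ since $\bsigma_{\|h}$ is an $\epsilon$-Nash equilibrium in $G_{\|h}$; taking the infimum over $h$ yields $\nego(\lambda^{\bsigma})(v)\le\lambda^{\bsigma}(v)+\epsilon$, while $\nego(\lambda^{\bsigma})(v)\ge\lambda^{\bsigma}(v)\ge\lambda^{\bsigma}(v)-\epsilon$ is automatic. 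Thus $\rho$ is $\lambda$-consistent for the $\epsilon$-fixed point $\lambda=\lambda^{\bsigma}$, which is assertion $(2)$. (Throughout I use that mean-payoff is prefix-independent, so payoffs in subgames agree with payoffs in $G$.)

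\textbf{From fixed points to $\epsilon$-SPEs ($(3)\Rightarrow(1)$), and the main obstacle.}
Assume $\rho$ is $\lambda^*$-consistent; I would build an $\epsilon$-SPE of outcome $\rho$ by a grim-trigger scheme with \emph{rational} punishments: follow $\rho$; as soon as the player $i$ owning the current state $v$ deviates, the coalition $-i$ switches to a profile $\btau_{-i}\in\lRat(v)$ that witnesses (up to an error $\delta$) the bound $\nego(\lambda^*)(v)\le\lambda^*(v)+\epsilon$ coming from the $\epsilon$-fixed-point property of $\lambda^*$; since $\btau_{-i}$ is $\lambda^*$-rational it carries its own recommended continuation, again $\lambda^*$-consistent, so the scheme is applied recursively if further deviations occur. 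The invariant is that after every history the recommended continuation is $\lambda^*$-consistent, so the controller of the current state is recommended at least $\lambda^*(\cdot)$ while any deviation is punished down to about $\lambda^*(\cdot)+\epsilon$, which makes the profile an $\epsilon$-SPE with outcome $\rho$. I expect the real difficulty to be precisely the word "about": the infimum defining $\nego$ need not be attained, so a single punishment strategy ensures only $\nego(\lambda^*)(v)+\delta$, and this spurious $\delta$ must be removed --- crucially when $\epsilon=0$. I would remove it with a \emph{layered} punishment exploiting prefix-independence of the $\liminf$ mean-payoff: concatenate $\delta_k$-optimal $\lambda^*$-rational punishments with $\delta_k\to 0$ over geometrically growing time windows, so that the deviator's $\liminf$-average is driven down to exactly $\nego(\lambda^*)(v)$; an alternative is to prove the statement first on the infinite-state abstract negotiation game of \cite{Concur,DBLP:journals/mor/FleschP17}, where the relevant optima are attained, and then transfer it back.
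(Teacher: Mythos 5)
First, note that the paper does not prove this lemma at all: it is recalled verbatim from \cite{Concur}, so there is no in-paper argument to compare yours against, and your attempt has to be judged as a reconstruction of that external proof. The parts you work out in detail are sound. The Knaster--Tarski argument for the existence of $\lambda^*$ (via the observation that, $\nego$ being non-decreasing, $\epsilon$-fixed points are exactly the prefixed points of $\lambda\mapsto\nego(\lambda)-\epsilon$), the equivalence $(2)\Leftrightarrow(3)$, and the direction $(1)\Rightarrow(2)$ via the requirement $\lambda^{\bsigma}(v)=\inf_h\mu_i(\<\bsigma_{\|h}\>)$ are all correct and are the standard route (modulo a harmless technicality about states unreachable from $v_0$, where the empty infimum forces $\lambda^{\bsigma}(v)=+\infty$; one should restrict to the reachable part of the game).

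The genuine gap is in $(3)\Rightarrow(1)$, and you have located it yourself but not closed it. Because $\nego(\lambda^*)(v)$ is an infimum that need not be attained, a grim-trigger construction from $\delta$-optimal $\lambda^*$-rational punishments only yields an $(\epsilon+\delta)$-SPE, and the boundary case $\nego(\lambda^*)(v)=\lambda^*(v)+\epsilon$ (which does occur for the \emph{least} $\epsilon$-fixed point) cannot be waved away. Your proposed ``layered punishment'' does not obviously work as stated, for two concrete reasons. (a) A $\delta_k$-optimal $\lambda^*$-rational profile only bounds the $\liminf$ of the deviator's payoff on \emph{infinite} continuations; it gives no control on the average reward over a finite window, so truncating it after $T_k$ steps and switching to the next layer does not let you conclude that the deviator's $\liminf$-average is driven down to $\nego(\lambda^*)(v)$ --- the windows would have to be chosen adaptively against the deviator's behaviour, and you give no argument that this can be done. (b) The concatenated profile must itself remain $\lambda^*$-rational: the recommended continuation after every history is now a concatenation of finite prefixes of $\lambda^*$-consistent plays, and such a concatenation need not be $\lambda^*$-consistent for a $\liminf$ payoff. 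The actual resolution in \cite{Concur} is essentially your ``alternative'': one proves that Prover has an \emph{optimal} (infinite-memory) strategy in the (abstract/concrete) negotiation game for mean-payoff objectives, so that the infimum defining $\nego(\lambda^*)(v)$ is attained by some $\lambda^*$-rational profile; this attainment result is itself a substantial lemma, and without it (or a worked-out substitute) your proof of the hard direction is incomplete.
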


\subparagraph{\bf {The abstract negotiation game.}}
Given $\lambda$ and $u \in V$, the quantity $\nego(\lambda)(u)$ can be characterized as the value of a \emph{negotiation game}, a two-player zero-sum game opposing the player \emph{Prover}, who simulates a $\lambda$-rational strategy profile and wants to minimize player $i$'s payoff, and the player \emph{Challenger}, who simulates player $i$'s reaction by accepting or refusing Prover's proposals.
Two negotiation games were defined in \cite{Concur}.
Conceptually simpler, the \emph{abstract negotiation game} $\Abs_{\lambda i}(G)_{\|u}$ unfolds as follows:

\begin{itemize}
    \item from the state $v$, Prover chooses a $\lambda$-consistent play $\rho$ from the state $v$ and proposes it to Challenger.
    If Prover has no play to propose, the game is over and Challenger gets the payoff $+\infty$.
        
    \item Once a play $\rho$ has been proposed, Challenger can accept it.
    Or he can \emph{deviate}, and choose a prefix $\rho_{\leq k}$ with $\rho_k \in V_i$ and a new transition $\rho_k w \in E$.
        
    \item In the former case, the game is over.
    In the latter, it starts again from the state $w$.
\end{itemize}

If Challenger finally accepts a proposal $\rho$, then his payoff is $\mu_i(\rho)$.
If he deviates infinitely often, then Prover's proposals and his deviations construct a play $\dpi = \rho^{(0)}_{\leq k_0} \rho^{(1)}_{\leq k_1} \rho^{(2)}_{\leq k_2} \dots$.
Then, Challenger's payoff is $\mu_i(\dpi)$.
It has been proved in \cite{Concur} that the equality $\nego(\lambda)(u) = \val_\C(\Abs_{\lambda i}(G)_{\|u})$ holds.
Thus, the abstract negotiation game captures a first intuition on how the negotiation function can be computed.

\begin{exa}
    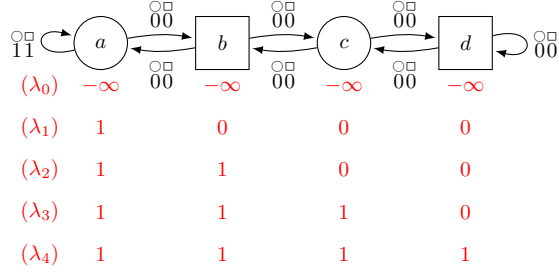
\begin{figure}
    \centering
		\begin{tikzpicture}[->,>=latex,shorten >=1pt, initial text={}, scale=0.8, every node/.style={scale=0.8}]
		\node[state] (a) at (0, 0) {$a$};
		\node[state, rectangle] (b) at (2, 0) {$b$};
		\node[state] (c) at (4, 0) {$c$};
		\node[state, rectangle] (d) at (6, 0) {$d$};
		\path (a) edge [loop left] node {$\stackrel{\playcircle}{1} \stackrel{\Box}{1}$} (a);
		\path[bend left=10] (a) edge node[above] {$\stackrel{\playcircle}{0} \stackrel{\Box}{0}$} (b);
		\path[bend left=10] (b) edge node[below] {$\stackrel{\playcircle}{0} \stackrel{\Box}{0}$} (a);
		\path[bend left=10] (b) edge node[above] {$\stackrel{\playcircle}{0} \stackrel{\Box}{0}$} (c);
		\path[bend left=10] (c) edge node[below] {$\stackrel{\playcircle}{0} \stackrel{\Box}{0}$} (b);
		\path[bend left=10] (c) edge node[above] {$\stackrel{\playcircle}{0} \stackrel{\Box}{0}$} (d);
		\path[bend left=10] (d) edge node[below] {$\stackrel{\playcircle}{0} \stackrel{\Box}{0}$} (c);
		\path (d) edge [loop right] node {$\stackrel{\playcircle}{0} \stackrel{\Box}{0}$} (d);

		\node[red] (l0) at (-1, -0.7) {$(\lambda_0)$};
		\node[red] (l0a) at (0, -0.7) {$-\infty$};
		\node[red] (l0b) at (2, -0.7) {$-\infty$};
		\node[red] (l0c) at (4, -0.7) {$-\infty$};
		\node[red] (l0d) at (6, -0.7) {$-\infty$};
		
		\node[red] (l1) at (-1, -1.4) {$(\lambda_1)$};
		\node[red] (l1a) at (0, -1.4) {$1$};
		\node[red] (l1b) at (2, -1.4) {$0$};
		\node[red] (l1c) at (4, -1.4) {$0$};
		\node[red] (l1d) at (6, -1.4) {$0$};
		
		\node[red] (l2) at (-1, -2.1) {$(\lambda_2)$};
		\node[red] (l2a) at (0, -2.1) {$1$};
		\node[red] (l2b) at (2, -2.1) {$1$};
		\node[red] (l2c) at (4, -2.1) {$0$};
		\node[red] (l2d) at (6, -2.1) {$0$};
		
		\node[red] (l3) at (-1, -2.8) {$(\lambda_3)$};
		\node[red] (l3a) at (0, -2.8) {$1$};
		\node[red] (l3b) at (2, -2.8) {$1$};
		\node[red] (l3c) at (4, -2.8) {$1$};
		\node[red] (l3d) at (6, -2.8) {$0$};
		
		\node[red] (l4) at (-1, -3.5) {$(\lambda_4)$};
		\node[red] (l4a) at (0, -3.5) {$1$};
		\node[red] (l4b) at (2, -3.5) {$1$};
		\node[red] (l4c) at (4, -3.5) {$1$};
		\node[red] (l4d) at (6, -3.5) {$1$};
		\end{tikzpicture}
    \caption{Iterations of the negotiation function}
    \label{fig_ex_nego}
\end{figure}

    Let $G$ be the game of Figure~\ref{fig_ex_nego}, where each edge is labelled by the rewards $r_\playcircle$ and $r_\Box$.
    Below the states, we present the requirements $\lambda_0: v \mapsto -\infty$, $\lambda_1 = \nego(\lambda_0)$, $\lambda_2 = \nego(\lambda_1)$, $\lambda_3 = \nego(\lambda_2)$, and $\lambda_4 = \nego(\lambda_3)$.
    Let us explicate those computations, using the abstract negotiation game.
	From $\lambda_0$ to $\lambda_1$: since every play is $\lambda_0$-consistent, Prover can always propose whatever she wants.
	From the state $a$, whatever she (trying to minimize player $\Circle$'s payoff) proposes, Challenger can always make player $\Circle$ deviate in order to loop on the state $a$.
	Then, in the game $G$, player $\Circle$ gets the payoff $1$, hence $\lambda_1(a) = 1$.
	From the state $b$, Prover (trying to minimize player $\Box$'s payoff) can propose the play $(bc)^\omega$.
	If Challenger makes player $\Box$ deviate to go to the state $a$, then Prover can propose the play $a(bc)^\omega$.
	Even if Challenger makes player $\Box$ deviate infinitely often, he cannot give him more than the payoff $0$, hence $\lambda_1(b) = 0$.
	Similar situations happen from the states $c$ and $d$, hence $\lambda_1(c) = \lambda_1(d) = 0$.
	From $\lambda_1$ to $\lambda_2$: now, from the state $b$, whatever Prover proposes at first, Challenger can make player $\Box$ deviate and go to the state $a$.
	From there, since we have $\lambda_1(a) = 1$, Prover has to propose a play in which player $\Circle$ gets the payoff $1$.
	The only such plays do also give the payoff $1$ to player $\Box$, hence $\lambda_2(b) = 1$.
	Similar situations explain $\lambda_3(c) = 1$, and $\lambda_4(c) = 1$.
	Finally, plays ending with the loop $a^\omega$ are all $\lambda_4$-consistent, hence Prover can always propose them, hence the requirement $\lambda_4$ is a fixed point of the negotiation function --- and therefore the least.
	By Lemma~\ref{lm_spe}, the SPE plays in $G$ are exactly the plays in which both player $\Circle$ and player $\Box$ get the payoff $1$.
\end{exa}

\subparagraph{\bf {The concrete negotiation game.}}
The abstract negotiation game cannot be directly used for an algorithmic purpose, since it has an infinite state space.
However, it can be turned into a game on a finite graph if Prover does not propose plays as a whole, but edge by edge.
In the \emph{concrete negotiation game} $\Conc_{\lambda i}(G)_{\|(u, \{u\})}$, the states controlled by Prover have the form $(v, M)$, where $M \subseteq V$ memorizes the states seen since the last time Challenger did deviate, in order to control that the play Prover is constructing since that moment is effectively $\lambda$-consistent: for each $u \in M$, Prover has to give to the player controlling $u$ at least the payoff $\lambda(u)$.
Similarly, the states controlled by Challenger are of the form $(vv', M)$, where $vv' \in E$ is an edge proposed by Prover.
The game unfolds as follows:

\begin{itemize}
    \item from the state $(v, M)$, Prover chooses a transition $vv'$ and proposes it to Challenger.
        
    \item Once a transition $vv'$ has been proposed, Challenger can accept it.
    Or, if $v \in V_i$, he can \emph{deviate}, and choose a new transition $vw$.
        
    \item If the former case, the game starts again from the state $(v', M \cup \{v'\})$.
    In the latter, it starts from the state $(w, \{w\})$.
\end{itemize}
    
For every play $\pi = (\rho_0, M_0) (\rho_0\rho'_0, M_0) (\rho_1, M_1) (\rho_1\rho'_1, M_1) \dots$ in $\Conc_{\lambda i}(G)$, we write $\dpi = \rho_0 \rho_1 \dots$ the play in $G$ constructed by Prover's proposals and Challenger's deviations.
Then, Challenger's payoff in the play $\pi$ is either $+\infty$ if there exists an index $k$ such that the suffix $\pi_{\geq 2k}$ contains no deviation and $\dpi_{\geq k}$ is not $\lambda$-consistent, and $\mu_i(\dpi)$ otherwise.

In \cite{Concur}, a first algorithm was proposed to solve the $\epsilon$-SPE threshold problem, using the fact that in the concrete negotiation game, Challenger has a memoryless optimal strategy, to design a complete representation of the negotiation function, and to compute its least $\epsilon$-fixed point.
However, that algorithm requires doubly exponential time, because it needs to enumerate all the memoryless strategies available for Challenger, whose number is exponential in the size of the concrete game, itself exponential in the size of $G$.
Here, we make use of the concrete negotiation game only to bound the size of the least $\epsilon$-fixed point: our algorithm will use a third negotiation game, the \emph{reduced negotiation game}.

    \section{Size of the least $\epsilon$-fixed point}

In this section, after having recalled some results about the sizes of solutions to linear equations and inequations, we prove that the least $\epsilon$-fixed point of the negotiation function in a game $G$ has a size that is polynomial in the size of $G$ and $\epsilon$.
The first piece of the witnesses identifying positive instances of the $\epsilon$-SPE threshold problem will then be an $\epsilon$-fixed point of the negotiation function of polynomial size.

\subparagraph{\bf {About size, equations and inequations.}}
We define here the notion of size that we use.

\begin{defi}[Size]
    The \emph{size} of a rational number $r = \frac{p}{q}$, where $p, q \in \Z$ are co-prime, is the quantity $\Vert r\Vert = 1 + \lceil \log_2(|p|+1) \rceil + \lceil \log_2(|q|+1) \rceil$.
    The size of an irrational number is $+\infty$.
    The size of the infinite numbers is $\Vert +\infty \Vert = \Vert -\infty \Vert = 1$.
    The size of a tuple $\bx \in O^D$, where $D$ is a set and $O$ is a set of objects for which the notion of size has been defined, is the quantity $\card D + \sum_{d \in D} \Vert x_d \Vert$.
    Similarly, the size of a function $f: D \to X$ is the quantity $\card D + \sum_{d \in D} \Vert f(d) \Vert$, and the size of a set $X \subseteq O$ is the quantity $\card X + \sum_{x \in X} \Vert x \Vert$.
\end{defi}

The proof of Theorem~\ref{thm_size_lambda} below requires the manipulation of \emph{polytopes}, e.g. downward sealings of convex hulls (from Lemma~\ref{lm_dseal}), expressed as solution sets of \emph{systems of linear inequations}.

\begin{defi}[Linear equations, inequations, systems]
    Let $D$ be a finite set.
    A \emph{linear equation} in $\R^D$ is a pair $(\ba, b) \in \left(\R^D \setminus \left\{\bzero\right\}\right) \times \R$.
    The \emph{solution set} of the equation $(\ba, b)$ is the set $\Sol_=(\ba, b) = \{ \bx \in \R^D ~|~ \ba \cdot \bx = b\}$, where $\cdot$ denotes the canonical scalar product on the euclidian space $\R^D$.
    A set $X \subseteq \R^D$ is a \emph{hyperplane} of $\R^D$ if it is the solution set of some linear equation.
    A \emph{system of linear equations} is a finite set $\Sigma$ of linear equations.
    The \emph{solution set} of the system $\Sigma$ is the set $\Sol_= \Sigma = \bigcap_{(\ba, b) \in \Sigma} \Sol_=(\ba, b)$.
    A set $X \subseteq \R^D$ is a \emph{linear subspace} of $\R^D$ if it is the solution set of some system of linear equations.
    
    A \emph{linear inequation} in $\R^D$ is a pair $(\ba, b) \in \left(\R^D \setminus \left\{\bzero\right\}\right) \times \R$.
    The \emph{solution set} of the inequation $(\ba, b)$ is the set $\Sol_\geq(\ba, b) = \{ \bx \in \R^D ~|~ \ba \cdot \bx \geq b\}$.
    A set $X \subseteq \R^D$ is a \emph{half-space} of $\R^D$ if it is the solution set of some linear inequation.
    A \emph{system of linear inequations} is a finite set $\Sigma$ of linear inequations.
    The \emph{solution set} of the system $\Sigma$ is the set $\Sol_\geq \Sigma = \bigcap_{(\ba, b) \in \Sigma} \Sol_\geq(\ba, b)$.
    A set $X \subseteq \R^D$ is a \emph{polyhedron} of $\R^D$ if it is the solution set of some system of linear inequations $\Sigma$.
    A \emph{vertex} of $X$ is a point $\bx \in \R^D$ such that $\{\bx\} = \Sol_=(\Sigma')$ for some subset $\Sigma' \subseteq \Sigma$.
    A \emph{polytope} is a bounded polyhedron.
\end{defi}

\begin{rk}
    Polyhedra are closed sets.
    The polytopes of $\R^D$ are exactly the sets of the form $\Conv(S)$, where $S$ is a finite subset of $\R^D$.
\end{rk}

\begin{lm}[\cite{DBLP:conf/concur/ChatterjeeDEHR10}] \label{lm_dseal_size}
    Let $\Sigma$ be a system of inequations, and let $X = \Sol_{\geq}(\Sigma)$.
    The set $\dseal X$ is itself a polyhedron, and there exists a system of inequations $\Sigma'$ such that $\dseal X = \Sol_\geq (\Sigma')$ and that for every $(\ba', b') \in \Sigma'$, there exists $(\ba, b) \in \Sigma$ with $\lv (\ba', b') \rv \leq \lv (\ba, b) \rv$.
\end{lm}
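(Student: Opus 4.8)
The plan is to describe $\dseal X$ explicitly — first as a finite intersection of projections of polyhedra, then, by dualising those projections, to extract a defining system of the required size. Throughout, write $\Sigma = \{(\ba_1, b_1), \dots, (\ba_m, b_m)\}$.

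\emph{A pointwise characterisation of $\dseal$.} Unwinding Definition~\ref{def_dseal}, a vector $\bar w \in \R^D$ lies in $\dseal X$ if and only if for every coordinate $d \in D$ there is a point $\bz \in X$ with $\bz \geq \bar w$ componentwise and $z_d = w_d$: from a finite $Z \subseteq X$ realising $\bar w = \left(\min_{\bz \in Z} z_d\right)_{d \in D}$ one reads off, for each $d$, a minimiser $\bz \in Z$, which dominates $\bar w$ since $\bar w$ is the coordinatewise minimum over $Z$; conversely the $\card D$ witnesses form a finite subset of $X$ whose coordinatewise minimum is $\bar w$. Hence $\dseal X = \bigcap_{d \in D} P_d$, where $P_d$ is the image of $Q_d = \{(\bar w, \bz) \in \R^D \times \R^D : \bz \in X,\ \bz \geq \bar w,\ z_d = w_d\}$ under the projection onto the $\bar w$-block. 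Projections of polyhedra being polyhedra, this already shows $\dseal X$ is a polyhedron; what is left is the size bound.

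\emph{Dualising each projection.} Fix $d$ and regard the system defining $Q_d$ as constraints on the unknown $\bz$, with $\bar w$ a parameter. By Farkas' lemma, $\bar w \notin P_d$ exactly when nonnegative multipliers $\mu_1, \dots, \mu_m$ on the inequations of $\Sigma$, together with auxiliary multipliers on the bound constraints $z_c \geq w_c$ and the equality $z_d = w_d$, certify infeasibility; eliminating the auxiliary multipliers through the equalities they must satisfy, this becomes: there are $\mu_j \geq 0$ such that $\ba' := \sum_j \mu_j \ba_j$ has all coordinates $\leq 0$ except possibly the $d$-th, and $\ba' \cdot \bar w < b' := \sum_j \mu_j b_j$. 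So $P_d = \Sol_\geq(\Sigma_d)$, where $\Sigma_d$ is the finite set of inequations $(\ba', b')$ associated with the extreme rays of the polyhedral cone $K_d = \{\bar\mu \in \R^m : \bar\mu \geq \bzero,\ (\sum_j \mu_j \ba_j)_c \leq 0 \text{ for } c \neq d\}$ — every element of $K_d$ being a nonnegative combination of extreme rays, its inequation is implied by the extreme-ray ones. Therefore $\dseal X = \Sol_\geq\!\left(\bigcup_{d \in D} \Sigma_d\right)$.

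\emph{Bounding the sizes.} The cone $K_d$ sits in the nonnegative orthant of $\R^m$ and, besides the $m$ sign constraints, is defined by only $\card D - 1$ homogeneous inequations; hence every extreme ray of $K_d$ has at least $m - \card D$ sign constraints tight, i.e. is supported on at most $\card D$ coordinates, and up to scaling it is the unique solution of a homogeneous linear system in at most $\card D$ unknowns with coefficients among the entries of the $\ba_j$. Invoking the bounds on simple solutions of linear systems that this paper relies on (in the spirit of Brenguier and Raskin), such an extreme ray $\bar\mu$ can be chosen of size polynomial in the size of $\Sigma$; hence so is the associated inequation $(\ba', b') = \sum_j \mu_j (\ba_j, b_j)$ of $\Sigma_d$ once rescaled to coprime integers, and a careful accounting upgrades this to: $(\ba', b')$ is size-dominated by one of the $(\ba_j, b_j)$ it combines. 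Setting $\Sigma' = \bigcup_{d \in D} \Sigma_d$ — or $\Sigma' = \Sigma$ in the degenerate case $X = \emptyset$, where $\dseal X = \emptyset = \Sol_\geq(\Sigma)$ — yields the desired system.

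\emph{Main obstacle.} The crux is that last accounting: turning "polynomial size" into the sharp $\lv (\ba', b') \rv \leq \lv (\ba, b) \rv$ for some $(\ba, b) \in \Sigma$. The lever is the bounded support of the extreme rays of $K_d$, which itself comes from the structural restriction that the combined normal $\ba'$ has at most one positive coordinate — exactly the restriction that excludes the combinations of several "upward" inequations responsible for coefficient blow-up in general Fourier–Motzkin elimination.
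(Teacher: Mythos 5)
Note first that the paper offers no proof of this lemma: it is imported as-is from \cite{DBLP:conf/concur/ChatterjeeDEHR10}, so there is nothing in the paper to compare your argument against, and I judge it on its own terms. Up to the final step your reduction is sound and natural: the pointwise characterisation of $\dseal X$ as $\bigcap_{d}P_d$ with $P_d$ the projection of $Q_d$ is correct (and already gives polyhedrality); the Farkas dualisation correctly exhibits $P_d$ as cut out by the combinations $\sum_j \mu_j(\ba_j,b_j)$ with $\bar{\mu}$ ranging over the cone $K_d$ (equivalently, $P_d$ is the Minkowski sum of $X$ with the cone generated by $\{-e_c : c\neq d\}$, whose valid inequalities are exactly those with at most one positive coordinate of the normal); and the observation that extreme rays of $K_d$ have support of size at most $\card D$ is right and is precisely what makes the resulting bound independent of the number of inequations, matching the remark following Corollary~\ref{cor_size_vertices}. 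This much does prove that $\dseal X$ is a polyhedron defined by inequations whose sizes are bounded by a polynomial in $\max_{(\ba,b)\in\Sigma}\lv(\ba,b)\rv$ and $\card D$ --- which, as it happens, is all that the paper ever uses of this lemma (in Theorems~\ref{thm_constrained_existence_lambda_cons} and~\ref{thm_size_lambda}).

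The genuine gap is the step you yourself label the ``main obstacle'': the lemma asserts the sharp domination $\lv(\ba',b')\rv\leq\lv(\ba,b)\rv$ for some \emph{single} $(\ba,b)\in\Sigma$, and ``a careful accounting upgrades this'' is an assertion, not an argument. It is far from clear that the upgrade goes through along your route: $(\ba',b')$ is a nonnegative combination of up to $\card D$ inequations of $\Sigma$, and even with extreme-ray multipliers of controlled size, the coprime-reduced coefficients of such a combination need not be dominated in size by any one of its constituents --- consider summands whose entries have distinct prime denominators, or the ``vertical'' facets of $P_d$ of the form $x_d\geq \min_{\bx\in X}x_d$, whose right-hand side is an LP optimum over $X$ and is only guaranteed to have size polynomial in $\Sigma$, not bounded by a single inequation of $\Sigma$. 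So, as written, your proof establishes a weaker (though, for this paper, sufficient) polynomial variant of the lemma; the exact inequality in the statement is left unproven, and if it is true it must rely on structure not extracted here.
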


\begin{lm}[\cite{DBLP:conf/cav/BrenguierR15}, Theorem~1] \label{lm_system_eq_to_point}
    There exists a polynomial $P_1$ such that, for every system of equations $\Sigma$, there exists a point $\bx \in \Sol_= \Sigma$, such that $\lv \bx \rv \leq P_1\left(\max_{(\ba, b) \in \Sigma} \lv (\ba, b) \rv\right)$.
\end{lm}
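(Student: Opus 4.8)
\emph{Plan.} The plan is to reduce to a full-rank square subsystem whose number of rows is bounded in terms of $\card D$ only, and then to extract a small solution by Cramer's rule together with a Hadamard-type determinant bound. If $\Sol_= \Sigma = \emptyset$ there is nothing to prove, so assume $\Sigma$ is consistent, and set $s = \max_{(\ba, b) \in \Sigma} \lv (\ba, b) \rv$; by the definition of size, $\card D \leq \lv \ba \rv \leq s$, so the ambient dimension is already at most $s$.

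\emph{Reducing to an independent subsystem.} First I would pick a maximal linearly independent family $(\ba_1, b_1), \dots, (\ba_r, b_r)$ of left-hand sides among the equations of $\Sigma$, with $r \leq \card D \leq s$. I claim $\Sol_= \{(\ba_j, b_j)\}_{j \leq r} = \Sol_= \Sigma$. The inclusion $\supseteq$ is immediate. For $\subseteq$, fix some $\bx_0 \in \Sol_= \Sigma$ and take $(\ba, b) \in \Sigma$; by maximality $\ba = \sum_{j \leq r} c_j \ba_j$ for some reals $c_j$, hence $b = \ba \cdot \bx_0 = \sum_j c_j (\ba_j \cdot \bx_0) = \sum_j c_j b_j$, so any $\bx$ satisfying $\ba_j \cdot \bx = b_j$ for all $j$ also satisfies $\ba \cdot \bx = \sum_j c_j b_j = b$. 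This is the crucial step: it is what makes the final bound depend on $s$ alone and not on the a priori unbounded cardinality of $\Sigma$.

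\emph{A small solution via Cramer's rule.} The $r \times \card D$ matrix $A$ whose rows are the $\ba_j$ has rank $r$, so some set $C$ of $r$ columns gives an invertible $r \times r$ submatrix $B$. Setting $x_d = 0$ for $d \notin C$ reduces the system to $B (x_d)_{d \in C} = (b_j)_{j \leq r}$, with unique solution $x_d = \det(B^{(d)}) / \det(B)$ for $d \in C$, where $B^{(d)}$ is $B$ with its column indexed by $d$ replaced by $(b_j)_{j \leq r}$; the resulting $\bx$ lies in $\Sol_= \{(\ba_j, b_j)\}_j = \Sol_= \Sigma$. To bound $\lv \bx \rv$, clear denominators row by row: multiplying row $j$ by a common denominator of its at most $\card D + 1$ rational entries, each with numerator and denominator below $2^s$, hence by a positive integer below $2^{s(\card D + 1)} = 2^{O(s^2)}$, leaves the quotients $\det(B^{(d)})/\det(B)$ unchanged while turning $B$ and the $B^{(d)}$ into integer matrices with entries of absolute value at most $N := 2^{O(s^2)}$. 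By Hadamard's inequality, $|\det(B)|, |\det(B^{(d)})| \leq r^{r/2} N^r = 2^{O(s^3)}$ since $r \leq s$. Hence each $x_d$ is a rational with numerator and denominator at most $2^{O(s^3)}$, so $\lv x_d \rv = O(s^3)$, and therefore $\lv \bx \rv = \card D + \sum_{d} \lv x_d \rv = O(s^4)$; taking $P_1$ to be a polynomial of degree $4$ concludes.

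\emph{Main obstacle.} The only point requiring genuine care is the reduction to an independent subsystem: without it, the number of equations of $\Sigma$ — which may be arbitrarily large compared to $s$ — would leak into the bound through, e.g., the order of the matrices whose determinants we take. Once the subsystem has at most $\card D \leq s$ rows, the remainder is routine determinant bookkeeping (Cramer's rule plus Hadamard), and the exact degree of $P_1$ is immaterial.
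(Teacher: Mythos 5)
Your proof is correct. Note that the paper does not prove this lemma at all: it is imported verbatim from \cite{DBLP:conf/cav/BrenguierR15} (Theorem~1), so there is no internal proof to compare against. Your argument --- first passing to a maximal linearly independent subsystem so that the number of equations is bounded by $\card D \leq s$ rather than by the a priori unbounded cardinality of $\Sigma$, then bounding the Cramer solution of an invertible square subsystem via Hadamard's inequality after clearing denominators --- is the standard proof of the cited result, and you correctly isolate the one step that genuinely matters, namely keeping $\card \Sigma$ out of the bound. (The only cosmetic point: the statement allows irrational coefficients, in which case $s = +\infty$ and the claim is vacuous, consistent with your implicit assumption of rational entries.)
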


\begin{cor} \label{cor_size_vertices}
    For every system of inequations $\Sigma$, each vertex $\bx$ of the polyhedron $\Sol_{\geq}(\Sigma)$ has size $\lv x \rv \leq P_1\left(\max_{(\ba, b) \in \Sigma} \lv (\ba, b) \rv\right)$.
\end{cor}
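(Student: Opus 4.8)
The plan is to obtain this as an essentially immediate consequence of Lemma~\ref{lm_system_eq_to_point}, by unwinding the definition of a vertex. Fix a system of inequations $\Sigma$ and a vertex $\bx$ of the polyhedron $X = \Sol_\geq(\Sigma)$. By definition of a vertex, there is a subset $\Sigma' \subseteq \Sigma$ with $\{\bx\} = \Sol_=(\Sigma')$, the pairs of $\Sigma'$ now being read as linear equations. Since $\Sigma$ is a system of inequations, each of its pairs lies in $\left(\R^D \setminus \{\bzero\}\right) \times \R$, and this non-degeneracy is inherited by $\Sigma'$; hence $\Sigma'$ is a bona fide system of linear equations, and it is a legal input to Lemma~\ref{lm_system_eq_to_point}, with $\Sol_= \Sigma' = \{\bx\}$ a singleton.

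Next I would apply Lemma~\ref{lm_system_eq_to_point} to $\Sigma'$: it yields a point $\by \in \Sol_= \Sigma'$ with $\lv \by \rv \leq P_1\!\left(\max_{(\ba, b) \in \Sigma'} \lv (\ba, b) \rv\right)$. Because $\Sol_= \Sigma' = \{\bx\}$, the only possibility is $\by = \bx$, so already $\lv \bx \rv \leq P_1\!\left(\max_{(\ba, b) \in \Sigma'} \lv (\ba, b) \rv\right)$.

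Finally I would compare the two maxima: since $\Sigma' \subseteq \Sigma$, we have $\max_{(\ba,b)\in\Sigma'} \lv (\ba,b) \rv \leq \max_{(\ba,b)\in\Sigma} \lv (\ba,b) \rv$. Assuming, without loss of generality, that $P_1$ is non-decreasing on the non-negative reals — which costs nothing, e.g. by replacing $P_1$ with the polynomial obtained from it by taking absolute values of all coefficients — we get $\lv \bx \rv \leq P_1\!\left(\max_{(\ba,b)\in\Sigma} \lv (\ba,b) \rv\right)$, which is the claim. I do not expect a genuine obstacle here; the only points that deserve a sentence are (i) the observation that a subset of a system of inequations, reinterpreted as a system of equations, is a valid input to Lemma~\ref{lm_system_eq_to_point}, and (ii) the harmless monotonicity normalization of the polynomial $P_1$ used to pass from the maximum over $\Sigma'$ to the maximum over $\Sigma$.
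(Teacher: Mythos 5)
Your proof is correct and is exactly the argument the paper intends (the corollary is left without an explicit proof precisely because it unwinds the definition of a vertex and applies Lemma~\ref{lm_system_eq_to_point} to the defining subsystem $\Sigma'$, as you do). The two side remarks you flag — that a subset of a system of inequations is a legitimate system of equations, and that $P_1$ may be taken non-decreasing on the non-negative reals — are the right ones and are handled correctly.
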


Note that in Lemma~\ref{lm_dseal_size}, in Lemma~\ref{lm_system_eq_to_point} and in Corollary~\ref{cor_size_vertices}, the number of equations or inequations has no influence.
A consequence of Lemma~\ref{lm_system_eq_to_point} is the following result.

\begin{lm}[App.~\ref{pf_conv_to_system}] \label{lm_conv_to_system}
    There exists a polynomial $P_2$ such that, for each finite set $D$ and every finite subset $X \subseteq \R^D$, there exists a system of linear inequations $\Sigma$, such that $\Sol_\geq(\Sigma) = \Conv(X)$ and $\lv (\ba, b) \rv \leq P_2(\lv X \rv)$ for every $(\ba, b) \in \Sigma$.
\end{lm}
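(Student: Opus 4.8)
The plan is to exhibit a system $\Sigma$ consisting of two inequations per equation of a small system defining the affine hull $\mathrm{aff}(X)$, together with, for every facet of $\Conv(X)$, one supporting inequation whose hyperplane is spanned by points of $X$. That $\Sol_{\geq}(\Sigma)=\Conv(X)$ for such a $\Sigma$ is standard polytope theory; the whole point is to choose every (in)equation with coefficients of size polynomial in $\lv X\rv$. (If $X=\emptyset$ one takes for $\Sigma$ two contradictory inequations of constant size, so assume $X\neq\emptyset$ and $D\neq\emptyset$.)

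The technical core is a small-kernel-basis fact drawn from Lemma~\ref{lm_system_eq_to_point}: there is a polynomial $Q$ such that every rational matrix $M$ has a basis of $\ker M$ formed of vectors of size $\le Q(\lv M\rv)$, where $\lv M\rv$ is the size of $M$ viewed as a tuple of rationals. To see it, fix a maximal independent set $S$ of columns of $M$; for each $\ell\notin S$ the $\ell$-th column is a combination $\sum_{k\in S}\alpha_k M_{\cdot k}$ of the columns in $S$, so the linear system with unknowns $(\alpha_k)_{k\in S}$ and one equation $\sum_{k\in S}M_{rk}\alpha_k=M_{r\ell}$ per row $r$ with $(M_{rk})_{k\in S}\neq\bzero$ is consistent and has coefficients of size $\le\lv M\rv+O(1)$; Lemma~\ref{lm_system_eq_to_point} then yields a solution $\bar\alpha$ of polynomial size, and the vector with $-1$ in coordinate $\ell$, $\bar\alpha_k$ in coordinates $k\in S$, and $0$ elsewhere is a polynomial-size element of $\ker M$; letting $\ell$ range over the complement of $S$ gives the required basis.

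I would then fix $x^{(0)}\in X$ and build $\Sigma$ as follows. First, $\mathrm{lin}(\mathrm{aff}(X))^\perp$ is the kernel of the matrix whose rows are the vectors $x-x^{(0)}$ ($x\in X$), all of polynomial size, so it has a polynomial-size basis $c^{(1)},\dots,c^{(p)}$; the equations $c^{(s)}\cdot x=c^{(s)}\cdot x^{(0)}$ define $\mathrm{aff}(X)$, and I put the two inequations $(c^{(s)},c^{(s)}\cdot x^{(0)})$ and $(-c^{(s)},-c^{(s)}\cdot x^{(0)})$ into $\Sigma$. Second, for each affinely independent $Y=\{y^{(1)},\dots,y^{(k)}\}\subseteq X$ with $k=\dim\mathrm{aff}(X)$, the one-dimensional space $\mathrm{lin}(\mathrm{aff}(X))\cap\{y^{(2)}-y^{(1)},\dots,y^{(k)}-y^{(1)}\}^\perp$ is the kernel of the matrix with rows $c^{(1)},\dots,c^{(p)},y^{(2)}-y^{(1)},\dots,y^{(k)}-y^{(1)}$, hence has a polynomial-size nonzero generator $c_Y$; with $\delta_Y=c_Y\cdot y^{(1)}$, I put $(-c_Y,-\delta_Y)$ into $\Sigma$ if $c_Y\cdot x\le\delta_Y$ for all $x\in X$, put $(c_Y,\delta_Y)$ into $\Sigma$ if $c_Y\cdot x\ge\delta_Y$ for all $x\in X$, and discard $Y$ otherwise. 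Every coefficient introduced is bounded by a fixed polynomial in $\lv X\rv$ (the number of inequations, possibly exponential, is irrelevant), and this polynomial is $P_2$.

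It remains to check $\Sol_{\geq}(\Sigma)=\Conv(X)$. The inclusion $\Conv(X)\subseteq\Sol_{\geq}(\Sigma)$ is clear since every inequation of $\Sigma$ holds throughout $X$ (the equality ones because $\Conv(X)\subseteq\mathrm{aff}(X)$, the facet ones by construction), hence on $\Conv(X)$. Conversely, if $z\notin\mathrm{aff}(X)$ then some equality constraint fails at $z$; and if $z\in\mathrm{aff}(X)\setminus\Conv(X)$, then, $\Conv(X)$ being a full-dimensional polytope in the $k$-dimensional affine space $\mathrm{aff}(X)$, some facet $F$ separates $z$ from $\Conv(X)$ within $\mathrm{aff}(X)$, and since $F=\Conv(X\cap\mathrm{aff}(F))$ with $\dim\mathrm{aff}(F)=k-1$, the set $X\cap\mathrm{aff}(F)$ contains an affinely independent $k$-subset $Y$ with $\mathrm{aff}(Y)=\mathrm{aff}(F)$; then $c_Y$ is, up to sign, a normal vector of $\mathrm{aff}(F)$, and $\mathrm{aff}(F)$ supports $\Conv(X)$, so the inequation of $\Sigma$ built from $Y$ was retained and is violated by $z$. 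I expect the only genuinely delicate point to be the size control in the non-full-dimensional case — pulling a polynomial-size basis of $\mathrm{lin}(\mathrm{aff}(X))^\perp$ out of Lemma~\ref{lm_system_eq_to_point} and feeding it back to bound the facet normals — while the supporting polytope combinatorics is routine.
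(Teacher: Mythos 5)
Your proof is correct and follows essentially the same route as the paper's: represent $\Conv(X)$ by its facet-supporting inequations, each obtained as a polynomial-size solution of a linear system built from points of $X$ via Lemma~\ref{lm_system_eq_to_point}. Your treatment is in fact the more careful of the two --- the paper's argument implicitly assumes $\Conv(X)$ is full-dimensional and that $d$ \emph{linearly} independent points can be found on each facet, whereas your affine-hull equations and kernel-basis construction handle the degenerate cases explicitly.
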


\subparagraph{\bf {Size of the least $\epsilon$-fixed point.}}
The following theorem bounds the size of the least fixed point of the negotiation function.
As we used the notation $\bx$ for tuples so far, we use the notation $\bbx$ for tuples of tuples.

\begin{thm}[App.~\ref{pf_size_lambda}] \label{thm_size_lambda}
    There exists a polynomial $P_3$ such that for every mean-payoff game $G$, the least $\epsilon$-fixed point $\lambda^*$ of the negotiation function has size $\lv \lambda^* \rv \leq P_3(\lv G \rv + \lv \epsilon \rv)$.
\end{thm}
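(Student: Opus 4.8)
The plan is to show that the set of $\epsilon$-fixed points of the negotiation function is, up to restricting to the finite values, a finite union of polyhedra whose defining inequations all have polynomially bounded size, and then to apply Corollary~\ref{cor_size_vertices} to conclude that its minimal point (the least $\epsilon$-fixed point) has polynomial size. First I would deal with the infinite values: for each vertex $v$, whether $\lambda^*(v) \in \{-\infty, +\infty\}$ or is finite is determined combinatorially, and by the remark following the definition of the negotiation function, $\nego(\lambda)(v) = +\infty$ exactly when Prover has no $\lambda$-consistent play to propose; one first argues that the set $S$ of vertices on which $\lambda^*$ takes a finite value, together with those values, is determined by a choice of a "finite-support pattern", and there are only finitely many such patterns (bounded by $3^{|V|}$, which does not matter since the number of polyhedra is irrelevant by the remark after Corollary~\ref{cor_size_vertices}). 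Restricting attention to one such pattern, we are reduced to bounding the size of a point in $\R^S$.

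Next, the key step is to express "being an $\epsilon$-fixed point" by linear inequations with controlled coefficients. Using the concrete negotiation game $\Conc_{\lambda i}(G)$, the value $\nego(\lambda)(v) = \val_\C(\Conc_{\lambda i}(G)_{\|(v,\{v\})})$ is the value of a finite multi-mean-payoff zero-sum game in which Challenger has a memoryless optimal strategy (as recalled just before Section~4). Fixing Challenger's memoryless strategy turns the game into a one-player (Prover-controlled) mean-payoff problem whose optimal value, for each fixed $\lambda$, is attained on a simple cycle reachable under the $\lambda$-consistency constraint; by Lemma~\ref{lm_dseal} together with Lemma~\ref{lm_dseal_size} and Lemma~\ref{lm_conv_to_system}, the set of achievable payoff vectors in a strongly connected component is a polyhedron with inequations of polynomially bounded size (the rewards $r_i$ and the thresholds are part of the input $G$). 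The condition $\nego(\lambda)(v) \in [\lambda(v) - \epsilon, \lambda(v)+\epsilon]$ then translates, after the combinatorial choices of Challenger's memoryless strategies (one per $(v,i)$) and of the relevant simple cycle, into a conjunction of linear inequations in the unknowns $(\lambda(v))_{v \in S}$ and the coordinates of the payoff witnesses, each inequation having coefficients among the reward values, the threshold values, and $\pm 1, \pm\epsilon$ — hence of size polynomial in $\lv G \rv + \lv \epsilon \rv$. Taking the union over all these finitely many combinatorial choices exhibits the set of $\epsilon$-fixed points (with the given finite-support pattern) as a finite union of polyhedra, each cut out by inequations of polynomially bounded size.

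Finally, $\lambda^*$ is by Lemma~\ref{lm_spe} the least element (in the pointwise order) of this set, so it lies on the boundary in every coordinate direction and is therefore a vertex of one of the polyhedra in the union (a standard fact: the componentwise minimum over a polyhedron, when it exists, is attained at a vertex). Corollary~\ref{cor_size_vertices} then gives $\lv \lambda^* \rv \leq P_1\big(\max_{(\ba,b)} \lv (\ba,b)\rv\big)$, and composing $P_1$ with the polynomial bound on the inequation sizes yields the desired polynomial $P_3$ with $\lv \lambda^* \rv \leq P_3(\lv G \rv + \lv \epsilon \rv)$. The main obstacle I expect is the second step: carefully verifying that after fixing Challenger's memoryless strategy, the achievable-payoff description remains a polyhedron whose inequations are polynomially bounded even though $\lambda$ itself (the unknown) appears in the $\lambda$-consistency constraints inside the concrete game's payoff definition — one must check that these constraints enter linearly in $\lambda$ (they do, since $\lambda$-consistency of a suffix is a conjunction of inequalities $\mu_j(\cdot) \geq \lambda(\rho_n)$), so that the coupling between the unknowns $(\lambda(v))_v$ and the payoff-witness variables stays within the linear framework to which Lemma~\ref{lm_system_eq_to_point} applies.
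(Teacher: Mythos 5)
Your proposal follows essentially the same route as the paper's proof: fix Challenger's memoryless optimal strategies in the concrete negotiation games, describe the achievable consistent payoff vectors per strongly connected component as polyhedra with polynomially bounded inequations via Lemmas~\ref{lm_dseal}, \ref{lm_dseal_size} and \ref{lm_conv_to_system}, observe that the $\lambda$-consistency constraints enter linearly so that the $\epsilon$-fixed-point condition couples the unknowns $(\lambda(v))_v$ with the payoff variables into a finite union of polyhedra, and conclude with Corollary~\ref{cor_size_vertices} applied to a vertex realizing the least element (the paper phrases this last step as minimizing the linear functional $\bbx \mapsto \sum_v \lambda_{\bbx}(v)$, which is equivalent to your componentwise-minimum argument). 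The only divergence is your explicit preliminary treatment of the $\pm\infty$ values via finite-support patterns, a point the paper's proof glosses over; this is a harmless (indeed slightly more careful) addition.
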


\begin{proof}[Proof sketch]
    It has been proved in \cite{Concur} that Challenger has a memoryless optimal strategy in every concrete negotiation game.
    Given a requirement $\lambda$, a player $i$, a state $v \in V_i$ and a memoryless strategy $\tau_\C$, we can construct the set of payoff vectors $\mu(\dpi)$, where $\pi$ is a play in $\Conc_{\lambda i}(G)_{\|(v, \{v\})}$ compatible with $\tau_\C$, as a union of polytopes defined using Lemma~\ref{lm_dseal}.
    If we intersect the upward closures of those sets, then $\nego(\lambda)(v)$ is equal to the least value $x_i$, where $\bx$ belongs to that intersection.
    Therefore, if $X_\lambda \subseteq \R^{V \times \Pi}$ is the product of those intersections, then for each $i$ and $v \in V_i$, we have $\nego(\lambda) = \inf\{x_{vi} ~|~ \bbx \in X_\lambda\}$.

    To each tuple of tuples $\bbx \in \R^{V \times \Pi}$, we associate the requirement $\lambda_{\bbx}$ defined by $\lambda_{\bbx}(v) = x_{vi} - \epsilon$ for each $i \in \Pi$ and $v \in V_i$.
    Then, we define $X = \left\{\bbx ~\left|~ \bbx \in X_{\lambda_{\bbx}} \right.\right\}$, and we show that any requirement $\lambda$ is an $\epsilon$-fixed point of the negotiation function if and only $\lambda = \lambda_{\bbx}$ for some $\bbx \in X$.
    Then, the set $X$ is itself a union of polyhedra, hence the linear mapping $\bbx \mapsto \sum_v \lambda_{\bbx}(v)$ has its minimum over $X$ on some vertex $\bbx$ of one of those polyhedra.
    The requirement $\lambda^*$ is equal to $\lambda_{\bbx}$, hence its size can be bounded using Corollary~\ref{cor_size_vertices}.
\end{proof}

    \section{Constrained existence of a $\lambda$-consistent play}

We claim that a non-deterministic algorithm can recognize the positive instances of the $\epsilon$-SPE threshold problem by guessing an $\epsilon$-fixed point $\lambda$ of the negotiation function.
Once $\lambda$ has been guessed, according to Lemma~\ref{lm_spe}, two assertions must be proved: on the one hand, that there exists a $\lambda$-consistent play between the two desired thresholds, and on the other hand, that $\lambda$ is actually an $\epsilon$-fixed point of the negotiation function.
The latter will be handled later through the concept of \emph{reduced negotiation game}.
Now, we tackle the former, and provide the second piece of our notion of witness: to prove the existence of a $\lambda$-consistent play $\rho$ with $\bx \leq \mu(\rho) \leq \by$, we need to guess the sets $W = \Inf(\rho)$ and $W' = \Occ(\rho)$, and a tuple of tuples $\bbalpha \in [0, 1]^{\Pi \times \SC(W)}$ indicating how $\rho$ combines the cycles of $W$, i.e. such that:
$$\mu(\rho) = \left( \min_{j \in \Pi} \sum_{c \in \SC(W)} \alpha_{jc} \MP_i(c) \right)_i.$$

\begin{thm} \label{thm_constrained_existence_lambda_cons}
    There exists a polynomial $P_4$ such that for every mean-payoff game $G_{\|v_0}$, for every $\bx, \by \in \R^V$, and for every requirement $\lambda$ on $G$, there exists a $\lambda$-consistent play $\rho$ in $G_{\|v_0}$ satisfying $\bx \leq \mu(\rho) \leq \by$ if and only if there exist two sets $W \subseteq W' \subseteq V$ and a tuple of tuples $\bbalpha \in [0, 1]^{\Pi \times \SC(W)}$ such that:
    \begin{itemize}
        \item the set $W$ is strongly connected in $(V, E)$, and accessible from the state $v_0$ using only and all the states of $W'$;
        
        \item for each player $i$, we have $\sum_c \alpha_{ic} = 1$, and:
        $$x_i \leq \min_{j \in \Pi} \sum_{c \in \SC(W)} \alpha_{jc} \MP_i(c) \leq y_i;$$
        
        \item for each player $i$ and $v \in W \cap V_i$, we have:
        $$\min_{j \in \Pi} \sum_{c \in \SC(W)} \alpha_{jc} \MP_i(c) \geq \lambda(v);$$
        
        \item $\lv \bbalpha \rv \leq P_4(\lv G, \bx, \by, \lambda \rv)$.
    \end{itemize}
\end{thm}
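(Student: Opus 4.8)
The plan is to prove the two implications of the equivalence separately, and then to add a compression step turning any valid $\bbalpha$ into one of polynomial size; the compression is where the real work is.

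For the forward implication, given a $\lambda$-consistent play $\rho$ with $\bx \leq \mu(\rho) \leq \by$, I would set $W := \Inf(\rho)$ and $W' := \Occ(\rho)$. That $W$ is strongly connected, and is reachable from $v_0$ through exactly the states of $W'$, follows from a suitable prefix of $\rho$. Since mean payoffs defined with a limit inferior do not depend on finite prefixes, $\mu(\rho) = \mu(\rho_{\geq N})$ for $N$ large enough that $\rho_{\geq N}$ stays in $W$; as $\rho_{\geq N}$ is a play of the strongly connected subgame induced on $W$, Lemma~\ref{lm_dseal} gives $\mu(\rho) \in \dseal\big(\Conv_{c \in \SC(W)} \MP(c)\big)$. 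Unfolding Definition~\ref{def_dseal} then produces, for each $j \in \Pi$, a tuple $\balpha^{(j)} = (\alpha_{jc})_{c \in \SC(W)}$ with $\sum_c \alpha_{jc} = 1$ and $\mu_i(\rho) = \min_{j \in \Pi} \sum_c \alpha_{jc} \MP_i(c)$ for every player $i$ (pick, for each coordinate, a point of the finite set witnessing the sealing that attains the minimum there). The tuple $\bbalpha = (\alpha_{jc})_{j,c}$ then satisfies the second item by the assumption $\bx \leq \mu(\rho) \leq \by$, and the third because any $v \in W \cap V_i$ equals some $\rho_n$, so $\lambda$-consistency of $\rho$ forces $\min_j \sum_c \alpha_{jc} \MP_i(c) = \mu_i(\rho_{\geq n}) \geq \lambda(v)$.

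For the converse, given $W \subseteq W'$ and $\bbalpha$ meeting the first three items, I would put $z_i := \min_j \sum_c \alpha_{jc} \MP_i(c)$, note that each $\sum_c \alpha_{jc} \MP(c)$ lies in $\Conv_{c \in \SC(W)} \MP(c)$ so that $\bz$ lies in its downward sealing, and invoke Lemma~\ref{lm_dseal} on the subgame induced on $W$ to obtain a play $\eta$ there with $\mu(\eta) = \bz$ — concretely a shuffling of the simple cycles of $W$ arranged in $|\Pi|$ interleaved phases of increasing length, the $j$-th phase following the cycle distribution $\balpha^{(j)}$, so that the limit-inferior average of each player $i$ equals $\min_j \sum_c \alpha_{jc} \MP_i(c)$. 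Prefixing $\eta$ by a path from $v_0$ into $W$ through the states of $W'$ yields a play $\rho$ with $\mu(\rho) = \bz$ (again by prefix-independence), hence with $\bx \leq \mu(\rho) \leq \by$, and $\lambda$-consistent since every state $v \in V_i$ that $\rho$ visits satisfies $\mu_i(\rho_{\geq n}) = z_i \geq \lambda(v)$ by the third item.

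It remains to show a valid $\bbalpha$ of size polynomial in $\lv G, \bx, \by, \lambda \rv$ exists, and this is the delicate point, because $\card \SC(W)$ is in general exponential in $\lv G \rv$, so $\bbalpha$ cannot be controlled coordinate by coordinate. The idea is to first normalize: replacing each $\balpha^{(j)}$ by the $\balpha^{(j')}$ that minimizes $\sum_c \alpha_{j'c} \MP_j(c)$ leaves $\bz$ unchanged and yields $\sum_c \alpha_{ic} \MP_i(c) \leq \sum_c \alpha_{jc} \MP_i(c)$ for all $i, j$; adding these inequalities as constraints makes the operator $\min_j$ redundant, so all conditions of the first three items become linear (in)equations in the unknowns $(\alpha_{jc})$, with coefficients among the rationals $\MP_i(c), x_i, y_i, \lambda(v)$, each of size polynomial in $\lv G, \bx, \by, \lambda \rv$. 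The set of tuples satisfying them is a polytope $\mathcal{P}$, non-empty exactly when some valid $\bbalpha$ exists, and I would take a vertex $\bbalpha^\star$ of it. At such a vertex all but $O(|\Pi|^2 + |V|)$ of the tight constraints are the sign constraints $\alpha_{jc} \geq 0$ (equivalently, by Carath\'eodory's theorem, each $\balpha^{(j)}$ may be supported on at most $|\Pi| + 1$ simple cycles), so $\bbalpha^\star$ has only polynomially many nonzero entries; fixing the remaining (exponentially many) coordinates to $0$ turns $\bbalpha^\star$ into the unique solution of a low-dimensional subsystem whose coefficients still have polynomial size, whence $\bbalpha^\star$ has polynomial size by Lemma~\ref{lm_system_eq_to_point} (equivalently Corollary~\ref{cor_size_vertices}). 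Encoding $\bbalpha^\star$ by its support — the relevant simple cycles, each given by its edge list, together with the associated values — then yields a polynomial-size witness, giving the desired $P_4$. The one genuinely non-routine ingredient is this combination of the normalization, to linearize the $\min_j$, with the vertex/Carath\'eodory sparsity argument, to defeat the exponential number of simple cycles.
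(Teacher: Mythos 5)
Your two equivalence directions coincide with the paper's: the same choice $W=\Inf(\rho)$, $W'=\Occ(\rho)$, the same appeal to Lemma~\ref{lm_dseal} together with the identification of $\dseal(\Conv X)$ with coordinatewise minima of convex combinations, and the same prefix-independence argument for the access path. Where you genuinely diverge is the polynomial bound on $\lv\bbalpha\rv$. The paper stays in the payoff space $\R^\Pi$: it forms the polytope $Z$ of payoffs of $\lambda$-consistent plays with the prescribed $\Inf$, $\Occ$ and thresholds, bounds its defining inequations via Lemmas~\ref{lm_conv_to_system} and~\ref{lm_dseal_size}, extracts a small vertex $\bz$ by Corollary~\ref{cor_size_vertices}, and only then asserts that some $\bbalpha$ realizing $\bz$ is small ``by Corollary~\ref{cor_size_vertices} again''. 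You work instead directly in the exponential-dimensional $\alpha$-space: you linearize the $\min_j$ by replacing each row $\balpha^{(j)}$ with the row minimizing player $j$'s own payoff (a normalization that indeed preserves $\bz$ and makes the minimum attained at $j=i$), and then take a vertex of the resulting polytope, whose support is polynomial by the basic-feasible-solution/Carath\'eodory count. Your route is, if anything, the more complete one: the paper's final step quietly needs exactly the sparsity argument you spell out (its stated bound even involves $\sum_{c\in\SC(W)}\lv\MP_i(c)\rv$, a sum with exponentially many terms), whereas your normalization plus vertex argument makes the polynomial support explicit and justifies encoding $\bbalpha$ by its support. One blemish you share with the paper: in the ``if'' direction, $\lambda$-consistency of $\rho=h\eta$ requires $z_i\geq\lambda(v)$ also for states $v\in (W'\setminus W)\cap V_i$ visited by the access path, which the third item as stated (quantified over $W\cap V_i$) does not give; the paper's own polytope $Z$ quantifies over $W'\cap V_i$, so the statement's third item should read $W'$, and your proof should use that corrected form.
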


\begin{proof}
    Let us first notice that given a set $X \subseteq \R^\Pi$, the elements of the set $\dseal (\Conv X)$ are exactly the tuples of the form:
    $$\left( \min_{j \in \Pi} \sum_{x \in X} \alpha_{jx} x \right)_{i \in \Pi}$$
    for some tuple $\bbalpha \in \R^{\Pi \times X}$ satisfying $\sum_x \alpha_{ix} = 1$ for each $x$.

    Now, let us assume that $W$, $W'$ and $\bbalpha$ exist.
    Then, there exists a play $\eta$ with $\Occ(\eta) = \Inf(\eta) = W$ with payoff vector:
    $$\mu(\eta) = \left( \min_{j \in \Pi} \sum_{x \in X} \alpha_{jx} x \right)_{i \in \Pi}.$$
    Moreover, since $W$ is accessible from $v_0$ using all and only the vertices of $W'$, there exists a history $h \eta_0$ from $v_0$ to $\eta_0$ with $\Occ(h) = W'$.
    Then, the play $\rho = h \eta$ is $\lambda$-consistent and satisfies $\bx \leq \mu(\rho) \leq \by$.
    
    Conversely, if the play $\rho$ exists: let $W = \Inf(\rho)$ and $W' = \Occ(\rho)$.
    The polytope:
    $$Z = \left\{\mu(\eta) ~\left|~ \begin{matrix}
        \eta \in \lCons(G_{\|v_0}), \\
        \Inf(\eta) = W, \\
        \Occ(\eta) = W', \\
        \mathrm{and~} \bx \leq \mu(\eta) \leq \by
    \end{matrix} \right. \right\}
    = \left\{ \bz \in \dseal \left( \underset{c \in \SC(W)}{\Conv} \MP(c) \right) ~\left|~ \begin{matrix}
        \bx \leq \bz \leq \by, \mathrm{~and} \\
        \forall i, \forall v \in W' \cap V_i, z_i \geq \lambda(v)
    \end{matrix} \right. \right\}$$
    (the equality holds by Lemma~\ref{lm_dseal}) is nonempty (it contains at least $\mu(\rho)$).
    By Lemma~\ref{lm_conv_to_system}, the set $\Conv_{c \in SC(W)} \MP(c)$ is defined by a system of inequations which all have size $\lv (\ba, b) \rv \leq P_2\left(\max_c \lv \MP(c) \rv\right)$.
    Since by Lemma~\ref{lm_dseal_size}, the inequations defining $\dseal \left( \Conv_{c \in SC(W)} \MP(c)\right)$ are not larger, there exists a polynomial $P_6$, independent of $G, \bx, \by$ and $\lambda$, such that $Z$ is defined by a system of inequations $\Sigma$ such that for every $(\ba, b) \in \Sigma$, we have $\lv (\ba, b) \rv \leq P_6(\lv (G, \bx, \by, \lambda) \rv)$. 
    Therefore, by Corollary~\ref{cor_size_vertices}, the polytope $Z$ admits a vertex $\bz$ of size $\lv \bz \rv \leq P_1(P_6(\lv (G, \bx, \by, \lambda) \rv))$.
    
    Then, since we have $\bz \in \dseal \left( \underset{c \in \SC(W)}{\Conv} \MP(c) \right)$, that vertex is, according to Definition~\ref{def_dseal}, of the form:
    $$\bz = \left( \min_j \sum_c \alpha_{jc} \MP_j(c) \right)_i$$
    for some tuple of tuples $\bbalpha \in [0, 1]^{\Pi \times \SC(W)}$ with $\sum_c \alpha_{ic} = 1$ and having, by Corollary~\ref{cor_size_vertices} again, size $\lv \bbalpha \rv \leq P_1\left( \max_{i \in \Pi} \sum_{c \in \SC(W)} \lv \MP_i(c) \rv + \lv z_i \rv\right)$, i.e. $\lv \bbalpha \rv \leq P_4(\lv (G, \bx, \by, \lambda) \rv)$ for some polynomial $P_4$ independent of $G, \bx, \by$ and $\lambda$.
\end{proof}

Now, we need the third piece of our witness, which will be evidence of the fact that the requirement $\lambda$ is an $\epsilon$-fixed point of the negotiation function.

    \section{The reduced negotiation game} \label{sec_reduced_game}

The abstract negotiation game has an infinite (and uncountable) state space in general, and the concrete negotiation game has an exponential one.
In \cite{CSL}, the infiniteness of the abstract negotiation game has been handled in the case of parity games, by proving that Prover has an optimal strategy that is memoryless, and that proposes only simple plays with a finite representation.
Unfortunately, this result does not apply to mean-payoff games, where Prover needs infinite memory in general.
That fact is illustrated in the next example.

\begin{exa}
    In the game of Figure~\ref{fig_inf_memory}, the requirement $\lambda$ defined by $\lambda(a) = \lambda(b) = 1$ is a fixed point of the negotiation function (it is actually the least fixed point).
    Indeed, from the state $a$ (the situation is symmetrical from the state $b$), consider the strategy for Prover that proposes always, from the state $v$, the play $v b^{|h|^2} (a^3b^3)^\omega$, where $h$ is the history that has already been constructed by her proposals and Challenger's deviations.
    If Challenger accepts such a play, then he gets the payoff $1$.
    If he deviates infinitely often, then Prover loops longer and longer on the state $b$, and he also gets the payoff $1$.
    The loop on $b$ corresponds to what we will call later a \emph{punishing cycle}.
    Now, if Prover uses only finite memory, Challenger can get a payoff better than $1$ by always deviating and go to $b$ as soon as he can: then, edges giving to player $\Circle$ the reward $2$ will occur with a nonzero frequency.
\end{exa}

However, the plays proposed by Prover in the previous example are very similar: only the number of repetitions of the loop $b$ does increase.
More generally, one observes that Prover can play optimally while always proposing a play of the form $h c^n \rho$, where $h$, $c$ and $\rho$ are constant, and only the number $n$ increases, quadratically with the time --- so that Challenger's payoff is dominated by the mean-payoff $\MP_i(c)$ if he deviates infinitely often.

\begin{defi}[Punishment family]
    A \emph{punishment family} is a set of plays of the form:
    $$\left\{\left. h c^n \rho ~\right|~ n > 0, \mu(\rho) = \bx, \Occ(\rho) = W \right\}$$
    where $h$ is a simple history, $c$ is a nonempty simple cycle, and where $W \subseteq V$ and $\bx \in \R^\Pi$.
    The cycle $c$ is called its \emph{punishing cycle}.
    For every $\beta \in \N$, a \emph{$\beta$-punishment family} is a punishment family with $\lv \bx \rv \leq \beta$.
    A $\beta$-punishment family is represented by the data $h$, $c$, $\mu(\rho)$ and $\Occ(\rho)$, and that representation has a size smaller than or equal to the quantity $3 \card V \lceil \log_2(\card V + 1) \rceil + \beta$.
\end{defi}

We write $h c^\infty \rho$ for the punishment family $\{ h c^n \rho' ~|~ n > 0, \mu(\rho') = \mu(\rho), \Occ(\rho') = \Occ(\rho)\}$.
Beware that the play $\rho$ matters only for its payoff vector and the vertices it traverses: if $\Occ(\rho) = \Occ(\rho')$ and $\mu(\rho) = \mu(\rho')$, then $h c^\infty \rho = h c^\infty \rho'$.
We write $\mu(h c^\infty \rho)$ for the common payoff vector of all elements of $h c^\infty \rho$, and we will say that $h c^\infty \rho$ is $\lambda$-consistent if all its elements are (or equivalently, if one of its elements is).
Let us clarify that a punishment family is not an equivalence class: for example, in the game of Figure~\ref{fig_inf_memory}, the play $ab^\omega$ belongs to both $a^\infty b^\omega$ and $ab^\infty b^\omega$, which are distinct.
We can now define the \emph{reduced negotiation game}, where Prover proposes $\beta$-punishment families instead of plays.

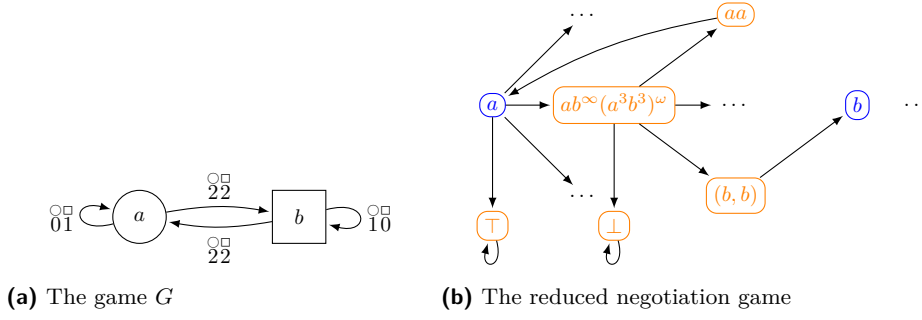
\begin{figure}
    \centering
        \begin{subfigure}[b]{0.4\textwidth}
        \centering
		\begin{tikzpicture}[->,>=latex,shorten >=1pt, initial text={}, scale=0.7, every node/.style={scale=0.8}]
		\node[state] (a) at (0, 0) {$a$};
		\node[state, rectangle] (b) at (3, 0) {$b$};
		\path[bend left = 10] (a) edge node[above] {$\stackrel{\playcircle}{2}\stackrel{\Box}{2}$} (b);
		\path[bend left = 10] (b) edge node[below] {$\stackrel{\playcircle}{2}\stackrel{\Box}{2}$} (a);
		\path (a) edge [loop left] node {$\stackrel{\playcircle}{0}\stackrel{\Box}{1}$} (a);
		\path (b) edge [loop right] node {$\stackrel{\playcircle}{1}\stackrel{\Box}{0}$} (b);
		\end{tikzpicture}
		\caption{The game $G$}
		\label{fig_inf_memory}
	\end{subfigure}
	\begin{subfigure}[b]{0.5\textwidth}
	\centering
	\begin{tikzpicture}[->, >=latex,shorten >=1pt, initial text={}, scale=0.8, every node/.style={scale=0.8}, rounded corners]
	
	\node[draw, blue] (a) at (0, 0) {$a$};
	\node[draw, orange] (red) at (2, 0) {$a b^\infty (a^3b^3)^\omega$};
	\node[draw, orange] (ac) at (4, 1.5) {$aa$};
	\node[draw, orange] (abc) at (4, -1.5) {$(b, b)$};
	\node[draw, blue] (c) at (6, 0) {$b$};
	\node (dots1) at (1.5, 1.5) {\dots};
	\node (dots2) at (1.5, -1.5) {\dots};
	\node (dots3) at (4, 0) {\dots};
	\node (dots4) at (7, 0) {\dots};
	\node[draw, orange] (bot) at (0, -2) {$\top$};
	\node[draw, orange] (top) at (2, -2) {$\bot$};
	
	\path (a) edge (red);
	\path (red) edge (ac);
	\path (red) edge (abc);
	\path[bend right=10] (ac) edge (a);
	\path (abc) edge (c);
	\path (a) edge (dots1);
	\path (a) edge (dots2);
	\path (red) edge (dots3);
	\path (a) edge (bot);
	\path (bot) edge[loop below] (bot);
	\path (red) edge (top);
	\path (top) edge[loop below] (top);
	\end{tikzpicture}
\caption{The reduced negotiation game} \label{fig_reduced}
\end{subfigure}
    \caption{A game on which Prover needs infinite memory}
\end{figure}

\begin{defi}[Reduced negotiation game] \label{def_reduced_game}
    Let $G$ be a mean-payoff game, let $\lambda$ be a requirement, let $i$ be a player, let $v_0 \in V_i$ and let $\beta$ be a natural integer.
    The corresponding \emph{reduced negotiation game} is the game $\Red^\beta_{\lambda i}(G)_{\|v_0} = (\{\P, \C\}, S, (S_\P, S_\C), \Delta, \nu)_{\|v_0}$, where:
    \begin{itemize}
        \item the player $\P$ is called \emph{Prover}, and the player $\C$ \emph{Challenger};
        
        \item the states controlled by Prover are the states of $G$, i.e. $S_\P = V$;
        
        \item the states controlled by Challenger are the states of the form $[h c^\infty \rho]$, $(c, u)$ or $[h'v]$, where:
        \begin{itemize}
            \item $h c^\infty \rho$ is a $\lambda$-consistent $\beta$-punishment family,
            
            \item there exists a state $\rho_k \in V_i$ along the play $\rho$ such that $\rho_k u \in E$,
            
            \item and $h'v$ is a history such that $h'$ is a prefix of the history $hc$, and $\last(h') \in V_i$;
        \end{itemize}
        plus two additional states, written $\top$ and $\bot$;
        
        \item with the same notations, the set $\Delta$ contains the transitions of the form:
        \begin{itemize}
            \item $v [h c^\infty \rho]$ (Prover proposes a punishment family);
            
            \item $v \bot$ (Prover gives up);
            
            \item $[h c^\infty \rho] \top$ (Challenger accepts Prover's proposal);
            
            \item $[h c^\infty \rho] [h'v]$ (Challenger deviates before the punishing cycle --- \emph{pre-cycle deviation});
            
            \item $[h c^\infty \rho] (c, u)$ (Challenger deviates after the punishing cycle --- \emph{post-cycle deviation});
            
            \item $[h' v] v$ and $(c, u) u$ (Prover has now to propose a new play);
            
            \item $\top\top$ and $\bot\bot$ (the play is over);
        \end{itemize}
        
        \item given a history $H = H_0 \dots H_n \in \Hist \Red^\beta_{\lambda i}(G)$ that does not reach the state $\bot$, we write $\dH = h^{(1)} \dots h^{(n)}$ the history or play in $G$ defined by, for each $k$:
        \begin{itemize}
            \item if $H_{k-1}H_k = v [h c^\infty \rho]$, then $h^{(k)}$ is empty;
            
            \item if $H_{k-1}H_k = [h c^\infty \rho] \top$, then $h^{(k)} \dots h^{(n)} = h c^{\left|h^{(1)} \dots h^{(k-1)} h\right|^2} \rho$ (the number of times the cycle $c$ is repeated depends quadratically on the time);
            
            \item if $H_{k-1}H_k = [h c^\infty \rho] [h'v]$, then $h^{(k)} = h'$;
            
            \item if $H_{k-1}H_k = [h c^\infty \rho] (c, v)$, then $h^{(k)} = h c^{\left|h^{(1)} \dots h^{(k-1)} h\right|^2} h'$, where $h'$ is among the shortest histories such that $\Occ(h') \subseteq \Occ(\rho)$,  $\last(h') \in V_i$ and $\last(h') v \in E$;
            
            \item if $H_{k-1}H_k = [h' v] v$ or $(c, v) v$, then $h^{(k)}$ is empty;
        \end{itemize}
        
        and that definition is naturally extended to plays: for example, if $G$ is the game of Figure~\ref{fig_inf_memory} and if $\pi = a [ab^\infty a^\omega] [aa] a [ab^\infty a^\omega] (b, b) b [b^\infty a^\omega] \top^\omega$, then $\dpi = a \cdot ab^{2^2} \cdot a b^{7^2} a^\omega = a^2b^4ab^{49}a^\omega$;
        
        \item the payoff function $\nu$ is defined, for each play $\pi$, by $\nu_\C(\pi) = -\nu_\P(\pi) = +\infty$ if $\pi$ reaches the state $\bot$, and $\nu_\C(\pi) = -\nu_\P(\pi) = \mu_i(\dpi)$ otherwise.
    \end{itemize}
\end{defi}

\begin{exa}
    Figure~\ref{fig_reduced} illustrates a (small) part of the game $\Red_{\lambda \playcircle}^2(G)_{\|a}$, where $G$ is the game of Figure~\ref{fig_inf_memory}, and $\lambda(a) = \lambda(b) = 1$.
    Blue states are owned by Prover, orange ones by Challenger.
    When Prover proposes the punishment family $ab^\infty (a^3b^3)^\omega$, the function $\nu_\C$ interprets it as the play $ab^{|h|^2} (a^3b^3)^\omega$, where $h$ is the history that has already been constructed so far.
\end{exa}

\begin{rk}
    Reduced negotiation games are Borel, and are played on a finite graph.
\end{rk}

\subparagraph{\bf {Link with the negotiation function.}}

We will now prove that the reduced negotiation game captures the negotiation function, as do the abstract and concrete ones.
For that purpose, we first need the following key result.

\begin{lm} \label{lm_reduced_memoryless}
    In a reduced negotiation game, Prover has a memoryless optimal strategy.
\end{lm}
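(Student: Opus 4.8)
The plan is to present $\Red^\beta_{\lambda i}(G)_{\|v_0}$ as a two-player zero-sum game played on a finite graph to which Lemma~\ref{lm_memoryless} applies, with Prover in the role of the player whose payoff function is concave. Since, by the preceding remark, reduced negotiation games are Borel and finite-state, and since $\nu_\P = -\nu_\C$, the whole argument reduces to proving that Challenger's payoff function $\nu_\C$ is \emph{convex}: for all plays $\pi,\pi'$ of $\Red^\beta_{\lambda i}(G)$ and every shuffling $\theta$ of $\pi$ and $\pi'$, one has $\nu_\C(\theta) \geq \min\{\nu_\C(\pi),\nu_\C(\pi')\}$. Indeed this says exactly that $\nu_\P$ is concave, and then Lemma~\ref{lm_memoryless} yields a memoryless optimal strategy for Prover. (Concretely, such a strategy amounts to fixing, for each state $v \in V$, a single $\lambda$-consistent $\beta$-punishment family to propose from $v$, or the choice to give up.)

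To prove convexity of $\nu_\C$, I would fix a shuffling $\theta$ of $\pi$ and $\pi'$ and let $s$ be the state at which it is performed, i.e. the state visited infinitely often along $\theta$ by the blocks coming from $\pi$ and by those coming from $\pi'$. The states $\top$ and $\bot$ are absorbing, so no shuffling can be performed at them, since that would require infinitely many blocks leaving and returning to $s$; hence $s \notin \{\top,\bot\}$, and consequently $\pi$, $\pi'$ and $\theta$ all avoid $\top$ and $\bot$ entirely. In particular none of them reaches $\bot$, so $\nu_\C(\pi) = \mu_i(\dpi)$, $\nu_\C(\pi') = \mu_i(\dot{\pi}')$ and $\nu_\C(\theta) = \mu_i(\dot\theta)$; and none of them reaches $\top$, so all three contain infinitely many deviations and $\dpi, \dot{\pi}', \dot\theta$ are genuine infinite plays of $G$. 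Decomposing $\pi$ and $\pi'$ into the finite blocks delimited by consecutive visits of $s$, the play $\theta$ interleaves these blocks; applying the operator $H \mapsto \dH$ shows that $\dot\theta$ is obtained by interleaving the corresponding blocks of $\dpi$ and $\dot{\pi}'$ in $G$, the only difference being that a block ending with a post-cycle deviation has its punishing cycle $c$ repeated $c^{M}$ instead of $c^{N}$ with $M \geq N$, because in $\theta$ the history constructed so far is longer and the number of repetitions is its squared length.

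The main obstacle is to show that this extra inflation does not decrease the liminf mean-payoff, i.e. that $\mu_i(\dot\theta) \geq \min\{\mu_i(\dpi),\mu_i(\dot{\pi}')\}$. The key observation is that along any such play the blocks $c^{N}$ have $N$ equal to the squared length of the current prefix, so $N\,|c|$ dominates the length of everything that precedes the block and the running average at the end of the block equals $\MP_i(c)$ up to a vanishing error; hence $\mu_i(\dpi)$ is governed by the mean-payoffs of the punishing cycles traversed by $\pi$ — and, if $\pi$ eventually performs only pre-cycle deviations, by the liminf average of the (bounded) remaining deviation blocks, a case handled directly by the convexity of $\mu_i$ in $G$ — and likewise for $\dot{\pi}'$. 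Since $\dot\theta$ traverses exactly the same punishing cycles, interleaved, each repeated at least as many times, the same estimate bounds $\mu_i(\dot\theta)$ from below by $\min\{\mu_i(\dpi),\mu_i(\dot{\pi}')\}$. This last step is essentially a bookkeeping argument about liminf mean-payoffs of concatenations of blocks whose lengths grow super-linearly along the play, and I expect it to be cleanest to phrase it through an auxiliary lemma expressing $\mu_i(\dpi)$ in terms of the sequence of punishing cycles used by $\pi$ together with the tail behaviour of $\pi$; once this is in place, convexity of $\nu_\C$, hence concavity of $\nu_\P$, hence — by Lemma~\ref{lm_memoryless} — the existence of a memoryless optimal strategy for Prover, follow.
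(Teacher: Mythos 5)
Your proposal follows the same overall route as the paper's: present the reduced negotiation game as a finite-state Borel zero-sum game and invoke Lemma~\ref{lm_memoryless}, the whole content being the behaviour of $\nu_\C$ under shuffling. Two points where you are more careful than the paper's own (very terse) proof are worth highlighting. First, you state the inequality in the direction that Lemma~\ref{lm_memoryless} actually requires for \emph{Prover}: since Prover maximizes $\nu_\P = -\nu_\C$, one needs $\nu_\P$ concave, i.e. $\nu_\C(\theta) \geq \min\{\nu_\C(\pi), \nu_\C(\pi')\}$, and this is also what the convexity of liminf mean-payoff delivers. Second, and more substantially, the paper simply asserts that $\dxi$ is a shuffling of $\dpi$ and $\dchi$, whereas you correctly observe that this is not literally true: the number of repetitions of a punishing cycle in $\dH$ is the square of the length of the $G$-history built so far, so in the shuffled play each cycle block is inflated relative to the corresponding block of $\dpi$ or $\dot{\pi}'$. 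The additional estimate you sketch --- that the running average at the end of a block $c^N$ with $N$ quadratic in the prefix length converges to $\MP_i(c)$, that such cycles govern the liminf when post-cycle deviations occur infinitely often, and that the bounded pre-cycle blocks are handled by ordinary convexity of $\mu_i$ in $G$ --- is genuinely needed to close this gap, and it is consistent with the quantitative computation carried out at the end of the proof of Theorem~\ref{thm_reduced_game}. One small slip: a shuffling \emph{can} formally be anchored at $\top$ or $\bot$, it is just degenerate --- both plays are then absorbed there, so $\theta$ coincides with one of $\pi, \pi'$ up to a tail of self-loops and convexity is trivial in that case; your conclusion that in all remaining cases $\pi$, $\pi'$ and $\theta$ avoid $\top$ and $\bot$ entirely is then correct, since a shuffle requires both plays to visit the anchor state infinitely often.
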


\begin{proof}
    This lemma is a consequence of Lemma~\ref{lm_memoryless}: the payoff function $\nu_\C$ is concave.
    Indeed, let $\xi$ be a shuffling of two plays $\pi$ and $\chi$.
    If either $\pi$ or $\chi$ reaches the state $\bot$ (in which case both do), then we immediately have $\nu_\C(\dxi) \leq \max\{\nu_\C(\dpi), \nu_\C(\dchi)\} = +\infty$.
    Otherwise, the play $\dxi$ is a shuffling of $\dpi$ and $\dchi$, and since mean-payoff objectives defined with a limit inferior are convex, we have $\nu_\C(\dxi) \leq \max\{\nu_\C(\dpi), \nu_\C(\dchi)\}$.
\end{proof}

This lemma enables us to prove that the reduced negotiation game is equivalent to the other negotiation games.

\begin{thm} \label{thm_reduced_game}
    There exists a polynomial $P_4$ such that for every mean-payoff game $G$, every requirement $\lambda$ with rational values, each player $i$ and each $v_0 \in V_i$, for every $\beta \geq P_4(\lv G \rv + \lv \lambda \rv)$, we have $\nego(\lambda)(v_0) = \val_\C \left( \Red^\beta_{\lambda i}(G)_{\|v_0} \right).$
\end{thm}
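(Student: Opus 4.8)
The pivot of the argument is the identity $\nego(\lambda)(v_0) = \val_\C(\Abs_{\lambda i}(G)_{\|v_0}) = \val_\C(\Conc_{\lambda i}(G)_{\|(v_0, \{v_0\})})$ established in \cite{Concur}; it then suffices to show that the reduced game has the same value for Challenger, and I would prove the two inequalities separately. For the upper bound $\val_\C(\Red^\beta_{\lambda i}(G)_{\|v_0}) \le \nego(\lambda)(v_0)$, I would exhibit a Prover strategy in $\Red^\beta_{\lambda i}(G)$ keeping Challenger's payoff at most $\nego(\lambda)(v_0)$. The idea, already visible in the example preceding Definition~\ref{def_reduced_game}, is that Prover can play near-optimally in $\Abs$ while only ever proposing plays of punishment-family shape $h c^\infty \rho$: after Challenger's last deviation at a state $w$, Prover threatens with a single simple punishing cycle $c$ repeated longer and longer --- quadratically in the elapsed time, as prescribed by the interpretation $\pi \mapsto \dpi$ of Definition~\ref{def_reduced_game} --- and then settles into a $\lambda$-consistent tail whose payoff vector $\bx = \mu(\rho)$ realizes $\nego(\lambda)(w)$. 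Two facts make this work: Challenger has a \emph{memoryless} optimal strategy in $\Conc$ (\cite{Concur}), so against it the concrete game degenerates into a one-player multi-mean-payoff optimization in which optimal plays can be taken of the form $h c^n \rho$; and the quadratic repetition of $c$ guarantees that if Challenger nevertheless deviates infinitely often (post-cycle), then the $\liminf$ forces his payoff down to $\MP_i(c)$, which is no larger than $\nego(\lambda)$. The polynomial $P_4$ and the threshold on $\beta$ enter exactly here: I would invoke the polyhedral machinery of Sections~4 and~5 --- Lemma~\ref{lm_conv_to_system} and Corollary~\ref{cor_size_vertices}, as in the proof of Theorem~\ref{thm_constrained_existence_lambda_cons} --- to take the tail payoff vector $\bx$ at a vertex of the relevant polytope, so that $\lv \bx \rv$ is polynomial in $\lv G \rv + \lv \lambda \rv$; choosing $\beta$ above that bound makes all the punishment families needed into $\beta$-punishment families.

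For the lower bound $\val_\C(\Red^\beta_{\lambda i}(G)_{\|v_0}) \ge \nego(\lambda)(v_0)$, since reduced negotiation games are Borel so that Lemma~\ref{lm_borel_determinacy} applies, I would equivalently bound Prover's value from below: any Prover strategy in $\Red^\beta_{\lambda i}(G)$ lets Challenger secure at least $\nego(\lambda)(v_0)$. The key observation is that the interpretation map $\pi \mapsto \dpi$ turns a legal Prover play in $\Red^\beta$ into legal Prover behaviour in $\Abs$: a $\lambda$-consistent $\beta$-punishment family $h c^\infty \rho$ is interpreted as the genuine $\lambda$-consistent play $h c^{|\cdot|^2} \rho$, a pre-cycle deviation becomes a deviation on that play before the cycle, and a post-cycle deviation $(c, u)$ together with Prover's forced reconstruction of a shortest connecting history inside $\Occ(\rho)$ becomes a deviation after the cycle. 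So from a Prover strategy in $\Red^\beta$ I would build an $\Abs$-Prover strategy, note that Challenger's optimal behaviour in $\Abs$ only ever needs deviation moves that are available in $\Red^\beta$ (pre-cycle deviations, and post-cycle deviations, the quadratic growth of $c$ making mid-cycle deviations pointless in the limit), transport that behaviour back to a Challenger strategy in $\Red^\beta$, and conclude $\nu_\C \ge \val_\C(\Abs) = \nego(\lambda)(v_0)$. Lemma~\ref{lm_reduced_memoryless} is not logically required for this direction, but it guarantees that the value is attained and streamlines the symmetric reasoning.

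The main obstacle is the upper-bound direction, specifically the structural claim that Prover's optimal value in the abstract/concrete negotiation game is already achieved when Prover is confined to the extremely rigid punishment-family format --- a single simple punishing cycle, a simple prefix, and a tail recorded only through $(\mu(\rho), \Occ(\rho))$ --- \emph{and} that this can be done with a tail payoff vector of polynomially bounded size. Proving the first part requires carefully combining the memoryless-Challenger reduction with an analysis of optimal one-player responses in multi-mean-payoff games: why a single cycle and a fixed tail suffice, and why deviations inside the punishing cycle cannot help Challenger under the $\liminf$ semantics. The second part is where the vertex-size results of Sections~4--5 are leveraged to pin down $P_4$. The lower bound, by contrast, is mostly bookkeeping: checking that $\pi \mapsto \dpi$ commutes with the two game dynamics and that no profitable Challenger deviation in $\Abs$ is lost when one restricts to pre- and post-cycle deviations.
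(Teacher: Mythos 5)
Your overall architecture matches the paper's: the same two inequalities, the same role for the quadratically growing repetition of the punishing cycle, and the same use of Lemma~\ref{lm_conv_to_system} and Corollary~\ref{cor_size_vertices} to put the tail payoff vector at a polytope vertex and thereby fix $P_4$ and the threshold on $\beta$. But the step you yourself flag as ``the main obstacle'' --- that Prover loses nothing by being confined to punishment families --- \emph{is} the content of the theorem, and your proposed route to it (memoryless Challenger in $\Conc$, then a one-player multi-mean-payoff optimization whose optimal plays ``can be taken of the form $hc^n\rho$'') is both different from the paper's and left unsubstantiated. The paper does not go through $\Conc$ here at all: it starts from an arbitrary $\lambda$-rational profile $\bsigma_{-i}$ with $x=\sup_{\sigma'_i}\mu_i(\<\bsigma_{-i},\sigma'_i\>)$ and, at each Prover history $H$, builds the proposal from the single play $\eta^0=\<\bsigma_{\|\phi(H)}\>$ by repeatedly deleting the first simple cycle whenever its mean payoff for player $i$ exceeds $x$. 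This terminates because $\mu_i(\eta^0)\le x$; the first surviving simple cycle becomes the punishing cycle (so $\MP_i(c)\le x$ by construction); and the tail is a vertex $\bz$ with $z_i\le x$ and $\lv\bz\rv\le\beta$ of the polytope of payoff vectors of $\lambda$-consistent plays with the prescribed occurrence set. The bookkeeping map $\phi$ into histories compatible with $\bsigma_{-i}$ is also what controls Challenger when he deviates forever: $\dH$ differs from $\phi(H)$ only by removing cycles with $\MP_i>x$ and inserting cycles with $\MP_i\le x$. Without some such construction your upper bound restates the claim rather than proving it.

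On the lower bound, your dismissal of Lemma~\ref{lm_reduced_memoryless} as ``not logically required'' conflicts with the paper and leaves a hole as written. In the case where the induced play $\pi$ contains infinitely many post-cycle deviations, one obtains $\mu_i(\eta)\le\MP_i(c)$ for the minimal infinitely recurring punishing cycle $c$, but one must still show $\MP_i(c)\le y$; the paper does this precisely by invoking memorylessness of $\tau_\P$ to produce a lasso $HC^\omega$ compatible with $\tau_\P$ through $(c,v)$ with $\nu_\C(HC^\omega)=\MP_i(c)$. Note also that the play $\eta$ actually produced in $G$ and the interpretation $\dpi$ do not coincide (the exponents and the connecting histories differ), so Challenger's payoff in $\Abs$ cannot simply be ``transported back'' as $\nu_\C$ of a play of $\Red^\beta$; and mid-cycle deviations are not pointless but must be explicitly folded into post-cycle deviation states, which is what the paper's map $\psi$ on bud histories accomplishes.
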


\begin{proof}
    For every mean-payoff game $G$ and every requirement $\lambda$, we assume:
    $$\beta \geq P_1 \left( P_2\left(\lv \left\{ \MP(c) ~|~ c \in \SC(G)\right\} \rv\right) \right)$$
    and for each $v \in V$:
    $$\beta \geq \lv \lambda(v) \rv + 3,$$
    which are indeed quantities that are bounded by a polynomial of $\lv G \rv +  \lv \lambda \rv$.
    
    \begin{itemize}
        \item \emph{First direction: $\nego(\lambda)(v_0) \geq \val_\C \left( \Red^\beta_{\lambda i}(G)_{\|v_0} \right).$}
        
        Let $\bsigma_{-i}$ be a strategy profile in $G$ that is $\lambda$-rational assuming a strategy $\sigma_i$, and let $x = \sup_{\sigma'_i} \mu_i(\< \bsigma_{-i}, \sigma'_i \>)$.
        We wish to prove that there exists a strategy $\tau_\P$ in the reduced negotiation game such that $\sup_{\tau_\C} \nu_\C(\< \btau \>) \leq x$.
        Thus, we will have proved that the quantity $\val_\C \left( \Red^\beta_{\lambda i}(G)_{\|v_0} \right)$ is smaller than or equal to every such $x$, and therefore smaller than or equal to $\nego(\lambda)(v_0)$.
        
        Let us define simultaneously the strategy $\tau_\P$ and a mapping $\phi: \Hist_\P\Red^\beta_{\lambda i}(G)_{\|v_0} \to \Hist G_{\|v_0}$, such that for each history $H$, the punishment family $\tau_\P(H)$ will be defined from the play $\< \bsigma_{\|\phi(H)} \>$.
        We guarantee inductively that if $H \in \Hist_\P\Red^\beta_{\lambda i}(G)_{\|v_0}$ is compatible with $\tau_\P$, then $\phi(H) \in \Hist G_{\|v_0}$ is compatible with $\bsigma_{-i}$.
        First, let us define $\phi(v_0) = v_0$.
        
        Let $H \in \Hist_\P \Red^\beta_{\lambda i}(G)_{\|v_0}$ be a history compatible with $\tau_\P$ as it has been defined so far, and such that $\phi(H)$ has already been defined.
        Let $\eta^0 = \< \bsigma_{\|\phi(H)} \>$.
        By induction hypothesis, the history $\phi(H)$ is compatible with $\bsigma_{-i}$, hence the play $\eta^0$ is $\lambda$-consistent, and satisfies $\mu_i(\eta^0) \leq x$.
        
        Let $\eta^0_{\leq \l}$ be the shortest prefix of $\eta^0$ that is not simple, i.e. such that there exists $k < \l$ with $\eta^0_k = \eta^0_\l$.
        If $\MP_i\left(\eta^0_{k+1} \dots \eta^0_\l\right) \leq x$, then we define $\tau_\P(H) = \left[\eta^0_{\leq k} \left( \eta^0_{k+1} \dots \eta^0_\l \right)^\infty \rho \right]$, where $\rho$ is a play such that $\Occ(\rho) = \Occ\left(\eta^0_{>\l}\right)$, that $\mu_i(\rho) \leq x$, and that $\lv \mu(\rho) \rv \leq \beta$.
        Such a play exists, because the polytope:
        $$Z = \left\{ \mu(\rho) ~\left|~ \begin{matrix}
            \forall j, \forall v \in V_j \cap \Occ(\eta^0), \mu_j(\rho) \geq \lambda(v), \\
            \mathrm{and~} \Occ(\rho) = \Occ\left(\eta^0_{>\l}\right)
        \end{matrix} \right. \right\}$$
        is nonempty (it contains $\eta^0_{>\l}$), and has at least one vertex $\bz$ with $z_i \leq x$ (because $\mu_i(\eta^0_{>\l}) \leq x$), which by Lemma~\ref{lm_conv_to_system} and Corollary~\ref{cor_size_vertices} has size $\lv\bz \rv \leq \beta$.
        
        Otherwise, if $\MP_i\left(\eta^0_{k+1} \dots \eta^0_\l\right) > x$, we define $\eta^1 = \eta^0_{\leq k} \eta^0_{>\l}$, and we iterate the process, which does necessarily terminate --- because $\mu_i(\eta^0) \leq x$.
        As a consequence, it effectively defines the proposal $\tau_\P(H) = \left[\eta^n_{\leq k} \left( \eta^n_{k+1} \dots \eta^n_\l \right)^\infty \rho \right]$, for some $n$.
        Then, for each prefix $hv$, we define $\phi\left(H \left[\eta^n_{\leq k} \left( \eta^n_{k+1} \dots \eta^n_\l \right)^\infty \rho \right] [hv] v\right) = \phi(H) \eta^0_{\leq m}$, where $\eta^0_{\leq m}$ is the prefix of $\eta^0$ of which $n$ simple cycles have been pulled out to obtain the prefix $h$ of $\eta^n$; and similarly, for each pair $(c, v)$, we define $\phi\left(H \left[\eta^n_{\leq k} \left( \eta^n_{k+1} \dots \eta^n_\l \right)^\infty \rho \right] (c, v) v\right) = \phi(H) \eta^0_{\leq m}$, where $\eta^0_{\leq m}$ is the prefix of $\eta^0$ from which $n$ simple cycles have been pulled out to obtain the shortest prefix $\eta^n_{\leq p}$ of $\eta^n$ such that $\eta^n_p \in V_i$ and $\eta^n_p v \in E$.
        
        Thus, the mapping $\phi$ is defined on every history compatible with $\tau_\P$, and the image of such a history is always a history compatible with $\bsigma_{-i}$.
        We define it arbitrarily on other histories.
        Note that for each history $H$, the history $\dH$ can be obtained from $\phi(H)$ by pulling out cycles $c$ satisfying $\MP_i(c) > x$, and adding cycles $d$ with $\MP_i(d) \leq x$.
        As a consequence, if $\MP_i(\phi(H)) \leq x$, then $\MP_i(\dH) \leq x$ --- and the same result is true when we naturally extend the mapping $\phi$ to plays.
        
        Let us now prove that $\sup_{\tau_\C} \nu_\C(\< \btau \>) \leq \sup_{\sigma'_i} \mu_i(\< \bsigma_{-i}, \sigma'_i \>)$.
        Let $\pi$ be a play compatible with $\tau_\P$:
        \begin{itemize}
            \item the state $\bot$ does not appear in $\pi$, because Prover's strategy does never use a transition to it.
            
            \item If $\pi$ has the form $\pi = H [h c^\infty \rho] \top^\omega$: then, we have $\nu_\C(\pi) = \mu_i(\rho) \leq x$.
            
            \item If $\pi$ is made of infinitely many deviations: the play $\phi(\pi)$ is compatible with $\bsigma_{-i}$, hence $\mu_i(\phi(\pi)) \leq x$; which implies $\mu_i(\dpi) \leq x$, i.e. $\nu_\C(\pi) \leq x$.
        \end{itemize}
        
        \item \emph{Second direction: $\nego(\lambda)(v_0) \leq \val_\C \left( \Red^\beta_{\lambda i}(G)_{\|v_0} \right).$}
        
        Let $\tau_\P$ be a memoryless strategy for Prover in the reduced negotiation game, and let $y = \sup_{\tau_\C} \nu_\C(\< \btau \>)$.
        We want to show that $\nego(\lambda)(v_0) \leq y$: by Lemma~\ref{lm_reduced_memoryless}, it will be enough to conclude.
        If $y = +\infty$, it is clear.
        Let us assume that $y \neq +\infty$.
        Then, we will define a strategy profile $\bsigma$, where $\bsigma_{-i}$ is $\lambda$-rational assuming $\sigma_i$, such that $\sup_{\sigma'_i} \mu_i(\< \bsigma_{-i}, \sigma'_i \>) \leq y$: we proceed inductively by defining the play $\< \bsigma_{\|hv} \>$ for each history $hv$ compatible with $\bsigma_{-i}$ such that $h$ is empty, or $\last(h) \in V_i$ and $v \neq \sigma_i(h)$.
        Such a history is called a \emph{bud history}.
        After other histories, the strategy profile can be defined arbitrarily.
        To that end, we construct a mapping $\psi$ which maps each bud history to a history $\psi(hv) \in \Hist_\P \Red_{\lambda i}^\beta(G)_{\|v_0}$ that is compatible with $\tau_\P$.
        This mapping will induce a definition of $\bsigma$: since $y \neq +\infty$, we have $\tau_\P(\psi(hv)) \neq \bot$: let then $[h' c^\infty \rho] = \tau_\P(\psi(hv))$.
        We then define $\< \bsigma_{\|hv} \> = h' c^{|hh'|^2} \rho$, which is a $\lambda$-consistent play since $h'c^\infty \rho$ is a $\lambda$-consistent punishment family, by definition of the reduced negotiation game.
        
        Let now $h_0 v$ be a bud history: we assume that $\bsigma$ has been defined on every prefix of $h_0$, but not on $h_0 v$ itself.
        If $h_0$ is empty, that is if $h_0v = v_0$, then we define $\psi(h_0v) = v_0$.
        Otherwise, let us write $h_0 = h_1 w h_2$, where $h_1 w$ is the longest prefix of $h_0$ that is a bud history --- that is, its longest prefix such that $\psi(h_1 w)$ has been defined, or its shortest prefix such that $wh_2$ is compatible with $\bsigma_{\|h_1 w}$.
        Let $H = \psi(h_1 w)$, and let $[h c^\infty \rho] = \tau_\P(H)$.
        We have defined $\< \bsigma_{\|h_1 w} \> = h c^{|h_1 h|^2} \rho$, and consequently, the history $w h_2$ is a prefix of that play.
        If it is a prefix of the history $hc$, then we define $\psi(h_0 v) = H [h c^\infty \rho] [w h_2 v] v$.
        Otherwise, we define $\psi(h_0 v) = H [h c^\infty \rho] (c, v) v$.
        
        Now, the strategy profile $\bsigma$ has been defined, and since all the punishment families proposed by Prover are $\lambda$-consistent, the strategy profile $\bsigma_{-i}$ is $\lambda$-rational assuming $\sigma_i$.
        Let $\eta$ be a play compatible with $\bsigma_{-i}$, and let us prove that $\mu_i(\eta) \leq y$.
        If $\eta$ has finitely many prefixes that are bud histories, then let $\eta_{\leq n}$ be the longest one: we have $\eta_{\geq n} = h c^{n+|h|} \rho$, where $[h c^\infty \rho] = \tau_\P(\psi(\eta_{\leq n}))$.
        Then, we have $\mu_i(\eta) = \mu_i(\rho) \leq y$.
        
        Now, if $\eta$ has infinitely many such prefixes, then there exists a unique play $\pi$ in the reduced negotiation game such that for any prefix $\eta_{\leq n}$ of $\eta$ that is a bud history, the history $\psi(\eta_{\leq n})$ is a prefix of $\pi$.
        Then, if $\pi$ contains finitely many post-cycle deviations, then there exist two indices $m$ and $n$ such that $\eta_{\geq m} = \dpi_{\geq n}$, hence $\mu_i(\eta) = \mu_i(\dpi) \leq y$.
        
        Finally, if $\pi$ contains infinitely many post-cycle deviations, i.e. infinitely many occurrences of a state of the form $(c, v)$, then let us choose such state that minimizes the quantity $\MP_i(c)$.
        The play $\eta$ has the form:
        $$\eta = h_0 c^{k_0^2} h_1 c^{k_1^2} h_2 \dots,$$
        where for each $n$, we have $k_n = \left|h_0 c^{k_0^2} \dots c^{k_{n-1}^2} h_n\right|$.
        Then, if we write $M = \max r_i$, we have:
        $$\MP_i\left(h_0 c^{k_0^2} \dots h_n c^{k_n^2}\right) \leq \frac{1}{k_n + k_n^2 |c| - 1} \left( k_n M + \left(k_n^2 |c| - 1\right) \MP_i(c) \right),$$
        which converges to $\MP_i(c)$ when $n$ tends to $+\infty$, hence $\mu_i(\eta) \leq \MP_i(c)$.
        Now, since $\tau_\P$ is memoryless, there exists a play of the form $H C^\omega$ that is compatible with it, and such that $(c, v) \in \Occ(C) \subseteq \Inf(\pi)$; and by definition of $y$, we have $\MP_i(\dC) = \nu_\C(HC^\omega) \leq y$.
        By minimality of $\MP_i(c)$, we have $\MP_i(\dC) = \MP_i(c)$, hence $\MP_i(c) \leq y$, and therefore $\mu_i(\eta) \leq y$. \qedhere
    \end{itemize}
\end{proof}

Thus, a given requirement $\lambda$ is an $\epsilon$-fixed point of $\nego$ if and only if for each $i$ and $v \in V_i$, there exists a memoryless strategy $\tau_\P$ in the game $\Red_{\lambda i}^\beta$, with $\beta = P_4(\lv G \rv + \lv \lambda \rv)$, such that $\sup_{\tau_\C} \nu_\C(\< \btau \>) \leq \lambda(v) + \epsilon$.
The reduced negotiation game has an exponential size, but it contains only $\card V$ states that are controlled by Prover: memoryless strategies for Prover are therefore objects of polynomial size.
Thus, such memoryless strategies constitute the third and last piece of our notion of witness.

        \section{Algorithm and complexity}

We are now in a position to define formally our notion of witness.

\begin{defi}[Witness]
    Let $I = (G_{\|v_0}, \bx, \by, \epsilon)$ be an instance of the $\epsilon$-SPE threshold problem.
    A \emph{witness} for $I$ is a tuple $\left(W, W', \bbalpha, \lambda, (\tau^v_\P)_v\right)$, where $W \subseteq W' \subseteq V$; $\bbalpha \in [0,1]^{\Pi \times \SC(W)}$;  $\lambda$ is a requirement; and each $\tau^v_\P$ is a memoryless strategy in the game $\Red^\beta_{\lambda i}(G)_{\|v}$, where $\beta = P_4(\lv G \rv + \lv \lambda \rv)$.
    A witness is \emph{valid} if:
    \begin{itemize}
        \item each strategy $\tau^v_\P$ satisfies the inequality $\sup_{\tau_\C} \nu_\C(\< \tau^v_\P, \tau_\C\>) \leq \lambda(v) + \epsilon$;
        
        \item the sets $W$ and $W'$ and the tuple of tuples $\bbalpha$ satisfy the hypotheses of Theorem~\ref{thm_constrained_existence_lambda_cons}.
    \end{itemize}
\end{defi}

\begin{rk}
    The sets $W$ and $W'$, as well as the tuple of strategies $(\tau^v_\P)_v$, have polynomial size.
    In order to bound the size of witnesses by a polynomial, we only have to bound $\lv \lambda \rv$ and $\left\lv \bbalpha \right\rv$.
\end{rk}

The $\epsilon$-SPE threshold problem will be $\NP$-easy if we show, first, that there exists a valid witness of polynomial size if and only if the instance is positive, and second, that the validity of a witness can be decided in polynomial time.
The former is a consequence of Lemma~\ref{lm_spe}, Theorem~\ref{thm_size_lambda}, Theorem~\ref{thm_constrained_existence_lambda_cons}, Theorem~\ref{thm_reduced_game}, and Lemma~\ref{lm_reduced_memoryless}:

\begin{lm} \label{lm_witness_existence}
    There exists a polynomial $P_5$ such that an instance $I$ of the $\epsilon$-SPE threshold problem admits a valid witness of size $P_5(\lv I \rv)$ if and only if it is a positive instance.
\end{lm}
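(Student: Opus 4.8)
The plan is to prove the two implications separately. The backward direction — a \emph{valid} witness (of arbitrary size) already forces $I$ to be positive — is a direct synthesis of Theorem~\ref{thm_reduced_game}, Theorem~\ref{thm_constrained_existence_lambda_cons} and Lemma~\ref{lm_spe}. The forward direction additionally needs the size bounds of Theorem~\ref{thm_size_lambda} and Theorem~\ref{thm_constrained_existence_lambda_cons}, together with the observation that a memoryless Prover strategy on the reduced negotiation game is a polynomial-size object. Throughout, I would write $\lambda^*$ for the least $\epsilon$-fixed point of $\nego$ on $G$.

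\emph{Backward direction.} Let $(W, W', \bbalpha, \lambda, (\tau^v_\P)_v)$ be a valid witness. For a player $i$ and $v \in V_i$, validity gives $\sup_{\tau_\C} \nu_\C(\< \tau^v_\P, \tau_\C \>) \leq \lambda(v) + \epsilon$, hence $\val_\C(\Red^\beta_{\lambda i}(G)_{\|v}) \leq \lambda(v) + \epsilon$ since the value is no larger than the value against the fixed strategy $\tau^v_\P$; as $\beta = P_4(\lv G \rv + \lv \lambda \rv)$, Theorem~\ref{thm_reduced_game} turns this into $\nego(\lambda)(v) \leq \lambda(v) + \epsilon$, and together with $\lambda(v) \leq \nego(\lambda)(v)$ (the negotiation function is non-decreasing) this exactly says that $\lambda$ is an $\epsilon$-fixed point of $\nego$. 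Validity also requires $W$, $W'$, $\bbalpha$ to satisfy the hypotheses of Theorem~\ref{thm_constrained_existence_lambda_cons}, which then produces a $\lambda$-consistent play $\rho$ in $G_{\|v_0}$ with $\bx \leq \mu(\rho) \leq \by$. By the implication $(2) \Rightarrow (1)$ of Lemma~\ref{lm_spe}, $\rho$ is the outcome of an $\epsilon$-SPE, so $I$ is positive.

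\emph{Forward direction.} Assume $I$ is positive and take $\lambda = \lambda^*$, $\beta = P_4(\lv G \rv + \lv \lambda^* \rv)$. Since $\lambda^*$ is an $\epsilon$-fixed point, $\nego(\lambda^*)(v) \leq \lambda^*(v) + \epsilon$ for every $v$; by Theorem~\ref{thm_reduced_game} this value equals $\val_\C(\Red^\beta_{\lambda^* i}(G)_{\|v})$, and by Lemma~\ref{lm_reduced_memoryless} there is a memoryless optimal Prover strategy $\tau^v_\P$ in that game, so that $\sup_{\tau_\C} \nu_\C(\< \tau^v_\P, \tau_\C \>) = \val_\C(\Red^\beta_{\lambda^* i}(G)_{\|v}) \leq \lambda^*(v) + \epsilon$ (trivially so when $\lambda^*(v) = +\infty$, which is why the restriction of Theorem~\ref{thm_reduced_game} to rational-valued requirements is harmless). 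Moreover, $I$ being positive yields an $\epsilon$-SPE $\bsigma$ with $\bx \leq \mu(\< \bsigma \>) \leq \by$, whose outcome is $\lambda^*$-consistent by the implication $(1) \Rightarrow (3)$ of Lemma~\ref{lm_spe}; Theorem~\ref{thm_constrained_existence_lambda_cons} then supplies $W \subseteq W' \subseteq V$ and $\bbalpha$ satisfying its hypotheses. Thus $(W, W', \bbalpha, \lambda^*, (\tau^v_\P)_v)$ is a valid witness.

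\emph{Size bound.} It remains to bound this witness polynomially in $\lv I \rv$. We have $\lv W \rv, \lv W' \rv = O(\lv G \rv)$; by Theorem~\ref{thm_size_lambda}, $\lv \lambda^* \rv \leq P_3(\lv G \rv + \lv \epsilon \rv)$, so $\beta$ is polynomially bounded, and by Theorem~\ref{thm_constrained_existence_lambda_cons} so is $\lv \bbalpha \rv$ (a polynomial in $\lv G, \bx, \by, \lambda^* \rv$, hence in $\lv I \rv$). Finally each $\tau^v_\P$ is memoryless on $\Red^\beta_{\lambda^* i}(G)$, whose Prover-controlled states are exactly the $\card V$ states of $G$, and each of its values is a transition of that game, i.e. the state $\bot$ or a $\beta$-punishment family, whose explicit representation has size at most $3 \card V \lceil \log_2(\card V + 1) \rceil + \beta$; hence $(\tau^v_\P)_v$ is a polynomial-size object too, and summing these contributions gives the desired polynomial $P_5$. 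The one subtle point — and the only real obstacle — is precisely this last accounting: the game $\Red^\beta_{\lambda^* i}(G)$ has exponentially many states, so it is essential that Prover controls only polynomially many of them and that the $\beta$-punishment families labelling the transitions out of them admit polynomial-size representations, which hinges on $\beta$, hence on $\lv \lambda^* \rv$, being polynomial.
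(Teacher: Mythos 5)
Your proof is correct and follows exactly the route the paper intends: the paper gives no separate proof of this lemma, stating only that it is "a consequence of Lemma~\ref{lm_spe}, Theorem~\ref{thm_size_lambda}, Theorem~\ref{thm_constrained_existence_lambda_cons}, Theorem~\ref{thm_reduced_game}, and Lemma~\ref{lm_reduced_memoryless}", and your two directions together with the size accounting (in particular the observation that Prover controls only $\card V$ states and that $\beta$-punishment families have polynomial-size representations) are precisely that synthesis spelled out.
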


Let us now tackle the latter.

\begin{lm} \label{lm_check_polynomial}
    Given an instance of the $\epsilon$-SPE threshold problem and a witness for it, deciding whether that witness is valid is $\PTime$-easy.
\end{lm}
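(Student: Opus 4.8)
The plan is to verify, one after the other, the well-formedness of the witness and its two validity conditions, arguing that each step runs in polynomial time. First I would check that $\left(W, W', \bbalpha, \lambda, (\tau^v_\P)_v\right)$ has the required shape: $W \subseteq W' \subseteq V$; $\bbalpha$ is a (necessarily sparse, hence explicitly listed) family of rationals indexed by $\Pi$ and by a polynomial-size set of simple cycles of $W$; and, with $\beta = P_4(\lv G \rv + \lv \lambda \rv)$ as in the definition of witnesses, each $\tau^v_\P$ is a genuine memoryless Prover strategy in $\Red^\beta_{\lambda i}(G)_{\|v}$, i.e.\ for every $v' \in V$ its value $\tau^v_\P(v')$ is the move to $\bot$ or a move to a punishment family $[hc^\infty\rho]$ that is a $\lambda$-consistent $\beta$-punishment family based at $v'$. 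Everything here is syntactic except $\lambda$-consistency, which — since a punishment family records only $\mu(\rho)$ and $\Occ(\rho)$ — reduces to the inequalities $\mu_j(\rho) \ge \lambda(w)$ for all $w \in \big(\Occ(h) \cup \Occ(c) \cup \Occ(\rho)\big) \cap V_j$ together with the existence of a $\lambda$-consistent tail play with the prescribed $\Occ$ and $\mu$; the latter is decidable in polynomial time through the polytope description of Lemma~\ref{lm_dseal} and Lemma~\ref{lm_conv_to_system} (a linear-feasibility test on cycle frequencies).

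Next, the conditions of Theorem~\ref{thm_constrained_existence_lambda_cons} on $(W, W', \bbalpha)$ are all cheap. Strong connectedness of $W$ and the requirement that $W$ be reachable from $v_0$ using only and all the vertices of $W'$ are standard directed-graph computations inside the subgraph induced by $W'$ (completing, if needed, with the polynomial-time Hamiltonian-path test on the condensation DAG of that subgraph, from the component of $v_0$ to the component containing $W$). The remaining items are elementary rational arithmetic: compute $\MP_i(c)$ for each player $i$ and each of the polynomially many cycles $c$ in the support of $\bbalpha$, then verify $\sum_c \alpha_{ic} = 1$, the inequalities $x_i \le \min_j \sum_c \alpha_{jc}\MP_i(c) \le y_i$ and $\min_j \sum_c \alpha_{jc}\MP_i(c) \ge \lambda(v)$ for $v \in W \cap V_i$, and finally $\lv \bbalpha \rv \le P_4(\lv G, \bx, \by, \lambda \rv)$.

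The heart of the proof is checking, for each $v \in V_i$, that $\sup_{\tau_\C} \nu_\C(\< \tau^v_\P, \tau_\C \>) \le \lambda(v) + \epsilon$. Although $\Red^\beta_{\lambda i}(G)$ has exponentially many states, the subgraph $\Gamma$ reachable from $v$ once Prover is fixed to $\tau^v_\P$ has polynomial size, because Prover visits at most $\card V$ of her states and hence proposes at most $\card V$ punishment families, each with polynomially many Challenger successors. I would build $\Gamma$ explicitly and compute $\sup_{\tau_\C} \nu_\C(\< \tau^v_\P, \tau_\C \>)$ from a combinatorial description obtained by replaying, on $\Gamma$, the case analysis from the proof of Theorem~\ref{thm_reduced_game}: the value is $+\infty$ if a $\bot$-move of $\tau^v_\P$ is reachable in $\Gamma$, and otherwise it is the maximum of (i) the accept value $\mu_i(\rho)$ over all proposals $[hc^\infty\rho]$ reachable in $\Gamma$; (ii) the largest $\MP_i$ of a $G$-cycle induced by a cycle of $\Gamma$ that uses only pre-cycle deviations; and (iii) the largest value, over cycles of $\Gamma$ using at least one post-cycle deviation, of the minimum of $\MP_i(c_{v'})$ taken over the punishing cycles $c_{v'}$ of those post-cycle deviations. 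Quantity (i) is obtained by reachability, (ii) by a maximum-mean-cycle computation, and (iii) by a bottleneck-cycle computation (threshold over the polynomially many values $\MP_i(c_{v'})$ and test, for each threshold, reachability of a cycle through a surviving post-cycle-deviation edge). Comparing the result with $\lambda(v) + \epsilon$ and repeating for all $v \in V$ stays within polynomial time.

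The main obstacle is item (iii): one must justify that along a play $\pi$ with infinitely many post-cycle deviations the quadratic pumping of the punishing cycles drives $\mu_i(\dpi)$ down to exactly the minimum of the $\MP_i(c_{v'})$ involved, so that Challenger's optimum really does take the bottleneck form above. This is, however, precisely the computation already carried out in the ``second direction'' of the proof of Theorem~\ref{thm_reduced_game}, and it transfers here with only notational changes; everything else is routine graph algorithmics and rational arithmetic.
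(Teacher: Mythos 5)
Your proposal is correct and follows essentially the same route as the paper: you restrict the reduced negotiation game to the polynomial-size subgraph induced by the fixed memoryless Prover strategy (the paper's ``deviation graph'') and decide the Challenger value by the same three-case analysis — acceptance values by reachability, cycles with only pre-cycle deviations by a maximum-mean-cycle (Karp) computation, and recurrent post-cycle deviations by a bottleneck/threshold argument justified by the quadratic pumping from the proof of Theorem~\ref{thm_reduced_game}. The additional well-formedness checks you perform (in particular $\lambda$-consistency and realizability of the proposed punishment families) go beyond what the paper's proof verifies and are a reasonable supplement, even if the realizability sub-test is only sketched.
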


\begin{proof}
    The validity of a witness is defined by two conditions.
    As regards the second one, all the hypotheses of Theorem~\ref{thm_constrained_existence_lambda_cons} can be checked in polynomial time with classical algorithms.
    Let us now show how the first condition can also be checked in polynomial time.
        
    Let $n = \card V$.
    Given a memoryless strategy $\tau^v_\P$ of Prover in a reduced negotiation game, one can construct in a time polynomial in $\lv \tau^v_\P \rv$ the graph $\Red_{\lambda i}^\beta(G)[\tau^v_\P]$, defined as the underlying graph of $\Red_{\lambda i}^\beta(G)$ where all the transitions that are not compatible with $\tau_\P^v$ have been omitted, as well as all the states that are, then, no longer accessible from the state $v$.
    That graph has indeed a polynomial size, because it is composed only of:
    \begin{itemize}
        \item at most $n$ vertices of the form $w \in V$;
        
        \item at most $n$ vertices of the form $\tau_\P(\cdot w)$ (either equal to $\bot$ or of the form $[h c^\infty \rho]$);
        
        \item at most $n^2$ vertices of the form $(c, w)$;
        
        \item at most $2n^2$ vertices of the form $[h'w']$, where $h'$ is a prefix of the history $hc$ for some punishment family $[hc^\infty\rho] = \tau_\P(\cdot w)$;
        
        \item possibly the state $\top$.
    \end{itemize}
    
    We call this connected graph the \emph{deviation graph}.
    Note that if among those vertices, there is the vertex $\bot$, then since the vertices that are not accessible have been removed, we have $\sup_{\tau_\C} \nu_\C(\< \btau \>) = +\infty$ and the problem can be solved immediately.
    In what follows, we assume that it is not the case, i.e. that for each $w$, the state $\tau_\P(\cdot w)$ has the form $[h c^\infty \rho]$.
    Deciding whether $\sup_{\tau_\C} \nu_\C(\< \btau \>) \leq \alpha$ is then equivalent to deciding whether there exists a path $\pi$, in that graph, such that $\nu_\C(\pi) > \alpha$.
    Such a play can have three forms.
    
    \begin{itemize}
        \item It can end in the state $\top$, i.e. with Challenger accepting Prover's proposal.
        The existence of such a play can be decided immediately, by checking whether in the deviation graph, there exists a vertex of the form $[h c^\infty \rho]$ with $\mu_i(hc^\infty \rho) > \alpha$.
        
        \item It can avoid the state $\top$, and comprise finitely many post-cycle deviations.
        This is the case if and only if there exists a cycle $C$ in the deviation graph, without post-cycle deviations, such that $\MP_i(\dC) > \alpha$.
        The existence of such a cycle can be decided in polynomial time with Karp's algorithm (see~\cite{DBLP:journals/dm/Karp78}).
        
        \item It can avoid the state $\top$, and comprise infinitely many post-cycle deviations.
        In that case, we have $\nu_\C(\pi) \leq \MP_i(c)$ for each state of the form $(c, w)$ appearing infinitely often along $\pi$; then, there exists a cycle $C$ in the deviation graph, such that every state of the form $(c, w)$ along $C$ satisfies $\MP_i(c) > \alpha$.
        Conversely, if such a cycle exists, then $\pi$ exists.
        The existence of such a cycle can be decided in polynomial time with Karp's algorithm.
    \end{itemize}
    
    Therefore, the existence of such a play is decidable in polynomial time.
\end{proof}

Thus, given an instance of the $\epsilon$-SPE threshold problem, a valid witness can be guessed and checked in polynomial time.
Since the $\epsilon$-SPE threshold problem has been proved to be $\NP$-hard in \cite{Concur}, we finally obtain the following theorem:

\begin{thm}\label{thm_np_complete}
    The $\epsilon$-SPE threshold problem in mean-payoff games is $\NP$-complete.
\end{thm}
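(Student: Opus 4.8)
The plan is the standard two-part $\NP$-completeness argument, with essentially all of the technical work already carried out in the preceding sections. For hardness, I would invoke \cite{Concur} directly: the SPE threshold problem --- the case $\epsilon = 0$ --- is shown there to be $\NP$-hard, and since the $\epsilon$-SPE threshold problem generalizes it, $\NP$-hardness transfers immediately, so no new construction is needed here.

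For membership in $\NP$, I would describe the following nondeterministic procedure on an instance $I = (G_{\|v_0}, \bx, \by, \epsilon)$: guess a witness $\left(W, W', \bbalpha, \lambda, (\tau^v_\P)_v\right)$, then verify deterministically that it is valid, accepting if and only if it is. Correctness and the polynomial running time then follow by assembling the earlier results. On the one hand, Lemma~\ref{lm_witness_existence} guarantees that $I$ is positive if and only if it admits a valid witness, and moreover one of size at most $P_5(\lv I \rv)$; this is the point where Lemma~\ref{lm_spe} (the $\epsilon$-SPE outcomes are exactly the $\lambda^*$-consistent plays, equivalently the $\lambda$-consistent plays for an $\epsilon$-fixed point $\lambda$), Theorem~\ref{thm_size_lambda} (the least $\epsilon$-fixed point has polynomial size, so one may take $\lambda = \lambda^*$), Theorem~\ref{thm_constrained_existence_lambda_cons} (a polynomial-size $\bbalpha$ together with $W \subseteq W'$ certifies the constrained existence of a $\lambda$-consistent play between $\bx$ and $\by$), and Theorem~\ref{thm_reduced_game} with Lemma~\ref{lm_reduced_memoryless} (a memoryless Prover strategy in $\Red^\beta_{\lambda i}(G)$ certifies that $\nego(\lambda)(v) \leq \lambda(v) + \epsilon$) are all brought to bear. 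Since $W, W' \subseteq V$ and each $\tau^v_\P$ is determined by its values on the $\card V$ states controlled by Prover, all components of the witness have polynomial size, so the guess takes polynomial time. On the other hand, Lemma~\ref{lm_check_polynomial} shows the validity check itself runs in polynomial time: the hypotheses of Theorem~\ref{thm_constrained_existence_lambda_cons} are verified by classical graph algorithms, and each inequality $\sup_{\tau_\C} \nu_\C(\< \tau^v_\P, \tau_\C \>) \leq \lambda(v) + \epsilon$ is verified by building the polynomial-size deviation graph of $\tau^v_\P$ and ruling out, via Karp's algorithm, the three possible shapes of a Challenger play exceeding the threshold.

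I do not expect a genuine obstacle here: this theorem is a synthesis step, and the only content to watch is definitional --- one must make sure that the notion of \emph{valid witness} is exactly the conjunction of the hypotheses used in Lemma~\ref{lm_witness_existence}, and that the verification of Lemma~\ref{lm_check_polynomial} covers every clause of validity. Granting that, combining the $\NP$-hardness of \cite{Concur} with the $\NP$ membership just sketched gives $\NP$-completeness; and since Theorems~\ref{thm_size_lambda}, \ref{thm_constrained_existence_lambda_cons} and~\ref{thm_reduced_game} were all established for arbitrary $\epsilon \geq 0$, the result holds for the $\epsilon$-SPE threshold problem in full generality, hence in particular for the SPE threshold problem.
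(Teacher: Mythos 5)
Your proposal is correct and follows exactly the paper's argument: $\NP$-hardness is inherited from \cite{Concur} (the $\epsilon=0$ case being a special case of the problem), and $\NP$-membership follows by guessing the polynomial-size witness of Lemma~\ref{lm_witness_existence} and checking its validity in polynomial time via Lemma~\ref{lm_check_polynomial}. Nothing is missing.
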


\subparagraph{{\bf Acknowledgements}}{We wish to thank the anonymous reviewers for their useful comments, in particular for the question that led us to add Lemma~\ref{lm_np_hard} to this paper.}

\bibliography{bibli}

\newpage


\appendix

The following appendices are providing the detailed proofs of all our results. They are not necessary to understand our results and are meant to provide full formalization and rigorous proofs. To improve readability, we have chosen to recall the statements that appeared in the main body of the paper before giving their detailed proofs in order to ease the work of the reader.

    \section{Proof of Lemma~\ref{lm_np_hard}} \label{pf_np_hard}

\begin{customlem}{\ref{lm_np_hard}}
    The SPE existence problem is $\NP$-hard.
\end{customlem}

\begin{proof}
    We proceed by reduction from the $\NP$-complete problem SAT.
    Let $\phi$ be a formula from propositional logic, written in conjunctive normal form.
    Let $X$ be the set of variables appearing in $\phi$.
    It has been proved in \cite{Concur} that one can construct in polynomial time a game $G^\phi_{\|v_0}$, whose player set is $\Pi = X \cup \{\S\}$ (the symbol $\S$ denotes a special player called \emph{Solver}), and in which there exists an SPE outcome $\rho$ with $\mu_\S(\rho) \geq 1$ if and only if $\phi$ is satisfiable.
    As an example, the game $G^\phi$ for $\phi = (x_1 \vee \neg x_1) \wedge \dots \wedge (x_6 \vee \neg x_6)$ is given in Figure~\ref{fig_Gphi}: the states named after a clause of $\phi$ are controlled by Solver, and the states of the form $\neg x$ are controlled by the player $x$.
    It is worth noting that in $G^\phi$, every play $\rho$ is such that either $\mu_\S(\rho) = 0$ or $\mu_\S(\rho) = 1$.
    
    \begin{figure}
        \centering
        \begin{tikzpicture}[->,>=latex,shorten >=1pt, scale=0.5, every node/.style={scale=0.5}, initial text={}]
    		\node[state, initial right] (C1) at (0:5) {$C_1$};
    		\node[state] (C11) at (30:6) {$x_1$};
    		\node[state] (C12) at (30:4) {$\neg x_1$};
    		\node[state] (C2) at (60:5) {$C_2$};
    		\node[state] (C21) at (90:6) {$x_2$};
    		\node[state] (C22) at (90:4) {$\neg x_2$};
    		\node[state] (C3) at (120:5) {$C_3$};
    		\node[state] (C31) at (150:6) {$x_3$};
    		\node[state] (C32) at (150:4) {$\neg x_3$};
    		\node[state] (C4) at (180:5) {$C_4$};
    		\node[state] (C41) at (210:6) {$x_4$};
    		\node[state] (C42) at (210:4) {$\neg x_4$};
    		\node[state] (C5) at (240:5) {$C_5$};
    		\node[state] (C51) at (270:6) {$x_5$};
    		\node[state] (C52) at (270:4) {$\neg x_5$};
    		\node[state] (C6) at (300:5) {$C_6$};
    		\node[state] (C61) at (330:6) {$x_6$};
    		\node[state] (C62) at (330:4) {$\neg x_6$};
    		\node[state] (b) at (0,0) {$\bot$};
    		
    		\path[->] (C1) edge node[right] {$\stackrel{x_1}{0} \stackrel{x_2}{1} \stackrel{x_3}{1} \stackrel{x_4}{1} \stackrel{x_5}{1} \stackrel{x_6}{1} \stackrel{\S}{1}$} (C11);
    		\path[->] (C1) edge (C12);
            \path[->] (C11) edge (C2);
            \path[->] (C12) edge (C2);
            \path[->] (C12) edge (b);
            \path[->] (C2) edge node[above right] {$\stackrel{x_1}{1} \stackrel{x_2}{0} \stackrel{x_3}{1} \stackrel{x_4}{1} \stackrel{x_5}{1} \stackrel{x_6}{1} \stackrel{\S}{1}$} (C21);
    		\path[->] (C2) edge (C22);
            \path[->] (C21) edge (C3);
            \path[->] (C22) edge (C3);
            \path[->] (C22) edge (b);
            \path[->] (C3) edge node[above left] {$\stackrel{x_1}{1} \stackrel{x_2}{1} \stackrel{x_3}{0} \stackrel{x_4}{1} \stackrel{x_5}{1} \stackrel{x_6}{1} \stackrel{\S}{1}$} (C31);
    		\path[->] (C3) edge (C32);
            \path[->] (C31) edge (C4);
            \path[->] (C32) edge (C4);
            \path[->] (C32) edge (b);
            \path[->] (C4) edge node[left] {$\stackrel{x_1}{1} \stackrel{x_2}{1} \stackrel{x_3}{1} \stackrel{x_4}{0} \stackrel{x_5}{1} \stackrel{x_6}{1} \stackrel{\S}{1}$} (C41);
    		\path[->] (C4) edge (C42);
            \path[->] (C41) edge (C5);
            \path[->] (C42) edge (C5);
            \path[->] (C42) edge (b);
            \path[->] (C5) edge node[below left] {$\stackrel{x_1}{1} \stackrel{x_2}{1} \stackrel{x_3}{1} \stackrel{x_4}{1} \stackrel{x_5}{0} \stackrel{x_6}{1} \stackrel{\S}{1}$} (C51);
    		\path[->] (C5) edge (C52);
            \path[->] (C51) edge (C6);
            \path[->] (C52) edge (C6);
            \path[->] (C52) edge (b);
            \path[->] (C6) edge node[below right] {$\stackrel{x_1}{1} \stackrel{x_2}{1} \stackrel{x_3}{1} \stackrel{x_4}{1} \stackrel{x_5}{1} \stackrel{x_6}{0} \stackrel{\S}{1}$} (C61);
    		\path[->] (C6) edge (C62);
            \path[->] (C61) edge (C1);
            \path[->] (C62) edge (C1);
            \path[->] (C62) edge (b);
            \path (b) edge[loop left] node[left] {$\stackrel{x_1}{1} \stackrel{x_2}{1} \stackrel{x_3}{1} \stackrel{x_4}{1} \stackrel{x_5}{1} \stackrel{x_6}{1} \stackrel{\S}{0}$} (b);
        \end{tikzpicture}
        \caption{The game $G^\phi$}
        \label{fig_Gphi}
    \end{figure}
    
    \begin{figure}
        \centering
		\begin{tikzpicture}[->,>=latex,shorten >=1pt, initial text={}, scale=0.7, every node/.style={scale=0.8}]
		\node[state, initial left] (a) at (0, 0) {$a$};
		\node[state, rectangle] (b) at (3, 0) {$b$};
		\node[state, rectangle] (c) at (6, 0) {$c$};
		\node (G) at (0, -2) {$G^\phi_{\|v_0}$};
		
		\path[bend left = 10] (a) edge node[above] {$\stackrel{\playcircle}{0}\stackrel{\Box}{3}$} (b);
		\path[bend left = 10] (b) edge node[below] {$\stackrel{\playcircle}{0}\stackrel{\Box}{3}$} (a);
        \path (b) edge (c);
		\path (c) edge [loop right] node {$\stackrel{\playcircle}{2}\stackrel{\Box}{2}$} (c);
		\path (a) edge (G);
		\end{tikzpicture}
		\caption{The game $H^\phi$}
		\label{fig_Hphi}
	\end{figure}
    
    Let us now construct a game $H^\phi$, according to Figure~\ref{fig_Hphi}.
    That game comprises all the players of $G^\phi$, plus two new ones, player $\Circle$ and player $\Box$.
    In the region corresponding to $G^\phi$, the rewards earned by those players are defined as follows: for each transition $uv$, we have $r_\Box(uv) = r_\playcircle(uv) = 1 - r_\S(uv)$.
    Thus, a play $\rho$ in that region is either such that $\mu_\S(\rho) = 1$ and $\mu_\Box(\rho) = \mu_\playcircle(\rho) = 0$, or that $\mu_\S(\rho) = 0$ and $\mu_\Box(\rho) = \mu_\playcircle(\rho) = 1$.
    The other rewards that are not written are not relevant for this proof and can be chosen arbitrarily.
    In the game $H^\phi_{\|a}$, an SPE exists if and only if the formula $\phi$ is satisfiable.
    
    \begin{itemize}
        \item Let us first assume that $\phi$ is satisfiable.
        Then, there exists an SPE outcome $\rho$ in $G^\phi_{\|v_0}$ such that $\mu_\S(\rho) = 1$, and therefore such that $\mu_\Box(\rho) = \mu_\playcircle(\rho) = 0$.
        Let $\btau$ be an SPE such that $\< \btau \> = \rho$.
        Then, the play $(ab)^\omega$ is an SPE outcome in the game $H^\phi_{\|a}$.
    
        Indeed, let us construct an SPE $\bsigma$ such that $(ab)^\omega = \< \bsigma \>$.
        First, for every history of the form $ha$, we define $\sigma_\playcircle(ha) = b$, and for every history of the form $hb$, we define $\sigma_\Box(hb) = a$.
        Second, for every history of the form $h a v_0$ that goes to the region corresponding to $G^\phi_{\|v_0}$, we define $\bsigma_{\|hav_0} = \btau$.
        Then, let $h$ be a history starting from $a$, and let us show that the substrategy profile $\bsigma_{\|h}$ is an NE.
        If $h$ traverses the state $v_0$, then $\bsigma_{\|h}$ is a substrategy profile of $\btau$, and therefore is an NE.
        If $h$ traverses the state $c$, then the result is immediate.
        Otherwise, i.e. if $h$ ends in $a$ or in $b$, then we have $\mu_\playcircle(\< \bsigma_{\|h} \>) = 0$, and player $\Circle$ cannot force a better payoff by going to the state $v_0$; and $\mu_\Box(\< \bsigma_{\|h} \>) = 3$, and if player $\Box$ chooses to go to the state $c$, he can get only the payoff $2$.
        
        \item Now, let us assume that there exists an SPE $\bsigma$ in $H^\phi_{\|a}$.
        Then, every substrategy profile $\bsigma_{\|hv_0}$ is an SPE, which can be considered as an SPE in the game $G^\phi_{\|v_0}$.
        If for some $h$, we have $\mu_\S(\< \bsigma_{\|hv_0} \>) = 1$, then the existence of such an SPE implies the satisfiability of $\phi$ by definition of $G^\phi$.
        Otherwise, if $\mu_\S(\< \bsigma_{\|hv_0} \>) = 0$, and therefore $\mu_\Box(\< \bsigma_{\|hv_0} \>) = \mu_\playcircle(\< \bsigma_{\|hv_0} \>) = 1$ for every history $hv_0$, then $\bsigma$ must satisfy $\mu_\playcircle(\< \bsigma_{\|ha} \>) \geq 1$ for every history $ha$ --- otherwise, player $\Circle$ would have a profitable deviation by going to the state $v_0$.
        That means that there is no history $ha$ from $a$ such that $\< \bsigma_{\|ha} \> = (ab)^\omega$.
        
        Therefore, there are two possibilities: either (1) there are infinitely many histories of the form $ha$ such that $\sigma_\playcircle(ha) = v_0$, or (2) there are infinitely many histories of the form $hb$ such that $\sigma_\Box(hb) = c$.
        The case (1) actually implies the case (2): whenever $\sigma_\playcircle(hba) = v_0$, we have $\sigma_\playcircle(hb) = c$, because player $\Box$ will have an incentive to go to the state $c$ and get the payoff $2$ before player $\Circle$ goes to $v_0$, which would give him only the payoff $1$.
        But then, being in the case (2) does also imply that we are not in the case (1): player $\Circle$ has no incentive to go to the state $v_0$, from which she can get only the payoff $1$, if player $\Box$ plans to go the state $c$, and to give her the payoff $2$, later.
        On the other hand, the case (2) implies the case (1): if after some history, player $\Circle$ does no longer plan to go to the state $v_0$, then player $\Box$ has no incentive to go to the state $c$, and staying in the cycle $(ab)^\omega$ is a profitable deviation from him.
        Contradiction: none of those cases is possible, and there exists necessarily $h$ such that $\mu_\S(\< \bsigma_{\|hv_0} \>) = 1$, i.e. there exists necessarily a valuation satisfying $\phi$.
    \end{itemize}
\end{proof}

    \section{Proof of Lemma~\ref{lm_conv_to_system}} \label{pf_conv_to_system}

\begin{customlem}{\ref{lm_conv_to_system}}
    There exists a polynomial $P_2$ such that, for each finite set $D$ and every finite subset $X \subseteq \R^D$, there exists a system of linear inequations $\Sigma$, such that $\Sol_\geq(\Sigma) = \Conv(X)$ and $\lv (\ba, b) \rv \leq P_2(\lv X \rv)$ for every $(\ba, b) \in \Sigma$.
\end{customlem}

\begin{proof}
    First, let us recall the notion of \emph{facet}: a facet of a polytope $P$ is a subset of $P$ of dimension $\dim P - 1$ and of the form $P \cap H$, where $H$ is a hyperplane defined by an equation $(\ba, b)$, such that $P \subseteq \Sol_\geq(\ba, b)$.

    Each facet of the polytope $\Conv(X)$ is of the form $\Conv\{\bx_1, \dots, \bx_n\}$, where $\bx_1, \dots, \bx_n$ are vertices of $X$ and $n \geq d = \card D$.
    Let $\Phi$ be the set of those facets.
    We can then write:
    $$\Conv(X) = \Sol_{\geq} \left\{ (\ba_F, b_F) ~|~ F \in \Phi\right\},$$
    where for each $F$, the equation $(\ba_F, b_F)$ defines the hyperplane to which the facet $F$ belongs.
    Let us now study the complexity of each of those equations (or inequations).
    
    For a given facet $F = \Conv\{\bx_1, \dots, \bx_n\}$, let us choose $d$ points $\by_1, \dots, \by_d \in \{\bx_1, \dots, \bx_n\}$ that are linearly independent.
    The equation $(\ba_F, b_F)$ has the points $\by_1, \dots, \by_d$ among its solutions, i.e. it satisfies:
    $$\forall i \in \{1, \dots, d\}, \sum_{j=1}^d a_{Fj} y_{ij} - b_F = 0.$$
    Those $d$ equalities can themselve be understood as equations on the pair $(\ba_F, b_F)$.
    Let us add a $(d+1)$-th equation: for some dimension $i_0$, we have $a_{Fi_0} \neq 0$, and by multiplying if needed by a nonzero factor, we can assume $a_{Fi_0} = 1$.
    By Lemma~\ref{lm_system_eq_to_point}, there exists a pair $(\ba, b)$ satisfying that system of $d+1$ equations and the inequality $\lv (\ba, b) \rv \leq P_1 \left( 2d + 5 + \max_i \left( \sum_j \lv x_{ij} \rv \right) \right) \leq P_1 (\lv X \rv)$.
    That pair is an equation of the hyperplane containing $F$.
    
    As a consequence, the polynomial $P_2 = P_1$ satisfies the desired inequalities.
\end{proof}

    \section{Proof of Theorem~\ref{thm_size_lambda}} \label{pf_size_lambda}

\begin{customthm}{\ref{thm_size_lambda}}
    There exists a polynomial $P_3$ such that for every mean-payoff game $G$, the least $\epsilon$-fixed point $\lambda^*$ of the negotiation function has size $\lv \lambda^* \rv \leq P_3(\lv G \rv + \lv \epsilon \rv)$.
\end{customthm}

\begin{proof}
    This proof uses the concrete negotiation game to construct the sets we wish to construct as polyhedra.
    We write $S = V \times 2^V \cup E \times 2^V$ the state space $\Conc_{\lambda i}(G)$, and $\nu_\C$ for Challenger's payoff function.

    \begin{itemize}
    \item \emph{Size of mean-payoffs of cycles}
    
    Without loss of generality, we assume that all the rewards $r_i(uv)$ for $i \in \Pi$ and $uv \in E$ are integers --- otherwise, we can multiply all the rewards by the least common denominator, compute the least fixed point in the resulting game, and finally divide each of its values by the least common denominator.
    
    Let now $i \in \Pi$ and $v \in V_i$, let $\lambda$ be a requirement, and let us consider the corresponding concrete negotiation game $\Conc_{\lambda i}(G)_{\|(v, \{v\})}$.
    It is shown in \cite{Concur} that Challenger has a memoryless optimal strategy.
    Let then $\tau_\C$ be a memoryless strategy for Challenger in $\Conc_{\lambda i}(G)_{\|v}$ and let $K$ be a strongly connected component of the graph $\Conc_{\lambda i}(G)[\tau_\C]$, i.e. the underlying graph of $\Conc_{\lambda i}(G)$ in which the transitions that are not compatible with $\tau_\C$ have been omitted.
    Let $c = c_1 \dots c_n$ be a simple cycle of $K$: necessarily, we have $n \leq \card S = \card V 2^{\card V}$.
    Therefore, we have, for each player $j$:
    \begin{align*}
        \lv \MP_j(\dc) \rv &= \left\lv \frac{1}{n} \sum_{k \in \Z/n\Z} r_j(\dc_k \dc_{k+1}) \right\rv \\
        &= 1 + \left\lceil \log_2\left( \left\vert \sum_{k \in \Z/n\Z} r_j(\dc_k \dc_{k+1}) \right\vert + 1 \right) \right\rceil + \lceil \log_2( n ) \rceil \\
        &\leq 1 + \left\lceil \log_2\left( \left\vert 2^{\card V} \sum_{e \in E} r_j(e) \right\vert + 1 \right) \right\rceil + \lceil \card V \log_2(\card V) \rceil \\
        &\leq 1 + \card V + \sum_{e \in E} \left\lceil \log_2\left( | r_j(e) | + 1 \right) \right\rceil + (\card V)^2 \\
        &\leq (\card E)^3 + \sum_{e \in E} \left\lceil \log_2\left( | r_j(e) | + 1 \right) \right\rceil \\
        &\leq \lv r_j \rv^3 \leq \lv G \rv^3.
    \end{align*}

    \item \emph{Feasible payoff vectors in a strongly connected component}
    
    Let us now define the polytope:
    $$F_K = \dseal \left(\underset{c \in \SC(K)}{\Conv} \MP(\dc) \right).$$
    By Lemma~\ref{lm_dseal}, the polytope $F_K$ is exactly the set of the payoff vectors of plays in the strongly connected component $K$.
    By Lemma~\ref{lm_conv_to_system}, the polytope:
    $$\underset{c \in \SC(K)}{\Conv} \MP(\dc)$$
    is the solution set of a system of inequations whose sizes are bounded by a polynomial function of $\lv \{\MP(\dc) ~|~ c \in \SC(K) \} \rv$, i.e., according to the previous point, of $\lv G \rv$.
    By Lemma~\ref{lm_dseal_size}, this is also the case for the polytope $F_K$.

    \item \emph{Consistent payoff vectors in a strongly connected component}
    
    If $K$ contains no deviation, then there is a set $M_K \subseteq V$ such that all the states of $K$ have the form $(\cdot, M_K)$.
    If $K$ contains a deviation, we define $M_K = \emptyset$.
    Given a requirement $\lambda$, we can now define a new polytope:
    $$C_{K \lambda} = \left\{\bx \in F_K ~|~ \forall j, \forall u \in V_i \cap M_K, x_i \geq \lambda(u)\right\}.$$
    That polytope is the set of the tuples $\bx$ such that there exists a play $\pi$ in $K$ realizing $\mu(\dpi) = \bx$, \emph{and} $\nu_\C(\pi) \neq +\infty$.
    Note that we have:
    $$\nego(\lambda)(v) = \sup_{\tau_\C} \inf_K \inf_{\bx \in C_{K\lambda}} x_i.$$
    The inequations defining $C_{K\lambda}$ are the same as $F_K$, plus the inequations of the form $x_i \geq \lambda(u)$.

    \item \emph{Union, intersection and product}
    
    Let us now work in the space $\R^{V \times \Pi}$.
    Given a requirement $\lambda$, we define the set:
    $$X_\lambda = \prod_{i \in \Pi, v \in V_i} \bigcap_{\tau_\C \in \ML\left( \Conc_{\lambda i}(G)_{\|(v, \{v\})}\right)} \bigcup_K C_{K\lambda}^\uparrow,$$
    where $Y \mapsto Y^\uparrow$ is the upward closure operation, i.e. $Y^\uparrow = \{z ~|~ \exists y \in Y, z \geq y\}$.
    Then, for every tuple of tuples $\bbx \in \R^{V \times \Pi}$, we have $\bbx \in X_\lambda$ if and only if for each $i$ and $v \in V_i$, for every memoryless strategy of Challenger, there exists a play $\pi$ compatible with $\tau_\C$ satisfying $\nu_\C(\pi) \neq +\infty$ and $\mu(\dpi) \leq \bx_v$.
    Therefore, for each $i$ and $v \in V_i$, we have $\nego(\lambda)(v) = \inf\left\{x_{vi} ~\left|~ \bbx \in X_\lambda\right.\right\}$.
    
    In terms of inequations, the set $X_\lambda$ is a union of polyhedra which are all defined by inequations that are inequations defining some $C_{K\lambda}$, padded with $0$ to fit with the dimension change.

    \item \emph{$\epsilon$-fixed points}
    
    To each tuple of tuples $\bbx$, we associate the requirement $\lambda_{\bbx}$ defined, for each $i \in \Pi$ and $v \in V_i$, by $\lambda_{\bbx}(v) = x_{vi} - \epsilon$.
    Now, let us consider the set:
    $$X = \left\{ \bbx \in \R^{V \times \Pi} ~\left|~ \bbx \in X_{\lambda_{\bbx}} \right.\right\}.$$
    
    In terms of inequations, the set $X$ is a union of polyhedra defined by the same inequations as $X_\lambda$, but where the inequations of the form $x_{ui} \geq \lambda(v)$ are replaced by equations of the form $x_{ui} \geq x_{vi}$, of size $2 + 4\card V \card\Pi$.
    
    Let $\lambda$ be a requirement.
    Then, it is an $\epsilon$-fixed point of the negotiation function if and only if $\lambda = \lambda_{\bbx}$ for some $\bbx \in X$.
    
    \begin{itemize}
        \item Indeed, if there exists $\bbx \in X$ such that $\lambda = \lambda_{\bbx}$, then we have $\bbx \in X_{\lambda_{\bbx}}$, and by the previous point, for each $i$ and $v \in V_i$, we have $\nego(\lambda_{\bbx})(v) \leq x_{vi} = \lambda_{\bbx}(v) + \epsilon$.
        Therefore, $\lambda$ is an $\epsilon$-fixed point of the negotiation function.
        
        \item Conversely, if for each $i$ and $v \in V_i$, we have $\nego(\lambda)(v) \leq \lambda(v) + \epsilon$, according to the previous point and since the set $X_{\lambda}$ is closed, there exists a tuple of tuples $\bbx^{(v)} \in X_{\lambda}$ such that $x^{(v)}_{vi} = \lambda(v) + \epsilon$.
        Then, since $X_{\lambda}$ is defined as a cartesian product over $v$, the tuple of tuples $\bbx = \left(\bx^{(v)}_v\right)_v$ does also belong to $X_{\lambda}$, and satisfies $\lambda = \lambda_{\bbx}$.
    \end{itemize}
    
    Then, in particular, the least $\epsilon$-fixed point $\lambda^*$ is the (unique) minimal element in the set $\left\{\lambda_{\bbx} ~\left|~ \bbx \in X\right.\right\}$.
    The set $X$ is itself a union of polyhedra: consequently, the linear mapping $\bbx \mapsto \sum_v \lambda_{\bbx}(v)$ has its minimum over $X$ on some vertex $\bbx$ of one of those polyhedra, which is therefore such that $\lambda^* = \lambda_{\bbx}$.
    By Corollary~\ref{cor_size_vertices}, that vertex has size bounded by a polynomial function of the maximal size of the inequations defining $X$, and therefore a polynomial function of $\lv G \rv + \lv \epsilon \rv$.
    \end{itemize}
\end{proof}

\end{document}